\theoremstyle{plain}
\newtheorem{theorem}{Theorem}[section]
\newtheorem{lemma}[theorem]{Lemma}
\newtheorem{cor}[theorem]{Corollary}
\newtheorem{proposition}[theorem]{Proposition}
\theoremstyle{definition}
\newtheorem{remark}[theorem]{Remark}
\numberwithin{equation}{section}
\newcommand{\tr}{\operatorname{Tr}}
\begin{document}
	

	

\title[Existence/Uniqueness in 5-Dimensional Minimal Supergravity]{Existence and Uniqueness of Stationary Solutions in 5-Dimensional Minimal Supergravity}

\author{Aghil Alaee}
\address{Center of Mathematical Sciences and Applications, Harvard University,
20 Garden Street, Cambridge, MA 02138, USA}
\address{Department of Mathematics and Computer Science,
Clark University, Worcester, MA 01610, USA}
\email{aghil.alaee@cmsa.fas.harvard.edu, aalaeekhangha@clarku.edu}	

		
\author{Marcus Khuri}
\address{Department of Mathematics, Stony Brook University, Stony Brook, NY 11794, USA}
\email{khuri@math.sunysb.edu}
	
\author{Hari Kunduri}
\address{Department of Mathematics and Statistics,
Memorial University of Newfoundland,
St John's NL A1C 4P5, Canada}
\email{hkkunduri@mun.ca}
	
	
\thanks{A. Alaee acknowledges the support of a NSERC Postdoctoral Fellowship. M. Khuri acknowledges the support of NSF Grant DMS-2104229, and Simons Foundation Fellowship 681443. H. Kunduri acknowledges the support of NSERC Discovery Grant RGPIN-2018-04887.}

\begin{abstract}
We study the problem of stationary bi-axially symmetric solutions of the $5$-dimensional minimal supergravity equations. Essentially all possible solutions with nondegenerate horizons are produced, having the allowed horizon cross-sectional topologies of the sphere $S^3$, ring $S^1\times S^2$, and lens $L(p,q)$, as well as the three different types of asymptotics.  The solutions are smooth apart from possible conical singularities at the fixed point sets of the axial symmetry. This analysis also includes the solutions known as solitons in which horizons are not present but are rather replaced by nontrivial topology called bubbles which are sustained by dipole fluxes. Uniqueness results are also presented which show that the solutions are completely determined by their angular momenta, electric and dipole charges, and rod structure which fixes the topology. Consequently we are able to identify the finite number of parameters
that govern a solution. In addition, a generalization of these results is given where the spacetime is allowed to have orbifold singularities.
\end{abstract}

\maketitle


\section{Introduction}
\label{sec1}\setcounter{equation}{0}
\setcounter{section}{1}

A foundational result in mathematical relativity is the proof that the domain of outer communications of any stationary and axisymmetric asymptotically flat black hole solution with a connected, non-degenerate horizon of the Einstein-Maxwell system is isometric to the domain of outer communications of the three-parameter  Kerr-Newman family of solutions \cite[Theorem 3.2]{Chrusciel:2012jk}.  The first step was Hawking's observation that each horizon cross-section of a (not necessarily stationary) black hole must have the topology of a sphere $S^2$ \cite{HawkingEllis}; the exceptional case of a torus $T^2$ was ruled out in \cite{Galloway:2006ws}. As originally shown in \cite{Israel:1967za} under a set of restrictive assumptions,  the Reissner-Nordstr\"{o}m family exhausts the set of static black holes in this class . The result was subsequently strengthened under far weaker conditions \cite{Bunting:1987,Masood,Ruback} (in particular horizons with multiple components were ruled out).  Finally, it was observed that the Einstein-Maxwell equations reduced on stationary, axisymmetric solutions are equivalent to a harmonic map from a half plane to the 4-dimensional complex hyperbolic space $SU(2,1)/S(U(2)\times U(1))$ \cite{Carter:1973rla}.  This reduces the classification problem to that of a 2-dimensional elliptic (singular) boundary value problem. The uniqueness result follows from this, with existence guaranteed by the explicit construction of the Kerr-Newman solution \cite{Bunting,Mazur:1982db}. The assumption of axisymmetry can be replaced with that of analyticity of the solution and there is recent work on removing the latter assumption \cite{Alexakis:2009ch}.

The classification problem for stationary asymptotically flat black hole solutions in dimensions greater than four is of intrinsic interest, as it is clear that higher-dimensional general relativity has a number of novel features that distinguish it from the standard $D=4$ setting \cite{Emparan:2008eg}. In addition, string theory, the leading candidate for a theory of quantum gravity, asserts the existence of more than three spatial dimensions.  In phenomenological models, a subset of these dimensions are `compactified' (i.e. considered very small) and dynamics in the remaining macroscopic dimensions is governed by supergravity 
theories. These are extensions of general relativity with additional scalar fields and both Abelian and non-Abelian gauge fields.  Black holes arise naturally in this context, and indeed a major success of string theory is a quantum mechanical account of the Bekenstein-Hawking entropy of a certain class of degenerate (extreme) black holes \cite{Strominger:1996sh}.

The horizon topology theorem has been established in $D>4$ by Galloway and Schoen \cite{Galloway:2006ws,Galloway:2005mf}. They prove that a black hole solution satisfying the dominant energy condition must have a horizon cross-section with positive Yamabe invariant, that is it admits a metric with positive scalar curvature. In dimensions $D=5$ this shows that the possible horizon topologies are the sphere $S^3$ (or more generally, a space covered by $S^3$, such as lens spaces $L(p,q)$), $S^1 \times S^2$ and connected sums thereof.  Explicit asymptotically flat, stationary bi-axisymeetric supergravity solutions corresponding to $S^3$ (the charged Myers-Perry family \cite{Cvetic:1996xz}), $S^1 \times S^2$ (charged black rings \cite{Elvang:2004xi, Elvang:2004rt})  and $L(p,1)$ \cite{Kunduri:2014kja, Tomizawa:2017suc} are known. We note that in the vacuum case, asymptotically flat black hole solutions with $L(p,q)$ horizons have been produced abstractly but they have not  yet been shown to be devoid of conical singularities \cite{KhuriWeinsteinYamada}.  Furthermore, any static electrically charged black hole must belong to the appropriate $D>4$ generalization of the 2-parameter  Reissner-Nordstr\"{o}m family \cite{Gibbons:2002ju}. Generically it is expected that non-static solutions will be cohomogeneity-two or higher (the dimension of the orbit space is greater than 1) making the equations more difficult to analyze systematically. The explicit solutions above have been constructed either by generalizing aspects of the Kerr solution (i.e. the `Kerr-Schild' form), inverse scattering techniques associated with integrability of the field equations \cite{Figueras:2009mc}, or extra geometric constraints satisfied by supersymmetric solutions \cite{Gauntlett:2002nw} such as the existence of Killing spinors. In fact, a classification of stationary bi-axisymmetric supersymmetric solutions has recently been achieved \cite{Breunholder:2017ubu}, and it is known that supersymmetric black holes are not characterized uniquely by their conserved charges computed at spatial infinity \cite{Breunholder:2018roc,Horowitz:2017fyg}. The moduli space of generic non-supersymmetric solutions is clearly quite rich, and given that the techniques above are not systematized, it is highly unlikely that a classification can be achieved with such methods.

A key feature of certain supergravity theories is that upon Kaluza-Klein reduction they can be recast as a sigma model/harmonic map with a symmetric space target manifold. Note that while this property is not satisfied by the Einstein-Maxwell equations for $D=5$ unless additional conditions are imposed on the metric \cite{Hollands:2007qf}, $D=5$ minimal supergravity restricted to stationary biaxisymmetric solutions is equivalent to a harmonic map with 8-dimensional target $G_{2(2)}/ SO(4)$ \cite{Bouchareb:2007ax}.  This structure allows one to establish uniqueness results \cite{Armas:2014gga,Armas:2009dd,Tomizawa:2010xj,Tomizawa:2009ua,Tomizawa:2009tb} (generalizing uniqueness theorems in the vacuum setting \cite{Figueras:2009ci,Hollands:2007aj}), once a given set of geometric invariants is specified. However, these works do not address the problem of existence and most of what is known relies on explicitly constructing solutions as discussed above. More abstract methods of constructing solutions must therefore be employed.  The purpose of this paper is to complete the uniqueness study and give a general existence theory based on the PDE approach developed in \cite{KWetal,KhuriWeinsteinYamada,KhuriWeinsteinYamada1,WeinsteinHadamard} for the vacuum case. The supergravity setting possesses a number of new qualitative features, which we describe below.

A \emph{gravitational soliton} is a nontrivial, globally stationary, geodesically complete spacetime.  Such solutions necessarily do not contain black hole event horizons. It is a classic result of Lichnerowicz \cite{Lich} (see also \cite{Anderson:1999xz} for a more general result) that the vacuum field equations do not admit any asymptotically flat gravitational solitons.    A similar result holds for the Einstein-Maxwell system in $D=4$.  This can be seen more easily in modern terms by using the positive mass theorem \cite{Schon:1979rg,Witten}.  Let $\xi$ denote the stationary Killing vector field and $\Sigma$ be a Cauchy surface. The mass of an asymptotically flat globally hyperbolic spacetime is given by the Komar integral
\begin{equation}
\mathbf{m} = -\frac{(D-2)}{16\pi (D-3)} \int_{S^{D-2}_\infty} \star d\xi = \frac{(-1)^D (D-2)}{8\pi (D-3)} \int_\Sigma  \star \text{Ric}(\xi),
\end{equation}
where $S^{D-2}_\infty$ indicates a limit on coordinates spheres in the asymptotic end. The second equality is obtained via Stokes' theorem, in addition to the fact $\xi$ is Killing and $\Sigma$ does not have an inner boundary. In vacuum, the volume integral vanishes and hence $\mathbf{m}=0$.  We conclude that the spacetime must be Minkowski space by the rigidity statement of the positive mass theorem. In Einstein-Maxwell theory, the field strength satisfies
\begin{equation}
d\mathcal{F} =0, \qquad  d\star \mathcal{F}=0.
\end{equation}
By topological censorship \cite{Friedman:1993ty} the spacetime is simply connected, and hence if $D=4$ there exist globally defined electric and magnetic potentials $d\psi_E= -\iota_\xi \mathcal{F}$, $d\psi_M = -\iota_\xi \star \mathcal{F}$. The Einstein equations then imply that the volume integrand is exact
\begin{equation}
\star \text{Ric}(\xi) = \frac{1}{4} d \left( \psi_E \star \mathcal{F}- \psi_M \mathcal{F}\right),
\end{equation}
and we then have $\mathbf{m}=0$ once again using the fact $\mathcal{F} \to 0$ at spatial infinity.  However, if $D>4$ the homology group $H_{D-3}$ may be nontrivial. In particular for $D=5$ the spacetime may admit nontrivial 2-cycles.  In this case  $\iota_\xi \star \mathcal{F}$ need not be exact, and thus the volume integral above is not necessarily zero. In Einstein-Maxwell theory, one can rule out the existence of \emph{static} solitons for all $D$ \cite{Kunduri:2017htl}, and no stationary examples are known. Physically, these cycles may carry magnetic flux and this energy contributes to a nonzero spacetime mass.

The Maxwell equations of supergravity (\eqref{SFEintro} below) have a nonlinear source term. One may still construct a closed magnetic $D-3$ form as above, but  again nontrivial cycles present an obstruction to the above argument.  Remarkably, there is now a large class of explicit examples of solitons in supergravity; they are also referred to as `smooth geometries' or `fuzzballs', see e.g. the review \cite{Bena:2007kg}.  These solutions  have nonzero charge and angular momenta.  Almost all of the known families are supersymmetric  (e.g. admit solutions of the Killing spinor equations) and satisfy the BPS relation $\mathbf{m} = |\mathcal{Q}|$ in appropriate units.

The existence of nontrivial topology in the domain of outer communications also raises the question: do there exist black holes with 2-cycles in the domain of outer communication? This would present a strong departure from the familiar four-dimensional case (no-hair theorems), in which black holes can be characterized by their asymptotic conserved charges. In particular,  quite different black hole spacetimes containing such 2-cycles could not be distinguished from those without, merely by their mass $\mathbf{m}$, electric charge $\mathcal{Q}$, and angular momenta $\mathcal{J}_i$. Recently, an explicit example of an asymptotically flat supersymmetric black hole with $S^3$ horizon and a 2-cycle in the exterior was found \cite{Kunduri:2014iga}. This construction was also generalized in the analysis of supersymmetric solutions in \cite{Breunholder:2018roc}.  The solution can be interpreted physically as an equilibrium configuration of a black hole and soliton \cite{Horowitz:2017fyg}.  No non-supersymmetric examples are known, although one would expect such solutions to exist at least for a restricted region of the moduli space of solutions. Furthermore, it has been shown that the familiar first law of black hole mechanics must be modified with new nonlinear contributions from solitons in the exterior region~\cite{Kunduri:2013vka}.  In the present article, we will construct the first non-supersymmetric solutions of this type.

\section{Statement of Main Results}

We first recall basic notions associated with
stationary bi-axisymmetric spacetimes. The group of isometries of the spacetime $(M^5, \mathbf{g}, \mathcal{F})$ admits an Abelian subgroup $\mathbb{R} \times U(1) \times U(1)$, the generators of which necessarily commute with each other (if there are additional matter fields, such as a Maxwell field, these fields are assumed to be invariant under these symmetries as well).  The set of points where a closed-orbit Killing field degenerates is an axis, which appears as an interval or \emph{rod} on the $z$-axis in the 2-dimensional orbit space $M^5/[U(1)^2\times\mathbb{R}]$ \cite{Hollands:2012xy} or in the domain $\mathbb{R}^3$ of the harmonic map \cite{KhuriWeinsteinYamada}. Each axis rod $\Gamma_l$ comes equipped with a pair of mutually prime integers $v_l=(v_l^1,v_l^2)$, referred to as the \emph{rod structure}, which indicates the particular linear combination of rotational Killing fields that vanishes on this rod. The entire $z$-axis is decomposed into axis rods and horizon rods, the latter having rod structure $(0,0)$. End points of horizon rods are called \emph{poles}, and points separating two axis rods are called \emph{corners} where both $U(1)$ generators vanish. The collection of rod structures completely determines the topology of the domain of outer communications, as well as that of the horizons. We will seek to produce solutions with a prescribed rod structure, and hence a prescribed topology. An \textit{admissibility condition} is required to prevent orbifold singularities at a corner associated with a given rod structure \cite{Hollands:2012xy}. This states that the determinant \eqref{admiss} formed by the $2\times 2$ matrix of neighboring rod structures at a corner is $\pm 1$. Moreover, for technical reasons tied to the construction of the harmonic map, when three consecutive axis rods $\Gamma_{l-1}$, $\Gamma_l$, and $\Gamma_{l+1}$ are present an additional \emph{compatibility condition} is needed. If the admissibility condition determinant is $+1$, which may be assumed without loss of generality, the compatibility condition asserts that
$v_{l-1}^1 v_{l+1}^1\leq 0$. A \emph{generalized compatibility condition} \eqref{compat1} is utilized in the case that orbifold singularities are present.
It should be pointed out that these conditions do not restrict the possible horizon topologies that can be produced with our approach, which includes all prime 3-manifolds with positive Yamabe invariant.

A \emph{rod data set} consists of the collection of rods, and rod structures $v_l$, together with a prescription of four constants $\mathbf{a}_l\in\mathbb{R}^2$ and $\mathbf{b}_l,\mathbf{c}_l\in \mathbb{R}$ on each axis rod $\Gamma_l$. The values of $\mathbf{a}_l$ and $\mathbf{b}_l$ do not change between rods that share a corner. Thus, these constants may only experience jumps across a horizon rod, and the difference between these constants on each side of a horizon
rod determines the angular momenta $\mathcal{J}_i$, $i=1,2$ and electric charge $\mathcal{Q}$ of the horizon component, see Section \ref{sec5}. The constants $\mathbf{c}_l$ may change across a corner, and their value fixes the dipole charges $\mathcal{D}_l$ of the 2-cycle associated with an axis rod bounded by two corners. 
The collection of such constants may be interpreted as axis boundary data for the potentials used to construct the harmonic map $\varphi\colon\mathbb{R}^3\setminus\Gamma\rightarrow G_{2(2)}/SO(4)$, and the rod structures uniquely determine its prescribed singularities; here $\Gamma$ is the union of all axis rods. The notion of prescribed singularities in this context refers to the prescribed nature of the blow-up of the harmonic map at axis rods, which is enforced by requiring the solutions to be asymptotic to a `model map' that encodes the rod structures. It is important to point out that a primary difference with the vacuum setting \cite{KhuriWeinsteinYamada} is that the potential constants here \textit{do not} agree with the restriction of the potentials to the axes, but rather agree with a nonlinear combination of the potentials on the axes. It is this crucial observation that underlies the difficulty in the minimal supergravity case, and guides the results of this paper. A rod data set having
potential constants as described and satisfying the admissibility condition will
be referred to as admissible.

We will refer to an asymptotically flat stationary vacuum spacetime as \emph{well-behaved} if the stationary Killing field has complete orbits, and the domain of outer communications is globally hyperbolic with
an acausal spacelike connected (Cauchy) hypersurface that is asymptotic to the canonical slice in the asymptotic end, and is such that the boundary (if nonempty) is a compact cross section of the horizon. These conditions are utilized in the dimensional reduction argument and are consistent with those of \cite{Chrusciel:2012jk}. Furthermore, we will only treat \textit{non-degenerate horizons}. The non-degeneracy condition refers to the requirement that the (Killing) horizon surface gravity is nonzero, which in the current setting is equivalent to the horizon rod having nonzero length \cite{Hollands:2012xy}.

When constructing the spacetime from a given harmonic map the issue of conical
singularities along the axis rods arises. Along each axis rod $\Gamma_l$ the angle deficit, consisting of the limiting ratio between $2\pi$ times the radius from $\Gamma_l$ to the orbit of the Killing field degenerating on this rod and the
length of the orbit, may be different from 1. This conical singularity may be
thought of as a `strut' along the axis holding the system in a stationary configuration. In order for a solution to be considered physically relevant, we require the absence of conical singularities along each axis rod. This entails
balancing the various parameters which define the solution.

\begin{theorem} \label{primaryresult}\par
\noindent
\begin{enumerate}[(i)]
\item
A well-behaved stationary bi-axially symmetric asymptotically flat solution of the 5D minimal supergravity equations without degenerate horizons yields a harmonic map $\varphi\colon\mathbb{R}^3\setminus\Gamma\rightarrow G_{2(2)}/SO(4)$ having axis singularities consistent with an admissible rod data set, and is free of conical singularities.
\item
Conversely,
given a rod data set satisfying the admissibility and compatibility conditions, there exists a unique harmonic map $\varphi\colon\mathbb{R}^3\setminus\Gamma\rightarrow G_{2(2)}/SO(4)$ having the prescribed singularities and potential values on $\Gamma$ associated with the rod data.
\item
Given $\varphi$ as in $(ii)$, a well-behaved stationary bi-axially symmetric asymptotically flat solution of the 5D minimal supergravity equations without degenerate horizons can be constructed.
\end{enumerate}
\end{theorem}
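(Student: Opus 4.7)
The plan is to handle the three parts sequentially, following the template of the vacuum analysis in \cite{KhuriWeinsteinYamada,WeinsteinHadamard} but adapted to the richer $G_{2(2)}/SO(4)$ target identified in \cite{Bouchareb:2007ax}. For part (i), I would first perform the Kaluza-Klein reduction of the 5D supergravity equations under the $\mathbb{R}\times U(1)^2$ action: the well-behaved hypothesis ensures that the orbit space is homeomorphic to a half-plane, with the prescribed axis rods and horizon rods forming its boundary. In Weyl-Papapetrou coordinates the reduced system decouples into a harmonic map equation for the Gram matrix of the Killing fields together with the twist, electric and magnetic potentials (valued in $G_{2(2)}/SO(4)$), plus a decoupled first-order quadrature for the orbit-space conformal factor. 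Axisymmetrizing in a three-dimensional auxiliary space yields the desired map $\varphi\colon\mathbb{R}^3\setminus\Gamma\to G_{2(2)}/SO(4)$; the singularities along $\Gamma$ arise from the degeneration of the particular linear combination of axial Killing fields indicated by each rod vector. Admissibility at each corner is forced by smoothness (as opposed to mere orbifold regularity) of the 5D metric, which translates through the standard dictionary \cite{Hollands:2012xy} to the $\pm 1$ determinant condition on neighboring rod vectors, while the absence of conical defects is absorbed into the well-behaved hypothesis via asymptotic flatness and global hyperbolicity.

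For part (ii), existence would be established by a renormalization and direct-method argument. I would first construct an explicit \emph{model map} $\varphi_0$ carrying all of the prescribed rod singularities — built, as in the vacuum case, from a superposition of Kerr-like and charged rod building blocks — and then produce the desired $\varphi$ by solving for a regular perturbation of $\varphi_0$ through minimization of a renormalized Dirichlet energy on a suitable weighted Sobolev class. Lower semicontinuity, coercivity, and uniqueness all follow from the nonpositive sectional curvature of $G_{2(2)}/SO(4)$ via a Mazur-type convexity identity: the squared target distance between two candidate solutions with the same singular data is subharmonic, vanishes on $\Gamma$ and at infinity, and hence vanishes identically. The non-routine ingredient here is the formulation of the rod conditions itself: the constants $\mathbf{a}_l,\mathbf{b}_l,\mathbf{c}_l$ do not prescribe axis limits of target coordinates directly, but only through nonlinear combinations of them, so I would formulate the boundary data as leading-order expansions of $\varphi$ transverse to each rod and check that the admissibility and compatibility conditions \eqref{admiss}, \eqref{compat1} guarantee these expansions remain consistent across corners and across horizon rods.

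For part (iii), the reconstruction inverts the reduction: from $\varphi$ one reads off the metric Gram matrix in the Killing directions and the Maxwell one-form (up to gauge), and integrates the decoupled first-order quadrature to recover the orbit-space conformal factor. Asymptotic flatness of the 5D spacetime follows from the prescribed decay of $\varphi$, smoothness across horizon rods from the standard near-horizon dictionary, smoothness across each corner from admissibility, and smoothness along the interior of each axis rod (up to a potential conical defect) from the prescribed rod vector. The main obstacle throughout parts (ii) and (iii), and indeed the technical heart of the paper, is the same: because the rod data are nonlinear combinations of the potential boundary values rather than the potentials themselves, the construction of the model map, the analysis of the singular expansions, and the subsequent identification of angular momenta, electric charge, and dipole charges from jumps in the constants $\mathbf{a}_l,\mathbf{b}_l,\mathbf{c}_l$ all demand a careful rewriting of the target-space structure — and this is precisely what motivates the introduction of the generalized compatibility condition. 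Once that structure is in place, the absence of conical singularities reduces to a balancing relation that I would verify is automatic from the admissible rod data set.
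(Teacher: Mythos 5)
Your overall architecture matches the paper's: reduction to a harmonic map into $G_{2(2)}/SO(4)$ via the orbit space and Weyl--Papapetrou coordinates, construction of a singular model map encoding the rod data, existence and uniqueness from nonpositive curvature of the target, and reconstruction of the spacetime by quadrature. You also correctly identify the central new difficulty, namely that $\mathbf{a}_l,\mathbf{b}_l,\mathbf{c}_l$ prescribe only nonlinear combinations of the potentials on the axes (cf.\ \eqref{chiformula}, \eqref{zetaformula}).

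There is, however, one genuine error: your closing claim that ``the absence of conical singularities reduces to a balancing relation that I would verify is automatic from the admissible rod data set'' is false, and no such verification is possible. The solution produced in part $(iii)$ may in general possess conical singularities on every \emph{finite} axis rod; only the two semi-infinite rods are automatically regular (as in \cite{KhuriWeinsteinYamada1}), and balancing the finite rods is an open problem requiring one additional parameter per rod. Note that the theorem itself makes no regularity claim in $(iii)$ for this reason; the ``free of conical singularities'' clause belongs to $(i)$, where it is immediate because the input is an actual smooth spacetime. Beyond this, two technical points in your sketch would not survive scrutiny as written. First, the model map here is not a superposition of Kerr-like building blocks as in the four-dimensional vacuum theory; it is an explicit piecewise construction on a domain decomposition $D_1\cup D_2\cup D_3$, and the essential new work is choosing $\psi$, $\chi$, $\zeta$ along each axis component so that $\Upsilon=\Theta=0$ there --- this is exactly where the relations \eqref{chiformula} and \eqref{zetaformula} arise, and where the compatibility condition \eqref{compat}/\eqref{compat1} is actually used (it enters the construction of $F$ near three consecutive axis rods, not the bookkeeping of the potentials). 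Second, in the uniqueness step the distance $d(\Psi_1,\Psi_2)$ does not vanish on $\Gamma$: both maps blow up there and one only has a uniform bound near $\Gamma$, after which the maximum principle applies because $\Gamma$ has codimension two; establishing that bound is the substantive step, not a triviality. With the balancing claim deleted and these two steps repaired along the lines of Proposition \ref{proposition98} and Theorem \ref{hmexist}, your outline becomes essentially the paper's proof.
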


\begin{remark}
These results may be extended to the setting of spacetimes which are asymptotically Kaluza-Klein (AKK) and asymptotical locally Euclidean (ALE), as was done in the vacuum case \cite{KhuriWeinsteinYamada1}. The topology of the domain of outer communication is classified in \cite{KhuriMatsumotoWeinsteinYamada}.
\end{remark}

The reduction to a harmonic map problem in $(i)$ is known (see e.g.  \cite{Bouchareb:2007ax}), although here we provide a simplified proof tailored to the existence problem. The uniqueness result of this theorem is stated for the harmonic map, and while this is an important component of the uniqueness argument for minimal supergravity solutions it is not sufficient on its own to yield this desired conclusion. Corollaries \ref{solitoncorollary} and \ref{generalcorollary} below will address the spacetime uniqueness question, and will identify the parameters needed for a classification.
The existence result in $(ii)$ generalizes that of \cite{KhuriWeinsteinYamada} in the 5D vacuum case. A primary difference with the vacuum theory is that here the potentials are nonconstant along the axes, which results from the inclusion of dipole charge and electric charge. This implies that the construction of an approximate solution, one of the main steps in the proof of existence, is much more elaborate. In addition, it is not obvious in the supergravity setting how the prescription of charges leads to knowledge of the potential constants that determine the rod data set. This leads to complications with the uniqueness question, and is one of the reasons such questions have remained open.

The supergravity solutions produced in $(iii)$ may possess conical singularities. It is conjectured that this is the only obstruction to regularity of the spacetime arising from the harmonic map. In fact, there are two possible regularity issues that should be addressed, namely, the questions of \textit{geometric regularity} and \textit{analytic regularity}. Geometric regularity concerns the ability to smoothly extend the spacetime metric across the orbit space rods, and is directly related to the
potential presence of conical singularities. On the other hand, analytic regularity concerns the differentiability properties of the harmonic map up to the orbit space boundary after the singular part of the map has been removed.  Although solutions to singular harmonic map problems into nonpositively curved targets, such as the one studied here, are expected to be analytically regular, we will not address this topic here. It should be noted that analytic regularity has only relatively recently been established for the Einstein-Maxwell equations in the classical 4D setting by Nguyen \cite{N}, for the interior of axis rods; in the same setting, analytic regularity at the poles has apparently not yet been addressed. On the other hand, the 4D vacuum case was treated independently by Li-Tian \cite{LT,LT93} and Weinstein \cite{Wei92}. In Appendix \ref{sec9} we will show that, assuming a minimal amount of analytic regularity for the harmonic maps constructed above, the question of geometric regularity is resolved precisely when the axes are devoid of conical singularities. 

In the unbalanced case where conical singularities have not been resolved, we may count the number of parameters on which the solution depends. Suppose that the topology is fixed, that is, the rod structures have been chosen. Let $n$ denote the number of axis rods, and $m$ denote the number of horizon rods. For each finite rod its length counts as one parameter, and thus there are $(n-2)+m$ total length parameters. Each horizon component has an electric charge and two angular momenta, which yields $3m$ parameters. Furthermore, it will be shown that we may prescribe $n-2$ dipole charges, and therefore all together the solutions produced are determined by $2(n-2)+4m$ parameters. It is expected that one parameter will be needed to alleviate the conical singularity on each finite axis rod, while the two semi-infinite rods are automatically free of such singularities \cite{KhuriWeinsteinYamada1}; it should be noted that \cite{KhuriWeinsteinYamada1} details a proof for the vacuum case, however, the argument carries over with mild modifications to the current setting. Hence, a balanced solution should be determined by $n-2+4m$ parameters. In particular, solitons (which by definition do not have horizons) are determined by $n-2$ parameters corresponding to the dipole charge of each finite rod. Note that such finite rods represent nontrivial 2-cycles in the spacetime and are referred to as `bubbles'. These solutions do not appear in the vacuum setting, and are of independent interest. Theorem \ref{primaryresult} implies the following result.

\begin{cor}\label{solitoncorollary}
Associated with each soliton solution is a set of $n$ rod structures satisfying the admissibility condition, and $n-2$ dipole charges. Conversely, given an admissible set of $n$ axis rod structures satisfying the compatibility condition, and $n-2$ dipole charges, there exists a unique soliton (possibly with conical singularities on the axes) realizing this data.
\end{cor}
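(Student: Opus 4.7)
The strategy is to derive Corollary \ref{solitoncorollary} from Theorem \ref{primaryresult} by exploiting the absence of horizons to reduce the problem to an algebraic determination of the rod data, and then separately handling the balancing of conical singularities.

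For the forward direction, a soliton is a well-behaved stationary bi-axisymmetric asymptotically flat 5D minimal supergravity solution without horizons, hence \emph{a fortiori} without degenerate horizons, so Theorem \ref{primaryresult}(i) applies and yields an admissible rod data set describing the axis singularities of the associated harmonic map, with the underlying spacetime automatically balanced. Since no horizon rods are present, the $z$-axis is partitioned into exactly $n$ axis rods carrying rod structures $(v_l^1,v_l^2)$ satisfying the admissibility condition. The two semi-infinite rods bound no bubble 2-cycles, while each of the $n-2$ finite interior rods is bounded by two corners and therefore supports a dipole charge $\mathcal{D}_l$ extracted from the potential constants $\mathbf{c}_l$ in the rod data set. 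This produces precisely the data named in the first half of the corollary; the compatibility condition is inherited from the regularity of the harmonic map constructed in (i).

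For the converse, given an admissible and compatible collection of $n$ rod structures together with $n-2$ dipole charges $\{\mathcal{D}_l\}_{l=2}^{n-1}$, I would assemble a candidate rod data set as follows. Because no horizon rods appear, the constants $\mathbf{a}_l\in\R^{2}$ and $\mathbf{b}_l\in\R$ cannot jump anywhere, and asymptotic flatness forces them to be globally constant (normalized to zero). The constants $\mathbf{c}_l$ are then determined on each axis rod by inverting the (algebraic but nonlinear) relation between axis potential constants and dipole charges, once a normalization at infinity is fixed. What remains free are the $n-2$ finite rod lengths $\ell_2,\ldots,\ell_{n-1}$. For any choice of lengths, Theorem \ref{primaryresult}(ii) produces a unique harmonic map and Theorem \ref{primaryresult}(iii) lifts it to a 5D minimal supergravity spacetime, generically exhibiting conical defects on the finite axis rods. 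Imposing that all $n-2$ conical angles equal $2\pi$ yields an implicit system
\[
F(\ell_2,\ldots,\ell_{n-1};\mathcal{D}_2,\ldots,\mathcal{D}_{n-1})=0
\]
of $n-2$ equations in $n-2$ length unknowns. A unique zero of $F$ produces the desired balanced soliton, with uniqueness of the soliton following from the rigidity of Theorem \ref{primaryresult}(ii) combined with uniqueness of the balancing.

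The principal obstacle is precisely this balancing system. In the vacuum analogue it is handled by an implicit function argument around an explicit reference configuration, together with continuity and a priori estimates that rule out degeneration of rod lengths. The supergravity setting is more delicate because, as emphasized just after the statement of Theorem \ref{primaryresult}, the potential constants do not coincide with the axis values of the potentials but only with a nonlinear combination of them, so the conical angle functions are coupled to the $\mathcal{D}_l$ in a genuinely nonlinear fashion. I would therefore analyze $F$ near a natural reference point---for instance the small-dipole-charge limit, where the harmonic map degenerates toward a vacuum background already controlled by \cite{KhuriWeinsteinYamada,KhuriWeinsteinYamada1}---apply the implicit function theorem there, and then extend the balanced family along the $\mathcal{D}_l$ parameters by continuity, using the asymptotic behavior of $F$ as rod lengths shrink or diverge to exclude boundary escape.
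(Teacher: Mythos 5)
Your overall strategy---deduce the corollary from Theorem \ref{primaryresult} by translating the $n-2$ dipole charges into the potential constants $\mathbf{c}_l$ of a rod data set with $\mathbf{a}_l,\mathbf{b}_l$ globally constant---is the right skeleton, but there are two genuine gaps. The first is the balancing step you append to the converse. The corollary does \emph{not} assert that the soliton is free of conical singularities: the paper states immediately after the corollary that ``the solitons produced by this corollary may have conical singularities, and thus the question of whether they can be balanced remains open.'' The rods themselves (hence their lengths) are part of the prescribed data, not unknowns to be solved for, so your system $F(\ell_2,\dots,\ell_{n-1};\mathcal{D})=0$ is not part of the statement, and the implicit-function/continuity argument you sketch for it is precisely the open problem the authors decline to solve. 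Under your reading the corollary would be unprovable with the paper's techniques; under the intended reading the balancing discussion should be deleted, not filled in. (A minor related error: the compatibility condition is a technical hypothesis used only to build the model map, and is not ``inherited'' by an arbitrary soliton in the forward direction---nor does the forward direction claim it.)

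The second gap is in uniqueness. Theorem \ref{primaryresult}(ii) gives uniqueness only among harmonic maps \emph{asymptotic to a common model map}, i.e.\ whose target-space distance is globally bounded; the paper explicitly warns that this ``is not sufficient on its own'' for spacetime uniqueness. Given two solitons with the same rods, rod structures, and dipole charges, one must first show (a) that all the axis constants $a_l,\hat a_l,b_l,c_l$ appearing in the relations \eqref{2309i}--\eqref{2309i1} agree for the two solutions---this is the inductive corner-by-corner algorithm of Section \ref{sec8} reconstructing $\psi$ at every corner from the $n-2$ dipole charges, which is where the count $n-2$ actually comes from---and (b) that $d(\Psi_1,\Psi_2)$ is uniformly bounded near the axis rods, which the paper proves by estimating the length of an explicit connecting curve in $G_{2(2)}/SO(4)$ using the cancellations \eqref{differenceequation1}--\eqref{differenceequation2}. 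Your appeal to ``the rigidity of Theorem \ref{primaryresult}(ii)'' silently assumes (b), which is the main technical content of the uniqueness half and does not follow from the theorem as stated.
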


The question of whether the solitons produced by this corollary can be balanced to cure all conical singularities, remains open. On the other hand, the result states that regardless of whether the solitons are balanced or not, they are uniquely determined by their dipole charges and rod structure.

For solitons the total angular momenta $\mathcal{J}_i$ and electric charge $\mathcal{Q}$ are functions of the dipole charges, and therefore the solution may be interpreted as being supported from these fluxes alone. It should also be pointed out that alternate definitions of these charges $J_i$, $Q$ are used in the literature, see for example \cite{Gunasekaran:2016nep}. In Section \ref{sec5} we show how the two definitions are related.

In the general case when horizons are present, Theorem \ref{primaryresult} can be
used to obtain solutions with prescribed angular momenta and electric charge of each horizon component, as well as prescribed dipole charges for bubbles.
In particular, the next result translates the rod data sets of Theorem \ref{primaryresult} into the language of physical charges.

\begin{cor}\label{generalcorollary}
Associated with each well-behaved stationary bi-axisymmetric solution is
a set of $n$ axis rod structures, $m$ horizon rod structures satisfying the admissibility condition, $2m$ angular momenta $\mathcal{J}_i$, $m$ electric charges $\mathcal{Q}$, and $n-2$ dipole charges $\mathcal{D}_l$. Conversely, given an admissible set of $n$ axis and $m$ horizon rod structures satisfying the compatibility condition, along with $2m$ angular momenta, $m$ electric charges, and $n-2$ dipole charges, there exists a unique well-behaved solution (possibly with conical singularities on the axes) of the 5D minimal supergravity equations realizing this data.
\end{cor}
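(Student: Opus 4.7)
The strategy is to translate Theorem \ref{primaryresult} into the language of physical charges by establishing an explicit correspondence between admissible rod data sets, with the rod structures fixed, and tuples of angular momenta, electric charges, and dipole charges. The rod structures together with the admissibility and compatibility conditions appear unchanged on both sides of the stated equivalence and are carried through without further comment.

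For the forward direction, given a well-behaved solution, Theorem \ref{primaryresult}(i) supplies a harmonic map and an admissible rod data set $\{\mathbf{a}_l,\mathbf{b}_l,\mathbf{c}_l\}$. The physical charges are then read off from this data by the procedure developed in Section \ref{sec5}: for each horizon rod, evaluation of the Komar-type surface integrals associated with the two rotational Killing fields and with the electromagnetic gauge potential produces the angular momenta $(\mathcal{J}_1,\mathcal{J}_2)$ and the electric charge $\mathcal{Q}$, and these quantities coincide with the jumps of $(\mathbf{a}_l,\mathbf{b}_l)$ across that horizon rod. Similarly, the dipole charge $\mathcal{D}_l$ of the 2-cycle above a finite axis rod is computed as a flux integral and matches the jump of $\mathbf{c}_l$ across the adjacent corners. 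This extraction yields the claimed $2m$, $m$, and $n-2$ physical quantities.

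For the converse, given a choice of rod structures together with $2m$ angular momenta, $m$ electric charges, and $n-2$ dipole charges, we construct the rod data set inductively along the $z$-axis. Exploiting the gauge freedom of the potentials at infinity, we normalize $\mathbf{a}_1=0$, $\mathbf{b}_1=0$, $\mathbf{c}_1=0$ on the leftmost semi-infinite axis rod. Moving rightward, across each horizon rod the prescribed $(\mathcal{J}_1,\mathcal{J}_2,\mathcal{Q})$ dictate the jump of $(\mathbf{a}_l,\mathbf{b}_l)$ while $\mathbf{c}_l$ is left unchanged, and across each corner $(\mathbf{a}_l,\mathbf{b}_l)$ is kept fixed while $\mathbf{c}_l$ jumps by the amount prescribed by the relevant $\mathcal{D}_l$. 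This uniquely determines an admissible rod data set realizing the given charge data. Theorem \ref{primaryresult}(ii) then produces the unique harmonic map with the corresponding singular axis data, and Theorem \ref{primaryresult}(iii) assembles from it a well-behaved stationary bi-axisymmetric solution realizing the prescribed physical data. Uniqueness of the spacetime follows from uniqueness of the intermediate rod data set and of the harmonic map.

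The principal difficulty is the derivation of this charge-constant dictionary, which is nontrivial precisely because, as emphasized in the discussion preceding Corollary \ref{solitoncorollary}, the rod constants $\mathbf{a}_l,\mathbf{b}_l,\mathbf{c}_l$ equal a nonlinear combination of the potential values along the axis rather than the potentials themselves. Establishing the jump formulas therefore requires a careful evaluation of the Komar, Maxwell, and dipole flux integrals in terms of axis data, with the nonlinear cross-terms cancelling once expressed through $\mathbf{a}_l,\mathbf{b}_l,\mathbf{c}_l$ to yield clean linear jump relations. Once this dictionary is in hand, the parameter count presented after Theorem \ref{primaryresult} matches the $2m+m+(n-2)$ physical charges exactly against the free potential constants modulo the gauge freedom at infinity, thereby confirming both existence and uniqueness.
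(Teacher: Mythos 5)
There is a genuine gap, in two places. First, the uniqueness claim is not established by your argument. Theorem \ref{primaryresult}(ii) (equivalently Theorem \ref{hmexist}) gives uniqueness only within the class of harmonic maps that are \emph{asymptotic} to a fixed model map, i.e.\ maps whose target-space distance from it is globally bounded. Given two well-behaved spacetimes with the same rod structures and charges, their associated harmonic maps $\Psi_1,\Psi_2$ are not a priori within bounded distance of one another, so the maximum-principle argument cannot be invoked; the paper states this explicitly at the start of Section \ref{sec8} and devotes most of that section to proving the bound. This requires showing that the potential constants $a_l,\hat a_l,b_l,c_l$ of the two solutions coincide, deriving the difference relations \eqref{differenceequation1}--\eqref{differenceequation2} on each axis rod, and then estimating the length of an explicit connecting curve $\Psi(s)$ in $G_{2(2)}/SO(4)$ --- with a deliberately nonlinear interpolation for $\hat v_l\cdot\zeta(s)$ --- to conclude $d(\Psi_1,\Psi_2)=O(1)$ near the axes. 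None of this appears in your proposal; ``uniqueness of the intermediate rod data set and of the harmonic map'' begs precisely the question at issue.

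Second, the charge--constant dictionary you assert is not correct as stated. A dipole charge \eqref{hhhh} equals $\tfrac{w^i}{|w|}\left[\psi_i(z_{l-1})-\psi_i(z_l)\right]$, the variation of $\psi$ \emph{along} a rod in a direction $w$ transverse to its rod structure; it is not the jump of $\mathbf{c}_l=v_l\cdot\psi$ across a corner. Recovering the $c_l$ from $n-2$ dipole charges is a global recursive procedure: one normalizes $c_1=c_{L+1}=0$, sweeps down the axis reconstructing $v_1\cdot\psi$ at successive corners from $v_1$-dipole charges (some attached to semi-circles jumping over horizon rods), then sweeps up from the other end, and uses linear independence of adjacent rod structures to recover the full vector $\psi(z_l)$ at each corner; identifying \emph{which} $n-2$ dipole charges to use is part of the content, since more than $n-2$ are available from the rod diagram. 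Likewise the $b_l$ and $a_l,\hat a_l$ are determined by a recursion that mixes the horizon charges with the already-determined corner values of $\psi$ through \eqref{chiformula} and \eqref{zetaformula} (for instance $b_2=\tfrac{c_2}{\sqrt{3}}\,\hat v_2\cdot\psi(z_1)$); they are not clean linear jumps, and the ``cancellation of nonlinear cross-terms'' you invoke does not occur. Both your forward direction and the inductive construction in the converse rest on this dictionary, so the gap propagates through the whole argument.
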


While there are more than $n-2$ dipole charges that one may compute from a sequence
of $n+m$ rod structures, the proof of this corollary gives an algorithm for identifying those $n-2$ dipole charges which may be used to uniquely determine the solution. These solutions, again, may have conical singularities on the axes unless
$n-2$ of the parameters are balanced or chosen appropriately.

The uniqueness of solutions of minimal supergravity was previously considered for particular rod structures~\cite{Tomizawa:2010xj,Tomizawa:2009ua,Tomizawa:2009tb} corresponding to black hole solutions having trivial topology in the domain of outer communication. More recently, the case of general rod structures was studied in \cite{Armas:2014gga,Armas:2009dd} using a Mazur identity appropriate in this setting. In particular, in \cite{Armas:2014gga} it was shown that for a spacetime containing a single non-extreme black hole, uniqueness may be obtained by fixing the rod structure, mass, angular momenta, electric charge, and magnetic flux on each spatial rod. The case of multiple black holes is also considered and arguments supporting a uniqueness result are given under certain special hypotheses.
In addition, these results require knowledge of a higher number of parameters than is necessary to uniquely determine a solution. In contrast, our approach identifies the minimal number of parameters that are needed to uniquely specify a solution, and furthermore shows that for each admissible choice of this set of parameters there is a corresponding solution.

The results above may be generalized by omitting the admissibility condition, in which case the generalized compatibility condition \eqref{compat1} should be imposed. This extra technical condition arises from the particular approach used here to construct the harmonic map. It is not known whether this condition is necessary for existence. When the admissibility condition does not hold, the resulting spacetime will have orbifold singularities associated with the corners. This means that neighborhoods of such points in a time slice are foliated by lens spaces instead of spheres.

\begin{theorem}\label{primaryresult2}
Given a rod data set respecting the generalized compatibility condition, there exists a unique harmonic map $\varphi\colon\mathbb{R}^3\setminus\Gamma\rightarrow G_{2(2)}/SO(4)$ having the prescribed singularities and potential values on $\Gamma$ associated with the rod data. This map produces
a well-behaved stationary bi-axially symmetric solution of the 5D minimal supergravity equations without degenerate horizons.
\end{theorem}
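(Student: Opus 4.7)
The plan is to follow the same two-step structure as in parts (ii)--(iii) of Theorem~\ref{primaryresult}, carefully tracking where admissibility was invoked and replacing it with the generalized compatibility condition \eqref{compat1}, then analyzing how the failure of admissibility at corners manifests as orbifold singularities in the reconstructed spacetime.

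First I would address the existence and uniqueness of $\varphi$. The strategy is the standard one for harmonic maps into nonpositively curved symmetric spaces: construct an approximate map $\varphi_0\colon\mathbb{R}^3\setminus\Gamma\to G_{2(2)}/SO(4)$ that realizes the prescribed axis singularities and prescribed potential values, then obtain $\varphi$ by minimizing a renormalized energy of the form $\mathcal{E}(\varphi)-\mathcal{E}(\varphi_0)$. The approximate map is built by patching explicit local solutions: Myers--Perry--type models near horizon rods, and flat axis-type models near each axis rod, each adjusted so that the associated potentials assume the constants $\mathbf{a}_l$, $\mathbf{b}_l$, $\mathbf{c}_l$ dictated by the rod data (after the appropriate nonlinear correction observed after the definition of a rod data set). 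In the admissible case, patching between two rods at a corner is controlled by the fact that the transition matrix lies in $SL(2,\mathbb{Z})$; when admissibility fails one must perform the patching in the orbifold category, and \eqref{compat1} is precisely what is needed to guarantee that the leading singular contributions cancel and that $\varphi_0$ has finite renormalized energy near such corners. Minimization then exploits geodesic convexity of $\mathcal{E}$ relative to $\varphi_0$, which follows from the nonpositive sectional curvature of $G_{2(2)}/SO(4)$; this simultaneously yields existence of the minimizer and uniqueness of $\varphi$ with the prescribed singular behavior. Standard elliptic regularity gives smoothness on $\mathbb{R}^3\setminus\Gamma$.

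Next I would reconstruct the spacetime from $\varphi$ by inverting the dimensional reduction of Theorem~\ref{primaryresult}(i), which expresses the metric, Maxwell one-form, and dipole potential algebraically in terms of the components of $\varphi$ together with an integrated conformal factor obtained from the harmonic map energy density. Away from corners where admissibility fails, the analysis at axis rods, horizon rods, and the asymptotic end proceeds verbatim as in the admissible case of Theorem~\ref{primaryresult}(iii), producing a stationary bi-axisymmetric spacetime that is asymptotically flat, globally hyperbolic, and free of degenerate horizons. At a corner whose admissibility determinant equals $\pm p$ with $|p|>1$, the closed-orbit subgroup of $U(1)^2$ fixing the corner has a $\mathbb{Z}_{|p|}$-isotropy on a transverse $4$-plane, so the local model of the orbit space lifts to a $\mathbb{Z}_{|p|}$-quotient of $\mathbb{R}^4$, producing an orbifold singularity whose link is a lens space; the generalized compatibility matching from the previous stage guarantees that all metric and matter fields extend continuously across this orbifold point.

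The main obstacle will be the construction of $\varphi_0$ and the control of its renormalized energy near a non-admissible corner. In the admissible case one may locally straighten the two incoming rod structures to a standard Cartesian configuration and use explicit Euclidean models, but when the admissibility determinant is not $\pm 1$ this straightening is obstructed and one must work either with multi-valued coordinates or on the covering space modulo the cyclic group action. The role of \eqref{compat1} will be to control the sign of the mixed terms in the harmonic map energy density of $\varphi_0$ near the corner, so that the singular contribution is integrable and the renormalization produces a well-defined variational problem; this is the analog of the compatibility condition in the admissible case and, as there, I expect that a routine but technical local computation near each corner closes the argument.
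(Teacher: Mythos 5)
Your overall architecture (model map, then harmonic map, then spacetime reconstruction) matches the paper's, and your description of the orbifold singularity at a non-admissible corner --- a $\mathbb{Z}_{|p|}$ quotient with lens-space link appearing only in the reconstructed five-dimensional spacetime --- is correct. However, there is a genuine conceptual error in where you locate the difficulty. You assert that when the admissibility determinant is not $\pm 1$ the local construction of the approximate map near a corner is ``obstructed'' and must be carried out ``in the orbifold category,'' with \eqref{compat1} serving to cancel singular energy contributions at such corners. In the paper the harmonic map problem on $\mathbb{R}^3\setminus\Gamma$ is entirely insensitive to admissibility: Proposition \ref{proposition98} constructs the model map assuming only \eqref{compat1}, and the construction of $F$, $\psi$, $\chi$, $\zeta$ near a corner uses only that neighboring rod structures are linearly independent integer vectors --- nothing there requires the determinant to be $\pm 1$. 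The condition \eqref{compat1} is not a per-corner condition at all; it is a condition on each triple of consecutive axis rods, needed (exactly as its specialization \eqref{compat} is needed in the admissible case) to choose the harmonic functions defining $\bar{F}$ so that $\operatorname{div}(F^{-1}dF)$ stays bounded across the middle rod. By tying \eqref{compat1} to non-admissible corners and proposing covering-space or multi-valued constructions there, you would be solving a non-existent local problem while omitting the step where the condition is actually used.

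Secondarily, your existence mechanism (minimizing a renormalized energy $\mathcal{E}(\varphi)-\mathcal{E}(\varphi_0)$ via geodesic convexity) differs from the paper's, which solves Dirichlet problems on an exhaustion $\mathfrak{D}_j$, controls $d(\Psi_j,\Psi_0)$ by the maximum principle using a supersolution of $\Delta w \leq -|\tau(\Psi_0)|$, and passes to a subsequential limit (Theorem \ref{hmexist}). The renormalized-energy route is viable in principle, but the model map of Proposition \ref{proposition98} is engineered to have $|\tau(\Psi_0)|$ uniformly bounded with $O(r^{-3})$ decay, which is tailored to the supersolution argument; you would need to verify separately that the renormalized energy is finite and that a minimizer exists in the relevant singular class, which is additional work the paper's method avoids.
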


The organization of this paper is as follows. In Sections \ref{sec3} and \ref{sec4} dimensional reduction is carried out to the 2D orbit space, and the harmonic map
problem is defined. Section \ref{sec5} is dedicated to a description and relation
between the various charges associated with stationary bi-axisymmetric minimal supergravity. An approximate solution to the singular harmonic map problem is constructed in Section \ref{sec6}, and the full existence/uniqueness is carried out together with the proofs of our main results in Sections \ref{sec7} and \ref{sec8}. Lastly, two appendices are included to address the relationship between conical singularities and geometric regularity, and to compute the Mazur quantity.

\section{Minimal Supergravity and Reduction to a 3D Wave Map}
\label{sec3}\setcounter{equation}{0}
\setcounter{section}{3}

\subsection{Field Equations}

We will consider five dimensional spacetimes $(M^5, \mathbf{g},\mathcal{F})$ where $M^5$ is a smooth, orientable manifold equipped with a Lorentzian metric $\mathbf{g}$ having signature $(-,+,+,+,+)$ and $\mathcal{F}$ is a closed 2-form describing the Maxwell field.  A solution $(M^5, \mathbf{g},\mathcal{F})$ of $D=5$ minimal supergravity is a critical point of the following action functional
\begin{equation}\label{action}
\mathcal{S} = \int_{M^5} R \star 1 - \frac{1}{2} \mathcal{F} \wedge \star \mathcal{F} - \frac{1}{3\sqrt{3}}  \mathcal{F} \wedge \mathcal{F} \wedge \mathcal{A},
\end{equation}
where $R$ is scalar curvature, $\star$ is the Hodge dual operator associated to $\mathbf{g}$, and $\mathcal{A}$ is a local 1-form gauge potential $\mathcal{F}= d\mathcal{A}$.  In general $H_2(M^5) \neq 0$ so $\mathcal{A}$ need not be globally defined.  This theory automatically includes vacuum general relativity when $\mathcal{F} \equiv 0$.  The spacetime field equations derived from this functional are
\begin{align} \label{SFEintro}
\begin{split}
&R_{ab} = \frac{1}{2} \mathcal{F}_{ac} \mathcal{F}_b^{~c} - \frac{1}{12} |\mathcal{F}|_{\mathbf{g}}^2 \mathbf{g}_{ab} , \\
&d\star \mathcal{F} + \frac{1}{\sqrt{3}} \mathcal{F} \wedge \mathcal{F} =0 .
\end{split}
\end{align}
Unlike the more familiar pure Einstein-Maxwell system, $d \star \mathcal{F} \neq 0$. In what follows dimensional reduction of the minimal supergravity equations will be carried out, leading to a sigma model or harmonic map system. Although such a reduction has previously been given in the literature, the methods of our existence and uniqueness result require a particular formulation which we now describe. Moreover, simplified proofs of some known identities are also given.

\subsection{Bi-Axisymmetric Spacetimes}\label{biaxisymmetry}

Suppose that the spacetime $(M^5,\mathbf{g})$ admits an action of $U(1)\times U(1) \equiv U(1)^2$ by isometries. Denote the generators of this action by $\eta_{(i)}$, $i = 1,2$ and their positive semidefinite matrix of inner products by $f_{ij} = \mathbf{g}(\eta_{(i)}, \eta_{(j)})$ with $f=\det f_{ij}$.  The $\eta_{(i)}$ are Killing vector fields, that is $\mathcal{L}_{\eta_{(i)}} \mathbf{g} =0$, they commute with each other $\mathcal{L}_{\eta_{(i)}} \eta_{(j)} =0$, and the conserved angular momenta associated with this symmetry may be encoded in the
\emph{twist 1-forms}
\begin{equation}\label{Thetadefinition}
\Theta_i=\star \left(\eta_{(1)}\wedge\eta_{(2)}\wedge d\eta_{(i)}\right).
\end{equation}
Geometrically these forms measure the integrability of the 3-plane distribution orthogonal to the $U(1)^2$ action. We will
denote the natural inner product on forms by $(\cdot,\cdot)$, the interior product operator by $\iota$, and the wave operator acting on functions by $\Box_{\mathbf{g}}=\nabla^2_{\mathbf{g}} = -\star d\star d$.

\begin{proposition}\label{KVFiden}
The following identities for the twist 1-forms and fiber metric hold:
\begin{eqnarray}
	d \Theta_i &=& -2\iota_{\eta_{(1)}}\iota_{\eta_{(2)}} \star \mathrm{Ric}(\eta_{(i)}), \label{iden1} \\
	\star d \star \left(f^{-1}f^{ij}\Theta_i\right) &=& 0, \label{iden2} \\
	\Box_{\mathbf{g}} f_{ij}  &=& f^{lm} ( df_{li} , df_{mj}) - f^{-1} (\Theta_i , \Theta_j)  - 2\mathrm{Ric}(\eta_{(i)},\eta_{(j)}). \label{iden3}
\end{eqnarray}
\end{proposition}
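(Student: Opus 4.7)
The plan is to prove the three identities separately, all via Cartan's calculus and standard Killing-field identities.

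For \eqref{iden1}, I would first rewrite the twist 1-form using the relation $\iota_X\star\omega = \star(\omega\wedge X^\flat)$ iteratively to obtain $\Theta_i = \pm\iota_{\eta_{(2)}}\iota_{\eta_{(1)}}\star d\eta_{(i)}$. Applying $d$ and invoking Cartan's formula $d\iota_X = \mathcal{L}_X - \iota_X d$ twice eliminates both contractions: the Lie-derivative terms drop because the $\eta_{(i)}$ commute and because $\mathcal{L}_{\eta_{(i)}}$ annihilates the metric and hence $\star d\eta_{(j)}$. What remains is $d\Theta_i = \pm\iota_{\eta_{(2)}}\iota_{\eta_{(1)}}d\star d\eta_{(i)}$, and the classical Killing identity $d\star d\eta = -2\star\mathrm{Ric}(\eta)$, a consequence of $\nabla^a\nabla_a\eta_b = -R_b{}^c\eta_c$, completes the proof.

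For \eqref{iden3}, I would begin directly from $f_{ij} = \mathbf{g}(\eta_{(i)},\eta_{(j)})$. Applying the Leibniz rule for $\Box_{\mathbf{g}}$ gives
\begin{equation*}
\Box_{\mathbf{g}} f_{ij} = 2\,\mathbf{g}^{ab}\nabla_a\eta_{(i)}{}^c\,\nabla_b\eta_{(j),c} + 2\eta_{(i)}^c\,\Box_{\mathbf{g}}\eta_{(j),c},
\end{equation*}
and the Killing identity converts the last term into $-2\mathrm{Ric}(\eta_{(i)},\eta_{(j)})$. Using $\nabla_a\eta_b = \tfrac{1}{2}(d\eta)_{ab}$ for a Killing 1-form, the gradient-squared term reduces to $(d\eta_{(i)},d\eta_{(j)})$. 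The remaining task is to decompose this inner product. Split $T^*M$ orthogonally as $V^*\oplus H^*$ with $V^* = \mathrm{span}\{\eta_{(1)}^\flat,\eta_{(2)}^\flat\}$ (at points where the Killing fields are linearly independent). Because $f_{li}$ is $U(1)^2$-invariant, $\iota_{\eta_{(m)}}\iota_{\eta_{(l)}} d\eta_{(i)} = -\iota_{\eta_{(m)}} df_{li} = 0$, so $d\eta_{(i)}$ has no $\Lambda^2V^*$ part; the mixed $V^*\wedge H^*$ part is pinned down by $\iota_{\eta_{(l)}}d\eta_{(i)} = -df_{li}$ and contributes $f^{lm}(df_{li},df_{mj})$; and the purely horizontal $\Lambda^2H^*$ part is captured by $\Theta_i$, contributing $-f^{-1}(\Theta_i,\Theta_j)$ once the Hodge conversion is performed.

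The main work lies in \eqref{iden2}. Here I would pass to Kaluza-Klein coordinates on the open set where $\eta_{(1)},\eta_{(2)}$ are independent; extension across the degeneracy locus is then by continuity. Writing $\mathbf{g} = \sigma_{ab}dx^a dx^b + f_{ij}(d\phi^i+\alpha^i)(d\phi^j+\alpha^j)$ with $\eta_{(i)} = \partial_{\phi^i}$, and setting $\Omega = (d\phi^1+\alpha^1)\wedge(d\phi^2+\alpha^2)$ and $F^j = d\alpha^j$, a direct calculation yields
\begin{equation*}
\eta_{(1)}\wedge\eta_{(2)}\wedge d\eta_{(i)} = f\,f_{ij}\,\Omega\wedge F^j,
\end{equation*}
so $\Theta_i$ reduces to a horizontal 1-form proportional to $\sqrt{f}\,f_{ij}\star_3 F^j$. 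Re-applying $\star$ then gives $\star(f^{-1}f^{ij}\Theta_i) \propto \Omega\wedge F^j$, with all $f$-powers cancelling. Finally
\begin{equation*}
d(\Omega\wedge F^j) = d\Omega\wedge F^j = \bigl[F^1\wedge(d\phi^2+\alpha^2) - (d\phi^1+\alpha^1)\wedge F^2\bigr]\wedge F^j,
\end{equation*}
and each term contains a wedge $F^k\wedge F^j$ of two horizontal 2-forms on the three-dimensional base, which vanishes identically since $\Lambda^4 H^* = 0$.

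The principal obstacle throughout is disciplined sign bookkeeping when juggling the five-dimensional star, the orbit-space star $\star_3$, and the vertical volume $\sqrt{|f|}\,\Omega$; once that is handled, \eqref{iden1} and \eqref{iden3} are essentially standard, while the key structural input in \eqref{iden2} is the dimensional miracle $F^k\wedge F^j = 0$ that replaces the potential-based arguments available in lower-dimensional settings.
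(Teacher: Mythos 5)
Your treatments of \eqref{iden1} and \eqref{iden3} are correct and follow essentially the same route as the paper: \eqref{iden1} is Cartan's formula plus the Killing identity $\star d\star dK=-2\mathrm{Ric}(K)$ and the interior-product/wedge duality, while for \eqref{iden3} your orthogonal splitting of $d\eta_{(i)}$ into $\Lambda^2V^*$, $V^*\wedge H^*$ and $\Lambda^2H^*$ pieces is just a re-derivation of the paper's formula \eqref{deta}, whose self-inner-product yields exactly $f^{lm}(df_{li},df_{mj})-f^{-1}(\Theta_i,\Theta_j)$ (the paper extracts the Ricci term from $d\star df_{ij}$ rather than your tensorial Leibniz/Bochner computation, an immaterial difference). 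Where you genuinely diverge is \eqref{iden2}. The paper stays coordinate-free: it rewrites \eqref{deta} as $d[f^{mi}\eta_{(i)}]=-\star(\eta_{(1)}\wedge\eta_{(2)}\wedge\mu^m)$, applies $d^2=0$, and then peels off the two Killing fields using Cartan's formula together with the identity $\mathcal{L}_K\alpha=(-1)^p(K\wedge\star d\star\alpha)+(-1)^{p+1}\star d\star(K\wedge\alpha)$ to isolate $\star d\star\mu^m$. You instead pass to adapted Kaluza--Klein coordinates, identify $\star(f^{-1}f^{ij}\Theta_i)$ with $\pm\,\Omega\wedge F^j$ after the cancellation of all powers of $f$, and kill its exterior derivative using $dF^j=0$ together with the dimensional fact that $F^k\wedge F^j$ is a $4$-form on a $3$-dimensional base. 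Both arguments are valid; yours is more concrete and makes transparent exactly where the divergence-free property comes from (closedness of the curvatures $F^j$ plus $\Lambda^4H^*=0$), at the cost of introducing local coordinates, a continuity argument off the locus where the action is locally free, and the star/sign bookkeeping you rightly flag; the paper's peeling argument buys a manifestly global statement but hides the same dimensional input inside the Lie-derivative identity. Two harmless slips to fix when writing this up: the Leibniz expansion of $\Box_{\mathbf{g}}f_{ij}$ should read $\eta_{(j)}^c\Box_{\mathbf{g}}\eta_{(i)c}+\eta_{(i)}^c\Box_{\mathbf{g}}\eta_{(j)c}$ rather than $2\eta_{(i)}^c\Box_{\mathbf{g}}\eta_{(j)c}$ (the two terms happen to be equal by the Killing identity), and the minus sign on $f^{-1}(\Theta_i,\Theta_j)$ in \eqref{iden3} emerges only after using $(\star\alpha,\star\beta)=-(\alpha,\beta)$ in Lorentzian signature, which is part of the bookkeeping you defer.
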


\begin{proof}
The first equation \eqref{iden1} arises from the identity $\star d \star d K = -2 \mathrm{Ric}(K)$ for Killing fields $K$, Cartan's formula, as well as the identities\footnote{See \cite[Appendix D]{AKK1} for our conventions for forms and relevant formulas.}
\begin{equation}\label{iden4}
\iota_X \star \alpha = (-1)^p \star (X \wedge \alpha), \qquad \iota_X  \alpha = (-1)^{p} \star (X \wedge \star \alpha),
\end{equation}
for $p$-forms $\alpha$ and vector fields $X$ where without ambiguity the dual 1-form to $X$ is denoted by the same notation.
Next, observe that a direct calculation gives the exterior derivative of the dual 1-forms to the Killing field generators
\begin{equation}\label{deta}
d \eta_{(i)} = -f^{-1}\star (\eta_{(1)} \wedge \eta_{(2)} \wedge \Theta_i) + f^{kj}  d f_{ji} \wedge \eta_{(k)}.
\end{equation}
This may be rewritten as
\begin{equation}
d [ f^{mi} \eta_{(i)}] = -\star (\eta_{(1)} \wedge \eta_{(2)} \wedge \mu^m)
\end{equation}
where $\mu^i = f^{-1} f^{ij} \Theta_j$, and therefore with the help of Cartan's formula
\begin{equation}\label{iden5}
0 = d\star (\eta_{(1)} \wedge \eta_{(2)} \wedge \mu^m) = -\iota_{\eta_{(1)}} d \star (\eta_{(2)} \wedge \mu^m) = -\star(\eta_{(1)} \wedge \star d \star (\eta_{(2)} \wedge \mu^m)).
\end{equation}
Consider now the following identity, which holds for an arbitrary Killing field $K$ and applies to $p$-forms
\begin{equation}
\mathcal{L}_K \alpha = (-1)^{p} ( K \wedge \star d \star \alpha) + (-1)^{p+1}\star d \star (K \wedge \alpha).
\end{equation}
Using this together with \eqref{iden5} and the fact that $\mathcal{L}_{\eta_{(2)}} \mu^m =0$ produces
\begin{equation}
(\eta_{(1)} \wedge \eta_{(2)} )\star d \star \mu^m = \eta_{(1)} \wedge \star d \star (\eta_{(2)} \wedge \mu^m) = 0.
\end{equation}
Since $\eta_1 \wedge \eta_2 \neq 0$ we conclude
\begin{equation}
\star d \star \left(f^{-1}f^{ij}\Theta_j\right) =0,
\end{equation}
or equivalently that $\mu^m$ has vanishing spacetime divergence. This establishes, independently of any field equations, formula \eqref{iden2}.

Lastly, taking the inner product of \eqref{deta} with itself yields
\begin{equation}
(d \eta_{(i)} , d \eta_{(j)}) = f^{ml}(df_{mi},d f_{lj}) - f^{-1} (\Theta_i , \Theta_j).
\end{equation}
Moreover since
\begin{equation}
d f_{ij} = d \left[i_{\eta_{(i)}} \eta_{(j)}\right]  = -\star (\eta_{(i)} \wedge \star d\eta_{(j)}),
\end{equation}
we obtain
\begin{equation}
d\star d f_{ij} = d\eta_{(i)} \wedge \star d \eta_{(j)} - \eta_{(i)} \wedge d \star d \eta_{(j)} = \left(( d\eta_{(i)} , d \eta_{(j)})  + (\eta_{(i)} , \star d \star \eta_{(j)} ) \right) \epsilon
\end{equation}
where $\epsilon$ is the spacetime volume form.  Therefore
\begin{align}
\begin{split}
\Box_{\mathbf{g}} f_{ij} =& (d \eta_{(i)} , d \eta_{(j)}) - 2\text{Ric}(\eta_{(i)},\eta_{(j)}) \\
=& f^{ml} ( df_{mi} , df_{lj}) - f^{-1} (\Theta_i , \Theta_j)  - 2\text{Ric}(\eta_{(i)},\eta_{(j)}),
\end{split}
\end{align}
which establishes \eqref{iden3}.
\end{proof}

In order to elucidate the wave map structure underlying the field equations, we must reduce the analysis to the 3-dimensional Lorentzian orbit space $\hat{M}^3=M^5/U(1)^2$. Since this Kaluza-Klein decomposition procedure is well understood \cite{Maison:1979kx}, only the main features will be mentioned.  The spacetime metric may be decomposed as
\begin{equation}
\mathbf{g}_{ab} =f^{-1}h_{ab}+{f}^{ij}\eta_{(i)a}\eta_{(j)b}
=f^{-1}h_{ab}+\Phi^T_a F^{-1}\Phi_b,
\end{equation}
where $f^{-1}h_{ab}$ is the orbit space metric, $\Phi=(\eta_{(1)},\eta_{(2)})^T$, and $F=(f_{ij})$. Next let
\begin{equation}
h^a_{~b}=\delta^a_{~b}-{f}^{ij}\eta_{(i)}^a\eta_{(j)b}
\end{equation}
be the projection tensor onto $\hat{M}^3$,
then a detailed computation reveals the relation between the Ricci tensors of the spacetime and orbit space. Namely, if $\Omega=(\Theta_1,\Theta_2)^T$ then
\begin{align}
\begin{split} \label{Ric(h)}
\mathrm{Ric}(h)_{ac}=&\left[{h}_{a}^b{h}_{c}^d
+f^{-1}{h}_{ac}(\Phi^T)^b{F}^{-1}\Phi^d\right] \mathrm{Ric}(\mathbf{g})_{bd}+\frac{1}{2f} \Omega_a^T{F}^{-1}\Omega_c\\
&
+\frac{1}{4}\text{Tr}\left({F}^{-1}\partial_a{F}{F}^{-1}\partial_c{F}\right)
+\frac{\partial_a f\partial_c f}{4f^2}.
\end{split}
\end{align}
Furthermore the differential identities of Proposition \ref{KVFiden} may be rewritten as
\begin{align}
\begin{split}\label{diffidentities}
\hat{\nabla}_a\Omega_b-\hat{\nabla}_b\Omega_a =&2\text{Ric}(\mathbf{g})^{c}{}_{d}\Phi^d
\eta_{(1)}^{m}\eta_{(2)}^{l}\epsilon_{abcml},\\
\operatorname{div}_h \left( f^{-1}F^{-1} \Omega\right) =& h^{ab}\hat{\nabla}_a\left(f^{-1}{F}^{-1}\Omega_b\right)=0,\\
	\Box_{h}F=& h^{ab}\hat{\nabla}_a F \left(F^{-1}\partial_b F\right)-f^{-1}h^{ab}\Omega_a\Omega^T_b-2f^{-1}\text{Ric}(\mathbf{g})_{ad}\Phi^a(\Phi^T)^d,
\end{split}
\end{align}
where $\hat{\nabla}$ is the connection associated to $h$.

\subsection{The Potentials}

The wave map of minimal supergravity is constructed in part from five potentials \cite{Bouchareb:2007ax,Kunduri:2011zr}. In particular
two twist potentials $\zeta_1$, $\zeta_2$ encode angular momentum, two magnetic potentials $\psi_1$, $\psi_2$ encode the
dipole charges, and one electric potential $\chi$ is associated with the electric charge. The global definition of these potentials is guaranteed as the orbit space $\hat{M}^3$ is simply connected \cite{Hollands:2012xy}. Observe first that Cartan's formula and the fact that $d\mathcal{F}=0$ imply the existence of magnetic potentials
\begin{equation}\label{psi^i}
d \psi_i = \iota_{\eta_{(i)}} \mathcal{F}.
\end{equation}
It is straightforward to show that
\begin{equation}
\mathcal{L}_{\eta_{(i)}}\psi_j = \iota_{\eta_{(i)}} \iota_{\eta_{(j)}} \mathcal{F} =0,
\end{equation}
so that the magnetic potentials are functions defined on the orbit space.

Consider now the 1-form
\begin{equation}\label{Upsilon}
\Upsilon = -\iota_{\eta_{(1)}} \iota_{\eta_{(2)}} \star \mathcal{F}.
\end{equation}
As a consequence of the Maxwell equation in \eqref{SFEintro} we have
\begin{equation}
d\Upsilon =\frac{1}{\sqrt{3}} d\left(\psi_1 d\psi_2 - \psi_2 d\psi_1 \right),
\end{equation}
which yields existence of an electric potential satisfying
\begin{equation}\label{uuuu}
d\chi = \Upsilon - \frac{1}{\sqrt{3}}\left(\psi_1 d\psi_2 - \psi_2 d\psi_1 \right).
\end{equation}
Next, recall that in pure vacuum the twist 1-forms $\Theta_i$ are closed.  In the supergravity setting this is no longer case since the Ricci tensor is nonvanishing. Using the field equations \eqref{SFEintro}, a detailed calculation \cite{Kunduri:2011zr} shows that
\begin{equation}
d\Theta_i=-\Upsilon\wedge \iota_{\eta_{(i)}} \mathcal{F}
= d\left[\psi_i \left(d\chi + \frac{1}{3\sqrt{3}}(\psi_1 d\psi_2 - \psi_2 d\psi_1)\right)\right].
\end{equation}
It follows that there exist twist potentials which obey
\begin{equation}\label{eq2.11}
d\zeta_i = \Theta_i - \psi_i \left[d\chi + \frac{1}{3\sqrt{3}}(\psi_1 d\psi_2 - \psi_2 d\psi_1)\right],
\end{equation}
and it is routine to show that $\zeta_i$ as well as $\chi$ are functions on
the orbit space. Finally, note that the Maxwell field can be reconstructed from the fields $(f_{ij}, \zeta_i,\chi, \psi_i)$ with the identity
\begin{equation}\label{Maxwell}
\mathcal{F} = f^{-1}\star(\eta_{(2)} \wedge \eta_{(1)} \wedge \Upsilon) + {f}^{ij} \eta_{(i)} \wedge d \psi_j .
\end{equation}

\begin{proposition}\label{SFEreduced}
The supergravity field equations \eqref{SFEintro} for $U(1)^2$-invariant solutions $(\mathbf{g},\mathcal{F})$ are equivalent to the following system:
\begin{align}
\begin{split}\label{reducedEFE}
\mathrm{Ric}(h)_{ab} =&  \frac{1}{4}\mathrm{Tr}\left(F^{-1}\partial_a F F^{-1}\partial_b F\right)+\frac{\partial_a f\partial_b f}{4f^2}\\
&+ \frac{\Upsilon_a \Upsilon_b}{2   f} + \frac{f^{ij}}{2} \partial_a \psi_{i} \partial_b \psi_{j} + \frac{1}{2 f}f^{ij} \Theta_{ia} \Theta_{jb},
\end{split}
\end{align}
\begin{equation}\label{laplacianlambda}
\Box_h {f}_{ij} = {f}^{kl}(d f_{ik} , df_{jl})_h - f^{-1}(\Theta_i ,  \Theta_j)_h - (d\psi_i ,d\psi_j)_h +\frac{1}{3} f_{ij}  \left(f^{kl} (d\psi_k , d\psi_l)_h - f^{-1}(\Upsilon,\Upsilon)_h\right),
\end{equation}
where $(\cdot, \cdot)_h$ denotes the inner product on forms with respect to the metric $h$ and
\begin{equation}\label{lapzeta}
\mathrm{div}_h \left(f^{-1}f^{ij} \Theta_j\right) =0 ,
\end{equation}
\begin{equation}\label{lappsi}
\mathrm{div}_h (f^{ij} d\psi_j) = (\Upsilon, f^{-1}f^{ij}\Theta_j)_h - \frac{2}{\sqrt{3}f} \left[ \delta^i_{~2} (\Upsilon,d\psi_1)_h - \delta^i_{~1} (\Upsilon,d\psi_2)_h \right] ,
\end{equation}
\begin{equation}\label{lapchi}
\mathrm{div}_h \left(f^{-1}\Upsilon \right) = -(d\psi_i, f^{-1}f^{ij} \Theta_j)_h .
\end{equation}
\end{proposition}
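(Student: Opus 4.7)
The plan is to establish the equivalence by deriving the reduced system \eqref{reducedEFE}--\eqref{lapchi} from the full spacetime equations \eqref{SFEintro}, with the converse obtained by reading each step backwards and using the reconstruction \eqref{Maxwell} to recover $\mathcal{F}$ from the potentials. Proposition \ref{KVFiden} already packages the purely gravitational/twist content of the reduction, so the real work is to substitute the Einstein and Maxwell equations on the right-hand sides of \eqref{Ric(h)}, \eqref{iden3}, \eqref{iden2} and rewrite them in terms of $(f_{ij}, \zeta_i, \psi_i, \chi)$.

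For \eqref{reducedEFE} I would start from the projected Ricci identity \eqref{Ric(h)}, replace $\mathrm{Ric}(\mathbf{g})$ by the Einstein right-hand side from \eqref{SFEintro}, and evaluate the $\mathcal{F}\cdot\mathcal{F}$ contractions with the help of the decomposition \eqref{Maxwell} together with $\iota_{\eta_{(i)}}\mathcal{F}=d\psi_i$ and $\iota_{\eta_{(1)}}\iota_{\eta_{(2)}}\star\mathcal{F}=-\Upsilon$. The magnetic piece of $\mathcal{F}$ contributes the term $\tfrac{1}{2}f^{ij}\partial_a\psi_i\partial_b\psi_j$, the electric piece produces $\Upsilon_a\Upsilon_b/(2f)$, and the $|\mathcal{F}|^2$ trace terms cancel between the two contractions $h_a^b h_c^d\mathrm{Ric}(\mathbf{g})_{bd}$ and $f^{-1}h_{ac}(\Phi^T)^bF^{-1}\Phi^d\mathrm{Ric}(\mathbf{g})_{bd}$ in \eqref{Ric(h)}. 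The twist term $\tfrac{1}{2f}\Omega^T F^{-1}\Omega$ is already present in \eqref{Ric(h)} and reproduces the final term of \eqref{reducedEFE}.

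For \eqref{laplacianlambda} I would apply \eqref{iden3} and compute $\mathrm{Ric}(\mathbf{g})(\eta_{(i)},\eta_{(j)})$ from the Einstein equation, so that $-2\mathrm{Ric}(\mathbf{g})(\eta_{(i)},\eta_{(j)})$ becomes $-\mathcal{F}_{ic}\mathcal{F}_j{}^{\,c}+\tfrac{1}{6}|\mathcal{F}|^2 f_{ij}$. Using \eqref{Maxwell} the first of these yields the $(d\psi_i,d\psi_j)_h$ term, while $|\mathcal{F}|^2$ produces the $\tfrac{1}{3}f_{ij}$ expression on the right of \eqref{laplacianlambda}. One then has to trade $\Box_{\mathbf{g}}$ for $\Box_h$: since $\mathbf{g}$ restricted to the orbit directions is the conformal rescaling $f^{-1}h$ in dimension three, a Lichnerowicz-type correction $\propto h^{ab}\partial_a f\,\partial_b f_{ij}/f$ appears, and this must combine with part of the $f^{lm}(df_{li},df_{mj})$ term to give the symmetric expression stated. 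Equation \eqref{lapzeta} is the same translation applied to the closed-form identity \eqref{iden2}, while \eqref{lappsi} and \eqref{lapchi} come from applying $\star d\star$ to the defining relations $d\psi_i=\iota_{\eta_{(i)}}\mathcal{F}$ and \eqref{uuuu}, invoking the Maxwell equation $d\star\mathcal{F}=-\tfrac{1}{\sqrt{3}}\mathcal{F}\wedge\mathcal{F}$ and then contracting with $\iota_{\eta_{(1)}}\iota_{\eta_{(2)}}$; the Chern--Simons wedge produces precisely the cubic source terms in $\psi_1 d\psi_2-\psi_2 d\psi_1$ on the right-hand sides.

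The main obstacle is the bookkeeping of conformal factors of $f$. Every appearance of $\star$, $\iota$, or a divergence on $M^5$ has to be converted into its $h$-theoretic counterpart on $\hat M^3$, and several spurious $df$-terms generated by the conformal rescaling must cancel against contributions from $F^{-1}dF$ traces in order to produce the tidy expressions stated in \eqref{reducedEFE} and \eqref{laplacianlambda}. This cancellation is most delicate in the derivation of \eqref{laplacianlambda}. Once it is verified, the remaining equations follow by direct substitution, and the converse is obtained by reversing each computation and using \eqref{Maxwell} to reconstruct the Maxwell field from $(\psi_i,\chi)$ and the fiber data.
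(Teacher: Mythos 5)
Your proposal follows essentially the same route as the paper: substitute the Einstein equation into \eqref{Ric(h)} and into the $\Box_h F$ identity of \eqref{diffidentities} using the Maxwell decomposition \eqref{Maxwell}, read off \eqref{lapzeta} from Proposition \ref{KVFiden}, and obtain \eqref{lappsi}, \eqref{lapchi} by applying $\star d\star$ to the potential definitions together with the Maxwell equation and \eqref{deta}. One small simplification you can exploit: for $U(1)^2$-invariant functions the Kaluza--Klein conformal factor $f^{-1}$ gives exactly $\Box_{\mathbf{g}}u=f\,\Box_h u$ with no first-order $df$ correction term, so the ``delicate cancellation'' of conformal-rescaling terms you anticipate in deriving \eqref{laplacianlambda} does not actually arise.
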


\begin{proof}
To obtain \eqref{reducedEFE} one performs a long computation using the expression for the Maxwell field \eqref{Maxwell} to evaluate the spacetime Ricci tensor, and then substitutes the result into \eqref{Ric(h)}.  The field equation \eqref{laplacianlambda} is similarly obtained using \eqref{diffidentities}, and Proposition \ref{KVFiden} gives \eqref{lapzeta} directly.  To derive \eqref{lappsi}, use \eqref{iden4} to find
\begin{equation}
\star d \star (f^{ij} d\psi_j) = \star d \star f^{ij} \star (\eta_{(j)} \wedge \star \mathcal{F}) = \star \left(f^{ij}\eta_{(j)} \wedge d \star \mathcal{F} -d( f^{ij} \eta_{(j)}) \wedge \star \mathcal{F} \right).
\end{equation}
Now employ the identity \eqref{deta} and the Maxwell equation \eqref{SFEintro} on the first and second terms respectively, as well as \eqref{Maxwell}, to obtain the desired equation. With the help of $\Upsilon = f\star (f^{1i}\eta_{(i)} \wedge f^{2j} \eta_{(j)} \wedge \mathcal{F})$, a similar computation leads to \eqref{lapchi}.
\end{proof}
	
The equations of Proposition \ref{SFEreduced} arise as
critical points of a 3-dimensional theory of gravity on $(\hat{M}^3,h)$ coupled to a wave map \cite{Bouchareb:2007ax,Possel:2003rk} (that is, a harmonic map defined on a manifold with Lorentzian signature) having noncompact symmetric space target manifold $G_{2(2)}/SO(4)$ and governed by the action
\begin{equation}\label{wavemap}
\mathcal{S}[h, X] = \int_{\hat{M}^3} \left(R_h - 2h^{\mu \nu} G_{AB} \partial_\mu X^A \partial_\nu X^B \right) \; d \mathrm{Vol}(h)
\end{equation}
where $R_h$ is the scalar curvature of $h$, and $X = (f_{ij},\zeta_i, \chi, \psi_i)$ are the set of coordinates on the target manifold with metric
\begin{align}\label{yyyy}
\begin{split}
8G_{AB} dX^A dX^B =&(d \log f)^2 + \mathrm{Tr} ( F^{-1} d F)^2 + 2f^{-1}f^{ij} \Theta_i \Theta_j + 2f^{-1}\Upsilon^2 + 2 f^{ij} d\psi_i d\psi_j .
\end{split}
\end{align}
The Euler-Lagrange equations of \eqref{wavemap} are given by
\begin{align}\label{17ji}
\begin{split}
\mathrm{Ric}(h)_{\mu \nu} =&\frac{1}{8} {\text{Tr}} (\Psi^{-1} \partial_\mu \Psi\Psi^{-1} \partial_\nu \Psi) = 2G_{AB}\partial_\mu X^A \partial_\nu X^B, \\
\hat{\nabla}^\mu (\Psi^{-1} \partial_\mu \Psi) =&0 .
\end{split}
\end{align}
An explicit expression for the positive definite unimodular coset representative $\Psi$ may be found in \cite{Kunduri:2011zr}.

\subsection{Construction of the Solution From Potentials}
\label{yyyyy}

Given a solution $(h, f_{ij},\zeta_i, \chi, \psi_i)$ of the reduced supergravity equations, one may reconstruct the full spacetime solution $(\mathbf{g},\mathcal{F})$. To see this introduce local coordinates $x^\mu$ on the orbit space $(\hat{M}^3,h)$, and $2\pi$-periodic coordinates $\phi^i$ adapted to the Killing vectors so that $\eta_{(i)}=\partial_{\phi^i}$. The dual 1-forms $\mathbf{g}(\eta_{(i)}, \cdot)$ to the Killing fields are then given by
${f}_{ij}(d\phi^j + A^j)$, where $A^j = A^j_\mu dx^\mu$ are 1-forms on the orbit space. The spacetime metric then takes on the following expression
\begin{equation}\label{3dreduction}
\mathbf{g} = f^{-1}h_{\mu \nu} dx^\mu \otimes dx^\nu + f_{ij}(d\phi^i + A^i)\otimes(d \phi^j + A^j) .
\end{equation}
A simple calculation shows that the $A^j$ are determined by the twist 1-forms via
\begin{equation}\label{dA}
dA^i = -\star_h \left(f^{-1}f^{ij}\Theta_j\right),
\end{equation}
where $\star_h$ denotes the Hodge dual with respect to the metric on the orbit space.  Observe that integrability of this equation is guaranteed by the second equation of \eqref{diffidentities}.

To construct the Maxwell field, first note that $\Upsilon$ is a 1-form defined on the base space, that is it may be expressed as $\Upsilon = \Upsilon_\mu dx^\mu$.  We may then compute
\begin{equation}
\star (\eta_{(2)} \wedge \eta_{(1)} \wedge \Upsilon) = -f^{-1} \star_h \Upsilon.
\end{equation}
Furthermore
\begin{equation}
f^{ij} \eta_{(i)} \wedge d \psi_j  = d \phi^i \wedge d\psi_i + A^j \wedge d\psi_j = -d \left[\psi_j (d\phi^j + A^j)\right] + \psi_j dA^j .
\end{equation}
It now follows from \eqref{Maxwell} that
\begin{equation}\label{F}
\mathcal{F} = f^{-2}\star_h \left[ f   \psi_j f^{jk}\Theta_k  - \left(d \chi + \frac{1}{\sqrt{3}}\left(\psi_1 d\psi_2 - \psi_2 d\psi_1\right)\right)\right] -d \left[\psi_j (d \phi^j + A^j) \right],
\end{equation}
thus completing the construction of the spacetime solution.

\section{Reduction to 2D}
\label{sec4}\setcounter{equation}{0}
\setcounter{section}{4}

In this section we will assume that in addition to being bi-axisymmetric the spacetime is also stationary, that is the group of isometries is $U(1)^2\times \mathbb{R}$. Thus, along with the rotational Killing field generators $\eta_{(i)}$ there is another Killing field $\xi$ which asymptotically coincides with the generator of time translations at spatial infinity and which commutes with $\eta_{(i)}$; we have
$\mathcal{L}_\xi \mathbf{g} = 0$ and $\mathcal{L}_\xi \mathcal{F} =0$.
Scalar potentials associated to the Maxwell field may be introduced with the help of this new Killing field. These will again be globally defined due to simple connectedness \cite{Hollands:2007aj} of the 2-dimensional orbit space $\hat{M}^2=M^{5}/[U(1)^2 \times\mathbb{R}]$. In particular there is an `electric' potential satisfying
\begin{equation}
d\mathcal{E}_0 = \iota_\xi \mathcal{F}.
\end{equation}
Furthermore the following 2-form is closed
\begin{equation}
\Xi \equiv \frac{1}{2} \iota_\xi \star \mathcal{F} - \frac{1}{\sqrt{3}} \mathcal{E}_0 \mathcal{F},
\end{equation}
which implies the existence of two more potentials
\begin{equation}
d\mathcal{E}_i = \iota_{\eta_{(i)}} \Xi = \frac{1}{2} \iota_{\eta_{(i)}} \iota_\xi \star \mathcal{F} - \frac{1}{\sqrt{3}} \mathcal{E}_0 d\psi_i,\quad\quad i=1,2.
\end{equation}

The 2-plane distribution orthogonal to the three symmetry generators is integrable by Frobenius' theorem.  This requires
\begin{equation}
\star(\xi \wedge \eta_{(1)} \wedge \eta_{(2)} \wedge d K_I) =0
\end{equation}
for each $I=0,1,2$, where the $K_I$ are used to denote the three Killing fields and their duals.
To see that this holds for $I=1,2$ observe that
\begin{equation}
\star(\xi \wedge \eta_{(1)} \wedge \eta_{(2)} \wedge d \eta_{(i)}) = \iota_{\xi} \Theta_i =0,
\end{equation}
where we have used \eqref{eq2.11} and the fact that all scalar potentials are invariant under $\xi$. For example, note that
\begin{equation}
d \iota_\xi d\psi_i =d ( \iota_\xi \iota_{\eta_{(i)}} \mathcal{F}) =0
\end{equation}
as $d\mathcal{F} =0$, and so $\iota_\xi d\psi_i$ is constant. If we choose $\eta_{(i)}$ to be one of the generators of an axis of symmetry at spatial infinity then this constant must vanish, and therefore $\iota_\xi d\psi_i=0$.  Analogous arguments show that the other scalar potentials are also invariant under $\xi$.
Consider now the case when $I=0$ and compute
\begin{equation}\label{oooo}
d \star(\xi \wedge \eta_{(1)} \wedge \eta_{(2)} \wedge d \xi ) = 2\iota_\xi \iota_{\eta_{(1)}} \iota_{\eta_{(2)}} \star \text{Ric}(\mathbf{g}) (\xi).
\end{equation}
Furthermore from the field equations it can be shown that
\begin{equation}
\star \text{Ric}(\mathbf{g}) (\xi) = -\frac{1}{3} \Xi \wedge \mathcal{F} + \frac{1}{3} d \left(\mathcal{E}_0 \star \mathcal{F}\right),
\end{equation}
and expressing this in terms of the various potentials implies that \eqref{oooo} vanishes. Thus $\star(\xi \wedge \eta_{(1)} \wedge \eta_{(2)} \wedge d \xi)$ is a constant, which must vanish since at least two linear combinations of the $\eta_{(i)}$ vanish along the symmetry axes at spatial infinity.
Hence Frobenius' theorem applies.

Introduce now a time coordinate $t$ such that $\xi = \partial_t$. The orthogonal transitivity of the isometry group allows for the following expression of the spacetime metric
\begin{equation}
 \mathbf{g}=f^{-1}g_2 -f^{-1}\rho^2 dt^2 + f_{ij} (d\phi^i + \omega^i dt)(d\phi^j + \omega^j dt),
\end{equation}
so that $h = g_2 -\rho^2 dt^2$ and $A^i = \omega^i dt$ in \eqref{3dreduction}. Here $g_2$ is the $\hat{M}^2$ orbit space metric induced by $h$, and $\rho = \sqrt{-\det q}$ where $q_{IJ} = \mathbf{g}(K_I, K_J)$ is the fibre metric obtained by restricting $\mathbf{g}$ to the Killing fields, that is
\begin{equation}
q = -f^{-1}\rho^2 dt^2 + f_{ij} (d\phi^i + \omega^i dt)(d\phi^j + \omega^j dt).
\end{equation}
It is proved in \cite[Theorem 5.1]{Chrusciel} that $\det q <0$ in the domain of outer communications.
This simplified expression for $h$ yields
\begin{equation}
\text{Ric}(h)_{tt} = \rho \Delta_2 \rho, \qquad \text{Ric}(h)_{ab} = \text{Ric}(g_2)_{ab} - \frac{1}{\rho} D_a D_b \rho,
\end{equation}
where $\Delta_2$, $D_a$ are the Laplacian and covariant derivative associated with $g_2$. From \eqref{reducedEFE} and the fact that all quantities are independent of $t$, it follows that $\Delta_2 \rho =0$ so that as in the vacuum case $\rho$ is harmonic with respect to $g_2$. From this it can be shown \cite[\S 6]{ChruscielCosta} that $\rho$ has no critical points in the orbit space $\hat{M}^2$. We may then define the harmonic conjugate function $z$ up to a constant by $dz = \star_2 d \rho$. The functions $(\rho,z)$ form a global set of coordinates on the orbit space which is homeomorphic to the right-half plane $\{(\rho,z)\mid \rho>0\}$. These coordinates are also naturally
isothermal so that there is a function $\sigma$ defined on the orbit space such that
\begin{equation}
g_2 = e^{2\sigma} (d\rho^2 + dz^2).
\end{equation}

Concerning the Maxwell field, it also simplifies considerably in Weyl-Papapetrou coordinates. According to \eqref{dA} we have
\begin{equation}\label{fjhkjhhw}
d\omega^i = \rho f^{-1}f^{ik} \star_2\Theta_k,
\end{equation}
and therefore \eqref{F} becomes
\begin{align}
\begin{split}
\mathcal{F} =& dt \wedge \left[ d\mathcal{E}_0 - d (\psi_j \omega^j)\right] - d\left[\psi_j(d\phi^j + \omega^j dt)\right] \\
& = -d \left[ (\mathcal{E}_0 - \psi_j \omega^j) dt + \psi_j (d\phi^j + \omega^j dt)\right] \\
&=-d \left[ \mathcal{E}_0 dt + \psi_j d\phi^j\right]
\end{split}
\end{align}
with the help of
\begin{equation}
d\mathcal{E}_0 = -\rho f^{-2}\star_2 \left[d\chi + \frac{1}{\sqrt{3}}(\psi_1 d\psi_2 - \psi_2 d\psi_1)\right] + \psi_j d\omega^j + d(\psi_j \omega^j).
\end{equation}
In addition it should be pointed out that a useful advantage of these coordinates is that the $h$-Laplacian of any function $u$ defined on the orbit
space becomes
\begin{equation}
\Delta_h u = \frac{1}{\sqrt{\det h}} \partial_{a} (\sqrt{\det h}h^{ab} \partial_b u) = e^{-2\sigma} \Delta u,
\end{equation}
where $\Delta$ is the Laplacian for an auxiliary Euclidean 3-space in which the flat metric is written in cylindrical coordinates
\begin{equation}
\delta= d\rho^2 + dz^2 + \rho^2 d\phi^2.
\end{equation}
Here $\phi$ is an auxiliary azimuthal angle on which no quantity depends. Therefore, in the harmonic map system described below it is this flat Laplacian that appears.

It will now be shown that the only content of the 3D Einstein equations of the system \eqref{17ji} is to determine $\sigma$ via quadrature.
Observe that
\begin{equation}
\text{Ric}(h)_{ab} = -\delta_{ab} \Delta_2 \sigma - \rho^{-1} D_a D_b \rho,
\end{equation}
and therefore
\begin{equation}
\text{Ric}(h)_{\rho \rho} = -\Delta_2 \sigma + \rho^{-1} \partial_\rho \sigma, \quad \text{Ric}(h)_{\rho z} = \rho^{-1} \partial_z \sigma, \quad \text{Ric}(h)_{z z} = -\Delta_2 \sigma - \rho^{-1} \partial_\rho \sigma.
\end{equation}
It now follows from \eqref{17ji} that
\begin{align}\label{PDEsigma}
\begin{split}
\rho^{-1} \partial_z \sigma =& 2 G_{AB} \partial_\rho X^A \partial_z X^B, \\
 \rho^{-1} \partial_\rho \sigma =&   G_{AB} \partial_\rho X^A \partial_\rho X^B -  G_{AB} \partial_z X^A \partial_z X^B.
\end{split}
\end{align}
These first order equations for $\sigma$ are integrable as a result of the harmonic map equations. To see this note that the harmonic map equations arise from the action
\begin{equation}\label{HMapaction}
\mathcal{S}_{X} = \int_{\mathbb{R}^3} G_{AB} d X^A \wedge \star_{\delta} dX^B,
\end{equation}
so that the associated divergence free stress-energy tensor is given by
\begin{equation}
T_{ij} = G_{AB} \partial_i X^A \partial_j X^B - \frac{1}{2} \delta_{ij} |dX|^2_G.
\end{equation}
The equations \eqref{PDEsigma} may now be rewritten as
\begin{equation}
\partial_\rho \sigma = 2 \rho T_{\rho \rho}  = - 2\rho T_{zz}, \qquad \partial_z \sigma = 2 \rho T_{\rho z}.
\end{equation}
Next compute
\begin{equation}
\iota_{\partial_\phi} \star_\delta \iota_{\partial_z}T =  \rho T_{zz} d\rho - \rho T_{\rho z} dz = - \rho T_{\rho \rho} d\rho - \rho T_{\rho z} dz = -\frac{1}{2}d \sigma.
\end{equation}
We then have that the integrability of \eqref{PDEsigma} follows from
\begin{equation}
d \left(\iota_{\partial_\phi} \star_\delta \iota_{\partial_z}T\right)  = [\partial_z (\rho T_{zz})+ \partial_\rho (\rho T_{\rho z})]d\rho \wedge dz =
\rho(\operatorname{div} T)(\partial_z) d\rho \wedge dz= 0,
\end{equation}
where $\operatorname{div}$ is the divergence with respect to $\delta$.

In summary, given data $(f_{ij},\zeta_i,\chi,\psi_i)$ forming the coset representative $\Psi$ satisfying the harmonic map equations
\begin{equation} \label{HMap}
\text{div} (\Psi^{-1} \nabla \Psi) =0
\quad \Leftrightarrow \quad \delta^{ab} \partial_a (\rho \Psi^{-1} \partial_b \Psi) =0,
\end{equation}
a spacetime metric $\mathbf{g}$ and Maxwell field $\mathcal{F}$ may be constrcuted yielding a full solution of \eqref{17ji}. Hence, the stationary bi-axisymmetric supergravity equations reduce to the study of a singular harmonic map problem from $\mathbb{R}^3\setminus \Gamma \rightarrow G_{2(2)}/SO(4)$, where $\Gamma$ represents the axes of rotation in the auxiliary orbit space $\mathbb{R}^3$ where $\Psi$ blows-up.

\section{Angular Momentum and Charges}
\label{sec5}\setcounter{equation}{0}
\setcounter{section}{5}

As described in the previous section a well-behaved stationary bi-axisymmetric solution of the minimal supergravity equations admits a global system of Weyl-Papapetrou coordinates in its domain of outer communication $M^5$, so that the metric and Maxwell field are expressed by
\begin{equation}\label{fkiisjgjakng}
\mathbf{g}=f^{-1}e^{2\sigma}(d\rho^2+dz^2)-f^{-1}\rho^2 dt^2
+f_{ij}(d\phi^{i}+\omega^{i}dt)(d\phi^{j}+\omega^{j}dt),\qquad
\mathcal{F}=-d \left[ \mathcal{E}_0 dt + \psi_j d\phi^j\right].
\end{equation}
The 2-dimensional orbit space $\hat{M}^2=M^{5}/[U(1)^2 \times\mathbb{R}]$ is homeomorphic to the right-half plane $\{(\rho,z)\mid \rho>0\}$, and
its boundary $\rho=0$ encodes all nontrivial topology of the spacetime \cite[Theorem 9]{Hollands:2012xy}. This may be described by the rod data
on the $z$-axis that indicates which 1-cycles in the 2-torus fibers vanish \cite{Hollands:2007aj}. In particular
the $z$-axis is broken into $L+1$ intervals called rods
\begin{equation}\label{rods}
\Gamma_{1}=[z_{1},\infty),\text{ }\Gamma_{2}=[z_2,z_1],\text{ }\ldots,\text{ }
\Gamma_{L}=[z_{L},z_{L-1}],\text{ }\Gamma_{L+1}=(-\infty,z_{L}],
\end{equation}
on which either $F=(f_{ij})$ is full rank and the interval is referred to as a horizon rod, or it fails to be of full rank and the interval is referred to as an axis rod. In the case of an axis rod $\Gamma_l$, the kernel of $F$ is 1-dimensional and there is a pair of relatively prime integers $(v_{l}^1,v_{l}^2)$ such that the Killing field $v_l^i \partial_{\phi^i}$ vanishes on $\Gamma_{l}$ \cite[Prop. 1]{Hollands:2007aj}.   The pair $(v_{l}^1,v_{l}^2)$ is called the rod structure of the rod $\Gamma_{l}$, and $(0,0)$ is reserved for the rod structure of a horizon rod. The possible horizon topologies in this setting
are the sphere $S^3$, ring $S^1\times S^2$, and lens space $L(p,q)=S^3/\mathbb{Z}_p$. These topologies may be obtained from a horizon rod which is bounded by two axis rods having the rod structures $\{(1,0),(0,1)\}$, $\{(1,0),(1,0)\}$, and $\{(1,0),(q,p)\}$ respectively. Similarly, if at infinity the two semi-infinite rods possess these pair of rod structures then the resulting spacetime is asymptotically flat (AF), asymptotically Kaluza-Klein (AKK), and asymptotically locally Euclidean (ALE) respectively.
See \cite[Section 3.1]{Hollands:2012xy} for the relevant definitions concerning the asymptotic conditions.

Two consecutive axis rods are separated by a point referred to as a \textit{corner}. In order to preserve the manifold structure of the spacetime, the two neighboring rod structures $v_l^i$ and $v_{l+1}^i$ associated with a corner must satisfy the
\textit{admissibility condition}
\begin{equation} \label{admiss}
\det\begin{pmatrix} v_l^1 & v_l^2 \\ v_{l+1}^1 & v_{l+1}^2 \end{pmatrix} = \pm 1.
\end{equation}
If this does not hold then the spacetime will have an orbifold singularity \cite[Proposition 1]{Hollands:2007aj}.
In addition to \eqref{admiss}, the existence results of this paper rely on a further condition relating the rod structures referred to as the \textit{compatibility condition}. This, however, is only needed in the presence of three consecutive axis rods. Let $\Gamma_{l-1}$, $\Gamma_l$, and $\Gamma_{l+1}$ be such a configuration with rod structures satisfying the admissibility condition at the two corners. We may assume without loss of generality that the determinant in \eqref{admiss} is $+1$
by multiplying the rod structures by $-1$ is necessary. Then the compatibility condition asserts that
\begin{equation}\label{compat}
v_{l-1}^1 v_{l+1}^1\leq 0.
\end{equation}
This technical condition is used only for the construction of an approximate solution in the next section, and it is not known whether or not it is necessary for existence. It should be pointed out that this extra condition does not restrict the types of horizon topologies that can be produced with our approach, which includes all possibilities \cite[Proposition 3]{KhuriWeinsteinYamada}.
Furthermore, if \eqref{admiss} does not hold then orbifold singularities are allowed and \eqref{compat} should be replaced with the generalized compatibility condition
\begin{equation}\label{compat1}
v_{l-1}^1 v_{l+1}^1 \det
\begin{pmatrix}
v_{l-1}^1 & v_{l-1}^2 \\
v_{l}^1 & v_{l}^2
\end{pmatrix}
\det
\begin{pmatrix}
v_l^1 & v_l^2 \\
v_{l+1}^1 & v_{l+1}^2
\end{pmatrix}
\leq 0.
\end{equation}

With the rod structure and potentials, we may now obtain simple expressions for the charges and angular momenta that characterize stationary bi-axisymmetric solutions. There are two types of such quantities, those which are conserved with respect to homology class and those which are based on Komar integrals. Both will be described.

\subsection{Dipole Charges}

Consider a homology class $[\mathfrak{C}]\in H_2(M^5)$. In the current setting nontrivial classes may be constructed from a single rod $\Gamma_l=[z_l,z_{l-1}]$ and a vector $w\in\mathbb{Z}^2$ in the following way. Let $\Gamma_l$ be either an axis rod bounded by two corners with $w$ linearly independent from the rod structure $v_l$ of this rod, or let $\Gamma_l$ be a ring horizon rod with $w$ the rod structure of the two neighboring axis rods. In the axis case a typical choice for $w$ is $\hat{v}_l=(-v_l^2,v_l^1)^T$, which is perpendicular to $v_l$. In both cases a homology representative $\mathfrak{C}_{w}$, homeomorphic to a 2-sphere, may be constructed by moving the circle associated with $w$ along the rod $\Gamma_l$ from one end point to the other (where it collapses). The dipole charge of this homology class is then given by
\begin{equation}\label{hhhh}
\mathcal{D}_l = \frac{1}{2\pi|w|} \int_{\mathfrak{C}_{w}} \mathcal{F}
=\frac{1}{|w|}\int_{\Gamma_l}\iota_{w^i\eta_{(i)}}\mathcal{F}
=\frac{w^i}{|w|}\left[\psi_i(z_{l-1})-\psi_i(z_l)\right].
\end{equation}

This definition may be generalized to 2-dimensional submanifolds with boundary that are associated with a rod. In particular the same definition and computation apply
if $\Gamma_l$ is an arbitrary axis rod, or a horizon rod with arbitrary $w\in\mathbb{Z}^2$. In this general situation the surface $\mathfrak{C}$ is obtained by moving the circle associated with $w$ in the torus fibers $U(1)^2$ along $\Gamma_l$, and is not necessarily a 2-sphere. Depending on how $w^i\eta_{(i)}$ vanishes at the end points of the rod $\mathfrak{C}$ could be either a disk, cylinder, or sphere.

A dipole charge may also be computed for 2-cycles that are not associated with
a single rod. For example let $\Gamma_{l}=[z_{l},z_{l-1}]$, $l=1,2,3$ be a consecutive sequence of three rods in which the first and third are axis rods and the second is a horizon rod. Consider a semi-circle in the 2-dimensional orbit space connecting the corner point $z_0$ to a point on $\Gamma_3$. The $S^1$ associated with the rod structure $v_3$ may be moved along this curve to produce a 2-sphere. This yields a dipole charge in the same manner as \eqref{hhhh}.

\subsection{Electric Charge}

The total electric charge contained within the spacetime is defined to be
\begin{equation}\label{lala}
\mathcal{Q} = \frac{1}{16\pi} \int_{S_\infty} \left(\star \mathcal{F} + \frac{1}{\sqrt{3}} \mathcal{A} \wedge \mathcal{F} \right),
\end{equation}
where $S_\infty$ represents the limit as $r=\sqrt{\rho^2 +z^2}\rightarrow\infty$ of cross-sectional surfaces $S_r$ at spatial infinity. The quantity \eqref{lala} is
sometimes referred to as the Page charge \cite{Marolf}.
The 3-form integrand is closed as a direct result of the Maxwell equation \eqref{SFEintro}.
Therefore this charge, assuming that the potential $\mathcal{A}$ is globally defined, is conserved in that it is unchanged if $S_\infty$ is replaced by any surface homologous to it. However in general $\mathcal{A}$ will not be globally defined. To avoid this issue we express $\mathcal{Q}$ as an integral over the orbit space of the globally defined potential $\chi$ and apply Stokes' theorem to obtain
\begin{equation}\label{electricchargeformula}
\mathcal{Q} = \frac{\pi}{4} \int_{C_{\infty}} d\chi = \frac{\pi}{4} \left[\chi(\Gamma_1) - \chi(\Gamma_{L+1})\right]
=\frac{\pi}{4}\sum_{l} \left[\chi(z_{l-1}) - \chi(z_{l})\right],
\end{equation}
where $C_{\infty}$ is the semi-circle at infinity in the half plane orbit space (the orientation is taken to be counterclockwise in the $(\rho,z)$ plane).
As computed in the proof of Proposition \ref{proposition98} there is a constant $b_l$ for each axis rod such that
\begin{equation}\label{sqw}
\chi = -\frac{1}{\sqrt{3}|v_l|^2} (\psi \cdot v_l) (\psi \cdot \hat{v}_l) + b_l \quad \text{on} \quad \Gamma_l.
\end{equation}
Furthermore observe that $v^i_l \eta_{(i)}=0$ on $\Gamma_l$, so that $v^i_l d\psi_i=0$ and thus $v^i_l \psi_i=c_l$ is a constant on $\Gamma_l$.
Therefore by working in a gauge such that $c_1=c_{L+1}=0$ we then have that $\chi$ is constant on the two semi-infinite rods so that $\chi(\Gamma_1)$, $\chi(\Gamma_{L+1})$ are well-defined.
Using \eqref{sqw} the expression for total electric charge may be expressed in terms of dipole charges
\begin{equation}\label{kjhyu}
\mathcal{Q} = \frac{\pi}{4}\sum_{l=\text{horizon}} \left[\chi(z_{l-1}) - \chi(z_{l})\right] + \frac{\pi}{4\sqrt{3}} \sum_{l = \text{axis}} \frac{c_l}{|v_l|
} \mathcal{D}_l.
\end{equation}
A consequence of this is that even in the absence of horizons
$\mathcal{Q}$ need not vanish.  It should also be pointed out that
\eqref{lala} is gauge invariant under smooth gauge transformations, but is not necessarily invariant under the so called large gauge transformations \cite{Hanaki:2007mb}.

As explained in the proof of uniqueness in Section \ref{sec8}, it is natural to define an electric horizon charge $\mathcal{Q}_H$ associated with a horizon rod $\Gamma_l$ to be
\begin{equation}
\mathcal{Q}_{H} = \frac{\pi}{4}( b_{l-	1} - b_{l+1}),
\end{equation}
which corresponds to the difference of the constants appearing in \eqref{sqw} that arise from the two surrounding axis rods.  This notion is the direct generalization of horizon charge from stationary axisymmetric solutions of 4D Einstein-Maxwell theory, since it is determined by the change in potential constants across the
horizon rod. Moreover it allows the total charge \eqref{lala} to be expressed as
a combination of horizon charges and dipole charges. This follows from \eqref{kjhyu} by computing the difference in $\chi$. To see this let us consider a specific example of a rod structure with five rods, that is $L=4$, in which $\Gamma_3$ is a horizon rod and the rest are axis rods. Note that the rod point $z_1$ and $z_4$ separate the two semi-infinite rods from the finite rods. We then have
\begin{align}
\begin{split}
\mathcal{Q} =&\frac{\pi}{4}\left[ \chi(z_1) - \chi(z_4)\right]\\
 =& \frac{\pi}{4} \left(-\frac{c_2}{\sqrt{3}|v_2|^2} (\psi(z_1) \cdot \hat{v}_2) + b_2 + \frac{c_4}{\sqrt{3}|v_4|^2} (\psi(z_4) \cdot \hat{v}_4) - b_4\right)  \\
 =& \mathcal{Q}_H - \frac{\pi}{4\sqrt{3}} \left( \frac{c_2}{|v_2|^2} \psi(z_1) \cdot \hat{v}_2 - \frac{c_4}{|v_4|^2 }\psi(z_4) \cdot \hat{v}_4 \right).
\end{split}
\end{align}
As explained in Section \ref{sec8} the quantities $c_l$ and the values of the potentials $\psi_i(z_l)$ at corner points are uniquely determined by the dipole charges. Therefore the total charge agrees with the sum of horizon charges $\mathcal{Q}_H$ up to a combination of dipole fluxes.

A second commonly used definition of electric charge is based on the classical expression from Maxwell's theory
\begin{equation}
Q = \frac{1}{16\pi} \int_{S_\infty} \star \mathcal{F}.
\end{equation}
Note that if $\mathcal{A}\rightarrow 0$ sufficiently fast at infinity then $Q = \mathcal{Q}$. This, however, is not always the case and the difference arises when applying Stokes' theorem  to rewrite $Q$. In particular, let $\Sigma$ denote the $t=0$ slice with boundary $H=\partial\Sigma$ then
\begin{equation}\label{pcharge}
Q = -\frac{1}{16\sqrt{3} \pi}\int_\Sigma \mathcal{F} \wedge \mathcal{F} + \frac{1}{16\pi} \int_H \star \mathcal{F}.
\end{equation}
In the case of solitons, $H=\emptyset$ but the volume integral does not vanish
in general.

Like the conserved charge \eqref{lala} the classical charge \eqref{pcharge} may
also be computed in terms of potentials. To see this let $\varepsilon^{ij}$ be the totally antisymmetric symbol in 2 dimensions with $\varepsilon^{12}=1$ and observe that
\begin{align}\label{mnmn}
\begin{split}
-\frac{1}{4\pi^2}\int_\Sigma \mathcal{F} \wedge \mathcal{F} =&
-\frac{1}{2}\int_{\hat{M}^2} \varepsilon^{ij}\iota_{\eta_{(j)}}
\iota_{\eta_{(i)}}\mathcal{F} \wedge \mathcal{F}\\
=&\int_{\hat{M}^2}\varepsilon^{ij} d\psi_i \wedge d\psi_j\\
=&-\sum_l \int_{\Gamma_l}\varepsilon^{ij} \psi_i d\psi_j
+\int_{C_{\infty}}\varepsilon^{ij} \psi_i d\psi_j,
\end{split}
\end{align}
where $C_{\infty}$ is the semi-circle at infinity in the half-plane orbit space.
Let $\hat{\psi}=(-\psi_2,\psi_1)^T$ and note that from \eqref{klfdkjl} below we have
\begin{equation}\label{fvfv}
\varepsilon^{ij} \psi_i d\psi_j=\hat{\psi}\cdot d\psi
=|v_l|^{-2}(\psi\cdot v_l)d(\psi\cdot\hat{v}_l).
\end{equation}
This shows that the axis rod integrals of \eqref{mnmn} reduce to the difference of values of the potentials at the end points, which in turn is related to the dipole
charge of such rods. Furthermore, the horizon rod integrals of \eqref{mnmn} combine with the horizon integral of \eqref{pcharge} to give $\mathcal{Q}$. Putting this all together yields
\begin{align}
\begin{split}
Q  =& \frac{\pi}{4} \sum_{l = \text{horizon}}\left[ \chi(z_{l-1}) - \chi(z_l)\right] + \frac{\pi}{4\sqrt{3}} \sum_{l=\text{axis}} \frac{c_l}{|v_l|} \mathcal{D}_l +\frac{\pi}{4\sqrt{3}}\int_{C_{\infty}}\varepsilon^{ij} \psi_i d\psi_j \\
=& \mathcal{Q}  +\frac{\pi}{4\sqrt{3}}\int_{C_{\infty}}\varepsilon^{ij} \psi_i d\psi_j .
\end{split}
\end{align}
Under reasonable conditions the asymptotic decay at infinity will imply that
the integral over $C_\infty$ vanishes. Therefore this formula indicates that at least in a gauge in which $\mathcal{A} \to 0$ at spatial infinity, we have $Q = \mathcal{Q}$. Lastly we note that a similar result demonstrating the relation with dipole fluxes was obtained in \cite{Gunasekaran:2016nep} for solitons.

\subsection{Angular Momenta}

The total angular momenta contained within the spacetime is given by the Kormar-type integral
\begin{equation}\label{pppp}
\mathcal{J}_i = \frac{1}{16\pi}\int_{S_\infty} \star d \eta_{(i)} + \mathcal{A}(\eta_{(i)})\left(\star \mathcal{F} + \frac{2}{3\sqrt{3}} \mathcal{A} \wedge \mathcal{F}\right).
\end{equation}
When $\mathcal{F} \equiv 0$ this reduces to the usual definition of Komar angular momenta. The second term has been included in order to render the integrand a closed 3-form. As with the electric charge, however, the presence of the gauge potential $\mathcal{A}$ implies that the integrand need not be globally defined. In order to avoid this we express $\mathcal{J}_i$ as an integral over the orbit space and apply Stokes' theorem to find
\begin{equation}\label{conservedJ}
\mathcal{J}_i =\frac{\pi}{4} \int_{C_{\infty}} d\zeta_i = \frac{\pi}{4} \left[\zeta_i(\Gamma_1) - \zeta_i(\Gamma_{L+1})\right]
=\frac{\pi}{4}\sum_{l} \left[\zeta_i(z_{l-1}) - \zeta_i(z_{l})\right].
\end{equation}
As computed in the proof of Proposition \ref{proposition98} there are constants $a_l$, $\hat{a}_l$ for each axis rod such that
\begin{equation}\label{bgtr}
\zeta = \left(\frac{2}{3\sqrt{3}}\frac{c_l^2 (\psi\cdot\hat{v}_l)}{|v_l|^3}+a_l\right)
\frac{v_l}{|v_l|}
+\left(\frac{1}{3\sqrt{3}}\frac{c_l (\psi\cdot\hat{v}_l)^2}{|v_l|^3}+\hat{a}_l\right)
\frac{\hat{v}_l}{|v_l|} \quad \text{on} \quad \Gamma_l.
\end{equation}
Note that the values $\zeta(\Gamma_1)$, $\zeta(\Gamma_{L+1})$ are well-defined, since working in a gauge such that $c_1=c_{L+1}=0$ yields that $\zeta$ is constant on the two semi-infinite rods. In analogy with electric charge,
\eqref{conservedJ} and \eqref{bgtr} imply that the total angular momentum vector may be written in terms of horizon angular momentum plus an expression determined by dipole charges
\begin{equation}
\mathcal{J} = \sum_{l=\text{horizon}} \mathcal{J}_{l} +\mathfrak{D},
\end{equation}
where $\mathfrak{D}$ depends solely on dipole charges of bubbles and the horizon angular momentum vector associated with a horizon rod $\Gamma_l$ is defined by
\begin{equation}
\mathcal{J}_{l} = \frac{\pi}{4}\left( a_{l-1} \frac{v_{l-1}}{|v_{l-1}|} - a_{l+1} \frac{v_{l+1}}{|v_{l+1}|} +  \hat{a}_{l-1} \frac{\hat{v}_{l-1}}{|v_{l-1}|} - \hat{a}_{l+1} \frac{\hat{v}_{l+1}}{|v_{l+1}|}\right).
\end{equation}
Note that this notion reduces to the typical expression of horizon angular momentum in the vacuum case, which is given by the difference of potential constants on either side of the horizon rod. Moreover, this definition is naturally motivated by its role in the proof of uniqueness in Section \ref{sec8}.

It is also common in the literature to use the standard definition of Komar angular momenta
\begin{equation}
J_i = \frac{1}{16\pi}\int_{S_\infty} \star d \eta_{(i)},
\end{equation}
which is gauge invariant but not conserved between homologous surfaces. By applying Stokes' theorem we obtain
\begin{equation}
J_i = \frac{1}{8\pi}\int_\Sigma \star \text{Ric}(\eta_{(i)}) + \frac{1}{16\pi} \int_H  \star d  \eta_{(i)}.
\end{equation}
Introduce now the closed 2-forms
\begin{equation}
\mathcal{B}_i = \frac{1}{2} \iota_{\eta_{(i)}} \star \mathcal{F} -\frac{1}{\sqrt{3}}\psi_i\mathcal{F} ,
\end{equation}
and observe that the field equations imply
\begin{equation}\label{mnbv}
\frac{1}{8\pi}\int_\Sigma \star \text{Ric}(\eta_{(i)}) = \frac{1}{24\pi} \int_\Sigma \left( d \left(\psi_i\star  \mathcal{F} \right)-\mathcal{B}_i \wedge \mathcal{F} \right).
\end{equation}
The first term on the right-hand side is exact and may be computed on $H$, assuming proper asymptotic decay at infinity. It is then the second term that gives nonzero Komar angular momentum for soliton spacetimes.

Now define the potentials by
\begin{equation}
d\kappa_{ij} = \iota_{\eta_{(i)}} \mathcal{B}_j,
\end{equation}
and note that $v^i\kappa_{ij}$ are constants along an axis on which $v^i \eta_{(i)}$ vanishes.  In terms of the harmonic map potentials
\begin{equation}\label{dj}
d\kappa_{ij} = -\frac{\varepsilon_{ij}}{2} \left [d \chi + \frac{1}{\sqrt{3}}\left(\psi_1 d\psi_2 - \psi_2 d\psi_1\right)\right] - \frac{1}{\sqrt{3}} \psi_j d\psi_i.
\end{equation}
We now compute the expression from \eqref{mnbv} in terms of these new quantities
\begin{equation}\label{JVolterm}
-\frac{1}{4\pi^2} \int_\Sigma \mathcal{B}_i \wedge \mathcal{F} = \int_{\hat{M}^2} \varepsilon^{jm} d\kappa_{ji} \wedge d \psi_m =  \int_{\hat{M}^2} d \left[ \varepsilon^{jm} \kappa_{ji} \wedge d \psi_m\right] =  \int_{\hat{M}^2} d \left[ \varepsilon^{mj} \psi_m  \wedge d\kappa_{ji} \right].
\end{equation}
Observe that the integrand has been written as a total derivative in two alternate forms in order to obtain desirable expressions for the cases of spacetimes with and without horizons.  Let us assume first that the solution is a soliton, that is it does not contain any black holes, we then have
\begin{equation}
J_i  = - \frac{\pi}{6}  \sum_{l}\int_{\Gamma_l} \varepsilon^{jm} \kappa_{ji} d\psi_m +\frac{\pi}{6}\int_{C_\infty}\varepsilon^{jm} \kappa_{ji} d\psi_m.
\end{equation}
Since $v_l \cdot \psi = c_l$ is constant on $\Gamma_l$, a similar calculation to that of \eqref{fvfv} implies
\begin{equation}
\varepsilon^{jm} \kappa_{ji} d\psi_m
= |v_l|^{-2} (v_l \cdot \kappa_i) d(\hat{v}_l \cdot \psi).
\end{equation}
Furthermore it also holds that $v_l \cdot \kappa_{i} = \mathbf{d}_{li}$ is  constant on $\Gamma_l$, and therefore
\begin{equation}
J_i   = \frac{\pi}{6} \sum_{l}|v_l|^{-1} \mathbf{d}_{li} \mathcal{D}_l
+\frac{\pi}{6}\int_{C_\infty}\varepsilon^{jm} \kappa_{ji} d\psi_m.
\end{equation}
This shows that $J_i$, in contrast to $\mathcal{J}_i$, can be nonzero for soliton spacetimes, with a value given as a weighted sum over dipole charges if the asymptotic decay at infinity guarantees that the integral over $C_{\infty}$ vanishes.

Consider now the case in which the spacetime contains black hole horizons. In this situation the last integrand in \eqref{JVolterm} yields
\begin{equation}
-\frac{1}{24\pi} \int_{\Sigma} \mathcal{B}_i \wedge \mathcal{F} = -\frac{\pi}{6} \sum_{l}  \int_{\Gamma_l} \varepsilon^{mj} \psi_m  d \kappa_{ji} + \frac{\pi}{6}\int_{C_\infty}\varepsilon^{mj} \psi_m  d \kappa_{ji}.
\end{equation}
On an axis rod we have
\begin{equation}
\varepsilon^{mj} \psi_m d \kappa_{ji} = |v_l|^{-2} c_l d (\hat{v}_l \cdot \kappa_i),
\end{equation}
whereas on a horizon rod \eqref{dj} gives
\begin{equation}
\varepsilon^{mj} \psi_m  d \kappa_{ji} = \frac{\psi_i}{2} \left( d\chi - \frac{1}{\sqrt{3}} \left(\psi_1 d\psi_2 - \psi_2 d\psi_1\right)\right).
\end{equation}
It follows that the relation between the two notions of angular momentum is given by
\begin{align}
\begin{split}
J_i =& \frac{\pi}{6} \sum_{l=\text{axis}} \frac{c_l}{|v_l|^2} \hat{v}_l^j \left[\kappa_{ji}(z_{l}) - \kappa_{ji}(z_{l-1})\right]+ \frac{\pi}{6}\int_{C_\infty}\varepsilon^{mj} \psi_m  d \kappa_{ji}\\
&
-\frac{\pi}{6} \sum_{l = \text{horizon}} \int_{\Gamma_l}  \varepsilon^{mj} \psi_m d \kappa_{ji} + \frac{1}{16\pi} \int_H \left(\star d \eta_{(i)}  - \frac{2}{3} \psi_i \star \mathcal{F}\right). 
\end{split}
\end{align}
If we associate a dipole-like charge to the flux of $\mathcal{B}_i$ out of 2-surface $\mathfrak{C}$ by setting
\begin{equation}
\mathcal{K}_i = \frac{1}{2\pi |w|}\int_{\mathfrak{C}} \mathcal{B}_i = \frac{w^j}{|w|}\left[\kappa_{ji}(z_{l}) - \kappa_{ji}(z_{l-1})\right],
\end{equation}
and use $\iota_{\eta_{(2)}}\iota_{\eta_{(1)}} \star d \eta_{(i)} = \Theta_i$ and \eqref{eq2.11} then the final angular momentum expression takes the form
\begin{equation}\label{fgsd}
J_i = \frac{\pi}{4} \sum_{l =\text{horizon}} \left[\zeta_i(z_{l-1}) - \zeta_i(z_{l})\right] + \frac{\pi}{6} \sum_{l = \text{axis}} \frac{c_l}{|v_l|} \mathcal{K}_{li}
+ \frac{\pi}{6}\int_{C_\infty}\varepsilon^{mj} \psi_m  d \kappa_{ji}.
\end{equation}

We now find the relation between the two definitions of total angular momenta.
Observe that on an axis rod $\Gamma_l$
\begin{equation}
d\kappa_{ij} = -\frac{1}{\sqrt{3}} \psi_j d\psi_i,
\end{equation}
and therefore
\begin{equation}
v^i_l d \kappa_{ij}  = -\frac{1}{\sqrt{3}} \psi_j d (\psi\cdot v_l) =0,\quad\quad
\hat{v}^i_l d\kappa_{ij} = -\frac{1}{\sqrt{3}} \psi_j d (\psi\cdot \hat{v}_l).
\end{equation}
It follows from \eqref{bgtr} that on the axis
\begin{equation}
d\zeta_k = -\frac{2c_l}{3|v_l|^4} \left(  d (v^j_l \hat{v}^i_l \kappa_{ij})v^k_l  + d(\hat{v}^j_l \hat{v}^i_l \kappa_{ij})\hat{v}^k_l\right),
\end{equation}
and hence
\begin{equation}
d\zeta_j = -\frac{2c_l}{3|v_l|^2} \hat{v}^i_l d\kappa_{ij}.
\end{equation}
Using \eqref{conservedJ} and \eqref{fgsd} then yields
\begin{align}
\begin{split}
\mathcal{J}_i =& \frac{\pi}{4} \sum_{l =\text{horizon}} \left[\zeta_i(z_{l-1}) - \zeta_i(z_{l})\right] + \frac{\pi}{6} \sum_{l = \text{axis}} \frac{c_l}{|v_l|} \mathcal{K}_{li}\\
=& J_i -\frac{\pi}{6}\int_{C_\infty}\varepsilon^{mj} \psi_m  d \kappa_{ji}.
\end{split}
\end{align} Under appropriate asymptotic decay conditions on the potentials $\psi_i$, the second term will vanish.

\section{The Approximate Solution}
\label{sec6}\setcounter{equation}{0}
\setcounter{section}{6}

In this section we will begin the process of solving the harmonic map equations \eqref{HMap} with prescribed
rod structure and potentials on the axes. Our approach is motivated by the work of Weinstein \cite{WeinsteinHadamard}, as well as the methods presented in \cite{KhuriWeinsteinYamada,KhuriWeinsteinYamada1}.
The first step is to construct
a type of approximate solution referred to as the model map $\Psi_0:\mathbb{R}^3\setminus\Gamma\rightarrow \tilde{N}$, where $\tilde{N}$ is the set of $7\times 7$ positive definite unimodular matrices which may be used to represent the coset space $N=G_{2(2)}/SO(4)\cong\mathbb{R}^8$. The coset representatives in $\tilde{N}$ are parameterized \cite[\S 3]{Alaee:2018unn} by the coordinates $F=(f_{ij})$, $\zeta=(\zeta_1,\zeta_2)^T$, $\chi$, and $\psi=(\psi_1,\psi_2)^T$, and according to \eqref{yyyy} the canonical complete nonpositively curved metric on $N$ takes the form
\begin{equation}\label{themetric}
4G=\frac{1}{2}\left[\text{Tr}\left(F^{-1}dF\right)\right]^2
+\frac{1}{2}\text{Tr}\left[\left(F^{-1}dF\right)^2\right]
+f^{-1}\Theta^T F^{-1}\Theta
+f^{-1}\Upsilon^2
+d\psi^T F^{-1}d\psi,
\end{equation}
where $f=\det F$, and $\Theta$, $\Upsilon$ are given by \eqref{eq2.11}, \eqref{uuuu}. Recall that the tension of a map between two Riemannian manifolds $\varphi:M\rightarrow N$ is a section of the pullback bundle $\varphi^{*}TN$ and is given by
\begin{equation}
\tau(\varphi)=\tilde{\nabla}^a \partial_a \varphi,
\end{equation}
where $\tilde{\nabla}$ is the induced connection on $T^{*}M\otimes \varphi^{*}TN$. The tension field measures how far away a map is from being harmonic, in that $\varphi$ is harmonic if and only if $|\tau(\varphi)|=0$. In the current setting the components of the tension are found to be
\begin{equation}
\tau=\tau^{f_{ij}}\partial_{f_{ij}}+\tau^{\Theta_i}{\Theta}_i
+\tau^{{\Upsilon}}{\Upsilon}+\tau^{\psi_{i}}\partial_{\psi_{i}}
\end{equation}
where
\begin{align}
\begin{split}
F^{-1}\tau^{F}=&\mathrm{div} \left(F^{-1} dF\right)+f^{-1}F^{-1}\Theta \cdot \Theta^T+\frac{1}{3} f^{-1}\Upsilon\cdot\Upsilon I_{2}\\
&+\frac{1}{3}F^{-1}d\psi\cdot d\psi^T -\frac{1}{3}d\psi^T \cdot\left(F^{-1}d\psi\right) I_{2},\\
F^{-1}\tau^{\Theta}=&f\mathrm{div} \left(f ^{-1}F^{-1} \Theta\right),\\
\tau^{\Upsilon}=&f\mathrm{div}\left(f^{-1}\Upsilon \right)+d\psi^T \cdot\left(F^{-1}\Theta\right),\\
F^{-1}\tau^{\psi}=&\mathrm{div} (F^{-1} d\psi)-f^{-1}\Upsilon\cdot\left(F^{-1}\Theta\right)
+\frac{2f^{-1}}{\sqrt{3}}\left( \delta_{2} \Upsilon d\psi^1 - \delta_{1} \Upsilon d\psi^2 \right),
\end{split}
\end{align}
in which $I_{2}$ is identity $2\times 2$ matrix and all inner products are with respect to the flat metric. It follows that the norm squared of the tension is
\begin{align}\label{tensionnorm}
\begin{split}
4|\tau|^2=&\frac{1}{2}\left[\text{Tr}\left(\mathrm{div} \left(F^{-1} dF\right)+\frac{F^{-1}\Theta\cdot\Theta^T}{f}+\frac{F^{-1}d\psi\cdot d\psi^T}{3} -\frac{d\psi^T \cdot\left(F^{-1}d\psi\right) I_{2}}{3}
+\frac{|\Upsilon|^2 I_{2}}{3f}\right)\right]^2\\
+&\frac{1}{2}\text{Tr}\left[\left(\mathrm{div} \left(F^{-1} dF\right)+\frac{F^{-1}\Theta\cdot\Theta^T}{f}+\frac{F^{-1}d\psi\cdot d\psi^T}{3} -\frac{d\psi^T \cdot\left(F^{-1}d\psi\right) I_{2}}{3}
+\frac{|\Upsilon|^2 I_{2}}{3f}\right)^2\right]\\
+&f\left[\mathrm{div} \left(f^{-1}F^{-1} \Theta\right)\right]^T F \left[\mathrm{div} \left(f^{-1}F^{-1} \Theta\right)\right]+f\left[\mathrm{div}\left(f^{-1}\Upsilon \right)+f^{-1}d\psi^T \cdot\left( F^{-1}\Theta\right)\right]^2\\
+&\left[\mathrm{div}(F^{-1} d\psi)-f^{-1}\Upsilon \cdot \left(F^{-1}\Theta\right)+\frac{2f^{-1}}{\sqrt{3}}\left( \delta_{2} \Upsilon \cdot d\psi_1 - \delta_{1} \Upsilon\cdot d\psi_2 \right)\right]^T\\
\times & F\left[\mathrm{div} (F^{-1} d\psi)-f^{-1}\Upsilon\cdot \left(F^{-1}\Theta\right)+\frac{2f^{-1}}{\sqrt{3}}\left( \delta_{2} \Upsilon\cdot d\psi_1 - \delta_{1} \Upsilon\cdot d\psi_2 \right)\right].
\end{split}
\end{align}

In order for a model map $\Psi_0$ to be considered an appropriate approximate solution on which to build the existence theory it must keep the tension bounded and properly decaying at infinity, as well as share the same rod structure and potential constants along the axes as those that are prescribed for the solution.

\begin{proposition}\label{proposition98}
Let $\Gamma_l$ be a set of axis rods having corresponding rod structures $(v_l^1,v_l^2)$ satisfying the compatibility condition \eqref{compat1}, and let $\mathbf{a}_l$, $\mathbf{b}_l$,
$\mathbf{c}_l$ be a set of associated constants in which only $\mathbf{c}_l$ may change between rods that share an end point. Then there exists a
model map $\Psi_0:\mathbb{R}^3\setminus\Gamma\rightarrow \tilde{N}$ that possesses uniformly bounded tension, decays at infinity by $|\tau|=O(r^{-3})$, and satisfies
$v_{l}^i \psi_i=\mathbf{c}_l$ on $\Gamma_l$, with $\zeta$ and $\chi$ agreeing with $\mathbf{a}_l$ and $\mathbf{b}_l$ on $\Gamma_l$ up to a function depending only on $\psi$ and the rod structure. Furthermore the functions defining the model map $(F, \zeta,\chi,\psi)$ are all smooth everywhere including along the axis and at corners.
\end{proposition}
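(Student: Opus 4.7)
The plan is to build $\Psi_0$ by splicing together explicit local model maps, one on a tubular neighborhood of each axis rod, via an axisymmetric partition of unity in the orbit half-plane. The broad scheme follows the vacuum construction in \cite{KhuriWeinsteinYamada,KhuriWeinsteinYamada1}, but the essential new feature here is that the constants $\mathbf{a}_l,\mathbf{b}_l$ enter through the nonlinear axis relations \eqref{sqw}--\eqref{bgtr} rather than directly as axis values of $\zeta$ and $\chi$. Consequently the local templates must assign $\psi$ the prescribed dipole value $\mathbf{c}_l$ along $\Gamma_l$, and then \emph{define} $\chi$ and $\zeta$ by the right-hand sides of \eqref{sqw} and \eqref{bgtr} with $\mathbf{b}_l,\mathbf{a}_l$ playing the role of additive constants.

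I would first treat a generic axis rod $\Gamma_l$ in isolation. Since $v_l$ is primitive in $\mathbb{Z}^2$, Bezout supplies a matrix $A_l\in SL(2,\mathbb{Z})$ carrying $v_l$ to $e_1=(1,0)^T$, and the induced change of target coordinates $F\mapsto A_l^{-T}FA_l^{-1}$, $\psi\mapsto A_l\psi$, together with the corresponding transformation of $\zeta$, is a target isometry of $\tilde{N}$ that reduces the local problem to one with standard rod structure. In the normalized frame one may use the Minkowski-like harmonic template from the vacuum case for $F$, layer on constant values of the full potential block, and pull back by $A_l$ to obtain an exact harmonic local model $\Psi_l$ on a tubular neighborhood $U_l$ of $\mathrm{int}(\Gamma_l)$ satisfying $Fv_l=0$, $\psi\cdot v_l=\mathbf{c}_l$ on $\Gamma_l$ and realizing the correct $\mathbf{a}_l,\mathbf{b}_l$ via \eqref{sqw}--\eqref{bgtr}. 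Semi-infinite rods are handled by gluing on explicit asymptotically flat or Kaluza-Klein/locally Euclidean model ends whose outer rod structures match the given ones.

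The delicate step is the construction of $\Psi_0$ through a corner $z_l$ separating $\Gamma_l$ and $\Gamma_{l+1}$, since the two local models degenerate $F$ in different directions $v_l$ and $v_{l+1}$ and their axis values of $\psi$ satisfy the incompatible constraints $\psi\cdot v_l=\mathbf{c}_l$, $\psi\cdot v_{l+1}=\mathbf{c}_{l+1}$. On a small half-ball around $z_l$ I would interpolate the five blocks of $\Psi$ separately: for $F$ I use a two-parameter family of positive definite symmetric matrices degenerating along $v_l$ on one axis piece and along $v_{l+1}$ on the other, an ansatz whose smooth positive-definite extension off the axis is available precisely under the generalized compatibility condition \eqref{compat1}, exactly as in the vacuum construction; the potential $\psi$ is prescribed as a smooth axisymmetric $\mathbb{R}^2$-valued function of $(\rho,z)$ whose traces on the two axis pieces are the respective constant vectors dictated by $\mathbf{c}_l,\mathbf{c}_{l+1}$; and $\chi,\zeta$ are then \emph{defined} pointwise by the formulas \eqref{sqw}, \eqref{bgtr} applied to this interpolated $\psi$ with the shared corner constants $\mathbf{a}_l=\mathbf{a}_{l+1}$, $\mathbf{b}_l=\mathbf{b}_{l+1}$. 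By construction this corner map takes the required axis values on both rods, is smooth across the corner, and connects continuously to the exterior local models $\Psi_l,\Psi_{l+1}$.

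Finally $\Psi_0$ is assembled by an axisymmetric partition of unity subordinate to the cover $\{U_l\}\cup\{\text{corner balls}\}\cup\{\text{asymptotic end}\}$. Each local template is an exact harmonic map on its $U_l$, so the tension is supported in finitely many compact transition regions (overlap collars and corner balls) plus the asymptotic end. On the compact transition regions all blocks of $\Psi_0$ and their derivatives are uniformly bounded on a set bounded away from $\Gamma$, and inspection of \eqref{tensionnorm} immediately yields a uniform bound on $|\tau|$; matching the asymptotic template to the spliced interior to quadratic order produces $|\tau|=O(r^{-3})$ at infinity, exactly as in the vacuum case, since the extra supergravity contributions in \eqref{tensionnorm} involve derivatives of $\chi$ and $\psi$ that are constant at leading order in the end. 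The main obstacle is the corner step: one must interpolate $F$, $\psi$, $\zeta$, $\chi$ simultaneously, preserve the nonlinear relations \eqref{sqw}--\eqref{bgtr}, keep $F$ positive definite off the axes, and smoothly switch the axis value of $\psi$ between two distinct vectors, and it is precisely this step that forces invocation of the compatibility hypothesis.
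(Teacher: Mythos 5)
Your overall architecture (local templates per rod, $\psi$ carrying the data $\mathbf{c}_l$, and $\chi,\zeta$ reconstructed from $\psi$ via the nonlinear axis relations with $\mathbf{b}_l,\mathbf{a}_l$ as additive constants) matches the paper's strategy, but there is a genuine gap in how you control the tension. You assert that the tension is supported on "compact transition regions \dots bounded away from $\Gamma$," and that boundedness of the blocks and their derivatives there gives a bound on $|\tau|$ via \eqref{tensionnorm}. This cannot be arranged: the transition of the rod structure of $F$ and the transition of $\psi$ between its (distinct) corner values must take place along the axis itself --- in the paper the matrix $\mathrm{k}(z)$ rotating $\bar F$ varies on the \emph{interior of the middle axis rod}, and $\psi$ varies along rod interiors --- so the transition regions necessarily meet $\Gamma$, where $F^{-1}$ and $f^{-1}$ blow up like $\rho^{-2}$. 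There, boundedness of $|\tau|$ is not a consequence of boundedness of the fields; it requires exact cancellations. The paper gets these by taking all potentials in the axis neighborhood to be functions of $z$ alone and choosing $\chi$ and $\zeta$ by \eqref{chiformula} and \eqref{zetaformula} so that $\Upsilon\equiv 0$ and $\Theta\equiv 0$ on the entire tubular neighborhood (see \eqref{tt} and \eqref{zetaconstant}), not merely on the axis; this kills every term in \eqref{tensionnorm} containing $\Upsilon$ or $\Theta$, and the constraint $v_l\cdot d\psi=0$ throughout the tube is what keeps $F^{-1}d\psi$ and $\operatorname{div}(F^{-1}d\psi)$ bounded. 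Your proposal imposes the relations \eqref{sqw}, \eqref{bgtr} only as axis traces and interpolates "pointwise" near corners without specifying that the identities hold identically off the axis, so the dangerous terms $f^{-1}\Upsilon^2$ and $f^{-1}\Theta^TF^{-1}\Theta$ are left uncontrolled exactly where they can diverge.

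A second, related error is the placement of the $\psi$-transition. Since $v_l$ and $v_{l+1}$ are linearly independent, the two constraints $v_l\cdot\psi=\mathbf{c}_l$ and $v_{l+1}\cdot\psi=\mathbf{c}_{l+1}$ pin down the full vector $\psi(z_l)$ at the corner; the paper therefore takes $\psi$ \emph{constant} in a neighborhood of each corner and lets it vary along the interiors of the rods (where only one component is constrained). Your corner ansatz, in which $\psi$ "smoothly switch[es] the axis value \dots between two distinct vectors" at the corner, would force a discontinuity at $z_l$ (or, if both traces are the common corner value, there is nothing to switch and the description is vacuous). Making $\psi$ locally constant at corners is also what allows the additive constants $b_{l+1}-b_l$, $a_{l+1}-a_l$, $\hat a_{l+1}-\hat a_l$ to be fixed so that $\chi$ and $\zeta$, defined rod-by-rod through \eqref{chiformula} and \eqref{zetaformula}, glue continuously across the corner --- a matching step your proposal omits. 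These are the points you would need to repair; the remaining ingredients (the isometry \eqref{tt} of the target normalizing the rod structure, the asymptotic template with $|\tau|=O(r^{-3})$, and the role of \eqref{compat1} in the three-rod interpolation of $F$) are in line with the paper.
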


\begin{remark}
Given a rod structure and corresponding model map provided by this proposition,
the constants $\mathbf{a}_l$, $\mathbf{b}_l$, $\mathbf{c}_l$ may be used to prescribe the angular momenta $\mathcal{J}_i$ and electric charge $\mathcal{Q}$ of each horizon rod, as well as the dipole charge $\mathcal{D}$ for each axis rod whose end points are corners. In fact, the angular momenta and electric charge of a horizon component are simply (up to a constant multiple) the difference of the constants $\mathbf{a}_l$ and $\mathbf{b}_l$ on each side of the relevant horizon rod.
\end{remark}

\begin{proof}
Consider three domains whose disjoint union is $\mathbb{R}^3=D_1\cup D_2\cup D_3$. Let $D_1=\mathbb{R}^3\setminus B_{r_0}$ be the complement of a large ball which intersects the two semi-infinite rods, and let $D_2$ be a small tubular neighborhood of the axis rods inside $B_{r_0}$. The domain $D_3$ is then the complement of $D_2$ within $B_{r_0}$. This decomposition is depicted in Figure \ref{figure2}.
Consider first the case in which no connected component of the axis $\Gamma$ has more than one corner. By setting the potentials $(\zeta,\chi,\psi)$ to be the appropriate prescribed constants on connected components of $D_2$, the tension norm $|\tau|$ reduces to the same expression as that in the vacuum case treated in \cite[Theorem 6]{KhuriWeinsteinYamada}, and thus the definition of $F$ in $D_2$ is taken to be the same as given there. The tension is then bounded in this domain. Suppose further that the model map $\Psi_0$ is given in $D_1$, then in $D_3$ we may set it to be any function which interpolates smoothly between the definitions in $D_1$ and $D_2$.

Let us now construct the model map in the exterior region $D_1$. On this domain
define
\begin{equation}
F=\begin{pmatrix}
r\sin^2(\theta/2) & 0 \\
0 & r\cos^2(\theta/2)
\end{pmatrix},
\qquad \zeta=\zeta(\theta),\quad \chi=\chi(\theta), \quad \psi=\psi(\theta),
\end{equation}
where $(r,\theta)$ are polar coordinates i.e. $\rho=r\sin\theta$, $z=r\cos\theta$.
The components of $F$ are harmonic functions and therefore $\operatorname{div} \left(F^{-1} dF\right)=0$. In addition, since $f_{11}$ behaves like $2\log\rho$ near the positive $z$-axis and is bounded near the negative $z$-axis while $f_{22}$ has the opposite behavior, the rod structure arising from this prescription is $(1,0)$ for the northern semi-infinite rod and $(0,1)$ for the southern semi-infinite rod. This is the rod structure associated with an asymptotically flat spacetime. Next, the potential functions are chosen to be the
appropriate prescribed constants near the axes, that is for
$\theta\in[0,\varepsilon]\cup[\pi-\varepsilon,\pi]$ with $0<\varepsilon$ small. It follows that near the axes in $D_1$ the model map is harmonic so that $|\tau|=0$.
We may now choose $(\zeta,\chi,\psi)$ to be arbitrary smooth functions of $\theta$
that interpolate between the two sets of constants for $\theta\in[\varepsilon,\pi-\varepsilon]$.
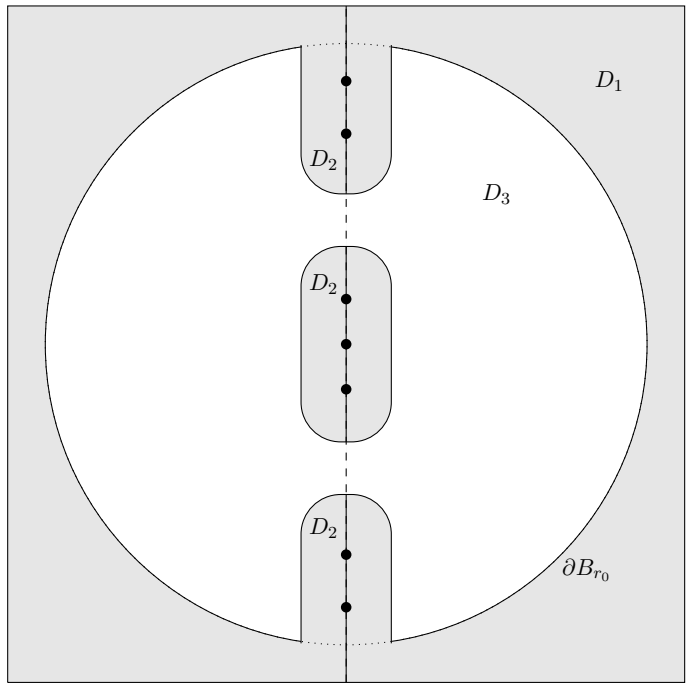
\begin{figure}[h]	
	\begin{tikzpicture}[scale=1, every node/.style={scale=0.8}]
	 \draw[fill=gray!20!white](-4.5,-4.5)--(4.5,-4.5)--(4.5,4.5)--(-4.5,4.5)--(-4.5,-4.5);
	\draw[black,fill=white] (0,-4cm) arc (-90:90:40mm);
	\draw[black,fill=white] (0,4cm) arc (90:270:40mm);
	\draw[rounded corners=15pt,fill=gray!20!white] (-0.6,3.98)--(-0.6,2)--(0.6,2)--(0.6,3.98);
	\draw[rounded corners=15pt,fill=gray!20!white] (-0.6,-3.98)--(-0.6,-2)--(0.6,-2)--(0.6,-3.98);
	\draw[rounded corners=15pt,fill=gray!20!white] (-0.6,0)--(-0.6,1.3)--(0.6,1.3)--(0.6,0)--(0.6,-1.3)--(-0.6,-1.3)--(-0.6,0);
	\fill[gray!20!white](-0.5,-3.7)--(-0.5,-4.2)--(0.5,-4.2)--(0.5,-3.7);
	\fill[gray!20!white](-0.5,3.7)--(-0.5,4.2)--(0.5,4.2)--(0.5,3.7);
	\draw[dashed](0,-4.5)--(0,4.5);
	\draw[](0,-2)--(0,-4.5);
	\draw[](0,2)--(0,4.5);
	\draw[](0,-1.3)--(0,1.3);
	\fill[black] (0,2.8) circle [radius=.07];
	\fill[black] (0,3.5) circle [radius=.07];
	\fill[black] (0,0.6) circle [radius=.07];
	\fill[black] (0,0) circle [radius=.07];
	\fill[black] (0,-0.6) circle [radius=.07];
	\fill[black] (0,-2.8) circle [radius=.07];
	\fill[black] (0,-3.5) circle [radius=.07];
	\node at (-.3,2.45){$D_2$};
	\node at (-.3,-2.45){$D_2$};
	\node at (-.3,.8){$D_2$};
	\node at (3.5,3.5){$D_1$};
	\node at (2,2){$D_3$};
	\node at (3.2,-3){$\partial B_{r_0}$};
	\draw[dotted,black] (0,-4cm) arc (-90:90:40mm);
	\draw[dotted,black] (0,4cm) arc (90:270:40mm);
	\end{tikzpicture}
	\caption{Domain Decomposition}\label{figure2}
\end{figure}
It will now be shown that the tension $|\tau|$ decays like $O(r^{-3})$. According to the description above the tension vanishes near the axes, and so this condition need only be checked on the interpolation region. There, using the explicit description of $F$, the asymptotics for each term may be computed. For instance, consider the second term on the right-hand side of \eqref{tensionnorm}. The portion $F^{-1}$ decays like $O(r^{-1})$, $f^{-1}$ decays like $O(r^{-2})$, and the inner product contributes an extra $O(r^{-2})$ since the derivatives within $\Theta$ are only in the $\theta$ direction. It follows that
\begin{equation}
f^{-1}F^{-1}\Theta\cdot\Theta^{T}=O(r^{-5}).
\end{equation}
Similar considerations may be applied to each term yielding $|\tau|=O(r^{-3})$.
For the ALE and AKK asymptotics the model map construction is the same except that $F$ is modified appropriately in the region $D_1$, see \S 4.1 and \S 4.2 of \cite{KhuriWeinsteinYamada1} respectively. Analogous arguments may then be made to estimate the asymptotics of each term appearing in \eqref{tensionnorm} to arrive at the same conclusion.

It remains to define the model map in the region $D_2$ when components of $\Gamma$ have more than one corner. For each component, the construction may be accomplished inductively on the number of rods. Thus we will give details only for a sequence of three rods separated by two corners. Consider a consecutive sequence of axis rods: the north $\Gamma_1$, middle $\Gamma_2$, and south $\Gamma_3$ having rod structures $v_l=(v^1_l,v^2_l)$, $l=1,2,3$, and separated by corners $p_{1}$, $p_2$. It may be assumed without loss of generality for the purposes here that the rod structures are of unit norm $|v_l|=1$. Let $D$ denote the region of $D_2$ which contains these rods. The construction of $F$ in this domain follows that of \cite[Theorem 6]{KhuriWeinsteinYamada}. Namely by choosing appropriate harmonic functions $u$ and $v$ the matrix
\begin{equation}
\bar{F}=\begin{pmatrix} e^u & 0 \\ 0 & e^v \end{pmatrix},
\end{equation}
gives rise to rod structure $(1,0)$ on $\Gamma_1 \cup\Gamma_3$, and $(0,1)$ on $\Gamma_2$. The model map matrix is then set to $F=\mathrm{k} \bar{F}\mathrm{k}^T$ where $\mathrm{k}=\mathrm{k}(z)$ is a $2\times 2$ nonsingular matrix function that is constant except on the interior of $\Gamma_2$, where it transitions smoothly between rod structures. This definition realizes the desired rod structures and has the property that $\operatorname{div}\left(F^{-1}dF\right)$ is uniformly bounded in $D$. The above construction is motivated by the fact that
\begin{equation}
F\mapsto \mathrm{k}F \mathrm{k}^T,\quad \psi\mapsto \mathrm{k}\psi,\quad \chi\mapsto (\det \mathrm{k})\chi,\quad
\zeta\mapsto (\det \mathrm{k}) \mathrm{k}\zeta,
\end{equation}
\begin{equation}
\Rightarrow \quad\quad \Upsilon\mapsto (\det \mathrm{k})\Upsilon,\quad\quad \Theta\mapsto (\det \mathrm{k})\mathrm{k}\Theta,
\end{equation}
is an isometry of the target space.

We will now define the potentials of the model map in $D$; they will all be functions of $z$ alone. The magnetic potential $\psi$ is defined to be a smooth function which satisfies
\begin{equation}\label{psiconstant}
v_l^i\psi_i=c_l\quad\text{ on }\quad\Gamma_l,
\end{equation}
for arbitrary constants $c_l=\mathbf{c}_l$. Note that this fixes one component of $\psi$ on each rod while the other component is allowed to transition. Furthermore, since neighboring rod structures are linearly independent the prescription \eqref{psiconstant} fully determines $\psi$ at the corners, and $\psi$ may be
taken to be this constant quantity in a neighborhood of the corners.

Consider next the electric potential. The goal is to choose $\chi$ so that
\begin{equation}\label{tt}
\Upsilon=d\chi+\frac{1}{\sqrt{3}}\hat{\psi}\cdot d\psi=0\quad\text{ on }\quad D,
\end{equation}
where the `hat' operation takes a vector to one which is orthogonal to the original
and having the same norm $\hat{\psi}=(-\psi_2,\psi_1)^T$. By using \eqref{psiconstant} and the fact that $v_l$ and $\hat{v}_l$ are constant on $\Gamma_l$ we have
\begin{equation}\label{klfdkjl}
\hat{\psi}\cdot d\psi = |v_l|^{-4}\left[-(\psi\cdot\hat{v}_l)v_l+(\psi\cdot v_l)\hat{v}_l\right]
\cdot d\left[(\psi\cdot v_l) v_l+(\psi\cdot\hat{v}_l)\hat{v}_l\right]\\
= |v_l|^{-2}d\left[(\psi\cdot v_l)(\psi\cdot\hat{v}_l)\right].
\end{equation}
It follows that $\Upsilon=0$ on each $\Gamma_l$ by setting
\begin{equation}\label{chiformula}
\chi=-\frac{1}{\sqrt{3}}(\psi\cdot v_l)(\psi\cdot \hat{v}_l)+b_l\quad\text{ on }\quad\Gamma_l,
\end{equation}
where $b_l$ are arbitrary constants. Furthermore
since $\psi$ is constant in a neighborhood of the corners, the function $\chi$ may be made continuous by appropriately choosing the constants $b_2-b_1$ and $b_3-b_2$. Then among the three constants $b_l$ there is one left that may be chosen arbitrarily, and so the smooth function $\chi$ is defined up
to a single constant $\mathbf{b}_l$ on $D$.

Lastly, the twist potentials are chosen to achieve
\begin{equation}\label{zetaconstant}
\Theta=d\zeta+\psi[d\chi+\frac{1}{3\sqrt{3}}\hat{\psi}\cdot d\psi]=0\quad\text{ on }\quad D.
\end{equation}
Using \eqref{psiconstant}, \eqref{tt}, and \eqref{klfdkjl} shows that on $\Gamma_l$
\begin{align}\label{nbnbnbm}
\begin{split}
\Theta=&d\zeta-\frac{2}{3\sqrt{3}}\psi d\left[(\psi\cdot v_l)(\psi\cdot\hat{v}_l)\right]\\
=&d\left[(\zeta\cdot v_l)v_l+(\zeta\cdot\hat{v}_l)\hat{v}_l\right]
-\frac{2}{3\sqrt{3}}\left[(\psi\cdot v_l)v_l+(\psi\cdot\hat{v}_l)\hat{v}_l\right]
d\left[(\psi\cdot v_l)(\psi\cdot\hat{v}_l)\right]\\
=&v_l d\left[(\zeta\cdot v_l)-\frac{2}{3\sqrt{3}}(\psi\cdot v_l)^2(\psi\cdot
\hat{v}_l)\right]
+\hat{v}_l d\left[(\zeta\cdot \hat{v}_l)-\frac{1}{3\sqrt{3}}(\psi\cdot v_l)(\psi\cdot\hat{v}_l)^2\right].
\end{split}
\end{align}
Therefore to achieve \eqref{zetaconstant} define $\zeta$ on $\Gamma_l$ by
\begin{equation}\label{zetaformula}
\zeta\cdot v_l=\frac{2}{3\sqrt{3}}(\psi\cdot v_l)^2(\psi\cdot
\hat{v}_l)+a_l,\quad\quad
\zeta\cdot \hat{v}_l=\frac{1}{3\sqrt{3}}(\psi\cdot v_l)(\psi\cdot\hat{v}_l)^2
+\hat{a}_l,
\end{equation}
for arbitrary constants $a_l$, $\hat{a}_l$. As in the definition of $\chi$, the function $\zeta$ may be made continuous by properly choosing the constants $a_2-a_1$, $a_3-a_2$, $\hat{a}_2-\hat{a}_1$, and $\hat{a}_3-\hat{a}_2$. There is then one degree of freedom left for each component of $\zeta$, and thus $\zeta$ is
defined up to a constant vector $\mathbf{a}_l$ on $D$.

We may now complete the proof. Consider each term in the tension expression \eqref{tensionnorm}. In light of \eqref{tt} and \eqref{zetaconstant} all terms involving $\Upsilon$ and $\Theta$ vanish in $D$. Moreover as mentioned above $\operatorname{div}\left(F^{-1}dF\right)$ is uniformly bounded. Finally \eqref{psiconstant} implies that $F^{-1}d\psi$ and $\operatorname{div}\left(F^{-1}d\psi\right)$ remain bounded as well. Hence $|\tau|$ is properly controlled in $D$. Lastly it is clear from the construction that the degrees of freedom may be chosen properly so that
$v_{l}^i \psi_i=\mathbf{c}_l$ on $\Gamma_l$, with $\zeta$ and $\chi$ agreeing with $\mathbf{a}_l$ and $\mathbf{b}_l$ on $\Gamma_l$ up to a function depending only on $\psi$ and the rod structure; here $\mathbf{a}_l=\mathbf{a}$ is the same constant for $l=1,2,3$ and similarly for $\mathbf{b}_l=\mathbf{b}$.
\end{proof}

\section{Harmonic Map Existence and Uniqueness}
\label{sec7}\setcounter{equation}{0}
\setcounter{section}{7}

With the model map $\Psi_0$ in hand, the proof of the existence and uniqueness of a harmonic map $\Psi:\mathbb{R}^3 \setminus\Gamma\rightarrow \tilde{N}$ which is asymptotic to the model map may now be carried out by following the arguments in
the vacuum case \cite{KhuriWeinsteinYamada} with slight modification. This is possible due to the fact that the target symmetric space here, $N=G_{2(2)}/SO(4)$, is
nonpositively curved and of rank 2 just as the target space in the vacuum case $SL(3,\mathbb{R})/SO(3)$. For the sake of completeness we will sketch the arguments. Recall that two maps are said to be \textit{asymptotic} if the $N$-distance between the two $d(\Psi,\Psi_0)$ remains bounded near the axes, and $d(\Psi,\Psi_0)\rightarrow 0$ as $r\rightarrow \infty$.
As is shown in \cite[Theorem 11]{KhuriWeinsteinYamada}, if $\Psi$ and $\Psi_0$ are asymptotic then they give rise to the same rod structure and the values of the two sets of potentials on the axes agree. Thus, the spacetime resulting from $\Psi$ will have the prescribed rod structure and hence topology, as well as the prescribed charges.

Consider now the question of uniqueness. Let $\Psi_1$ and $\Psi_2$ be two harmonic maps that are asymptotic with the same model map $\Psi_0$.
Since the target space is nonpositively curved it follows \cite[Lemma 2]{WeinsteinHadamard} that
\begin{equation}\label{1999}
\Delta \sqrt{1 + d(\Psi_1,\Psi_2)^2}  \geq
-|\tau(\Psi_1)| - |\tau(\Psi_2)| =0.
\end{equation}
As the two maps are asymptotic to each other there is a uniform bound for the distance $d(\Psi_1,\Psi_2)\leq C$, and we may then interpret the function $\sqrt{1 + d(\Psi_1,\Psi_2)^2}$ as weakly subharmonic on $\mathbb{R}^3$. Due to the fact that $\Gamma$ is of codimension 2, the maximum principle applies \cite[Lemma 8]{Weinstein1} to show that $\sqrt{1 + d(\Psi_1,\Psi_2)^2}\leq 1$, since  $d(\Psi_1,\Psi_2)\rightarrow 0$ at infinity. Hence $\Psi_1=\Psi_2$.

The proof of existence proceeds as follows. Let $\mathfrak{D}_j$ be an increasing sequence of domains that exhaust $\mathbb{R}^3\setminus\Gamma$ as $j\rightarrow\infty$, and let $\Psi_j$ be the unique harmonic map on $\mathfrak{D}_j$ which agrees with the model map on the boundary, that is, having the Dirchlet boundary conditions $\Psi_j=\Psi_0$ on $\partial\mathfrak{D}_j$. Since
$|\tau(\Psi_0)|$ is uniformly bounded and decays sufficiently fast at infinity, there exists \cite[pg. 838]{WeinsteinHadamard} a positive smooth function $w$ on $\mathbb{R}^3$ satisfying $\Delta w\leq -|\tau(\Psi_0)|$ such that $w\rightarrow 0$ as $r\rightarrow\infty$ \cite[Lemma 10]{KhuriWeinsteinYamada}. Then with the aid of \eqref{1999} we find
\begin{equation}\label{f82y-0---}
\Delta\left( \sqrt{1 + d(\Psi_j,\Psi_0)^2} -w\right) \geq 0,\quad\quad\quad
\sqrt{1 + d(\Psi_j,\Psi_0)^2} -w\leq 1\text{ }\text{ on }\text{ }\partial\mathfrak{D}_j.
\end{equation}
The maximum principle may be used again to produce
a uniform $C^0$ estimate for $d(\Psi_j,\Psi_0)$. From this, local pointwise
energy estimates may be established following \cite[Section 6]{KhuriWeinsteinYamada}. Note that although \cite[Section 6]{KhuriWeinsteinYamada}
is written explicitly for the rank 2 target space $SL(3,\mathbb{R})/SO(3)$, analogous arguments may be given by $G_{2(2)}/SO(4)$ since it is also of rank 2.
Standard elliptic bootstrapping can now be implemented to control all higher order derivatives of $\Psi_j$ on compact subsets. Therefore this sequence of maps subconverges to a harmonic map $\Psi$ having a distance to the model map which is uniformly bounded and vanishes at infinity, since \eqref{f82y-0---} implies
\begin{equation}\label{finainfoiqhogihq}
d(\Psi,\Psi_0)\leq \sqrt{w(2+w)}.   
\end{equation}
In particular the solution is asymptotic to $\Psi_0$. We have proved the following result.

\begin{theorem}\label{hmexist}
Given a model map $\Psi_0$ having uniformly bounded tension field that decays at infinity like $O(r^{-3})$, there exists a unique harmonic map $\Psi:\mathbb{R}^3\setminus\Gamma\rightarrow\tilde{N}$ which is asymptotic to $\Psi_0$.
\end{theorem}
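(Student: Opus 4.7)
The plan is to exploit that the target $\tilde{N}=G_{2(2)}/SO(4)$ is a simply connected nonpositively curved symmetric space of rank $2$. The key analytic input is the standard Bochner inequality which, for any pair of smooth maps $\Psi_1,\Psi_2:\mathbb{R}^3\setminus\Gamma\to\tilde{N}$, gives
\[
\Delta\sqrt{1+d(\Psi_1,\Psi_2)^2}\;\geq\;-|\tau(\Psi_1)|-|\tau(\Psi_2)|,
\]
exactly as in \eqref{1999}, and this estimate will drive both halves of the theorem.

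For uniqueness I would suppose $\Psi_1,\Psi_2$ are harmonic and both asymptotic to $\Psi_0$, so that $d(\Psi_1,\Psi_2)$ is uniformly bounded and vanishes at infinity. The Bochner inequality then shows $\sqrt{1+d(\Psi_1,\Psi_2)^2}-1$ is weakly subharmonic on $\mathbb{R}^3\setminus\Gamma$, bounded above, and tending to zero at infinity. Since $\Gamma$ has codimension $2$ in $\mathbb{R}^3$, the removable-singularity maximum principle of \cite[Lemma 8]{Weinstein1} applies to conclude this function is nonpositive, hence $\Psi_1=\Psi_2$.

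For existence I would take an exhaustion $\mathfrak{D}_j\nearrow\mathbb{R}^3\setminus\Gamma$ by smooth bounded domains and solve the Dirichlet problem on each with boundary data $\Psi_0|_{\partial\mathfrak{D}_j}$; unique solvability is classical on Hadamard targets. To get a uniform $C^0$ bound I would construct a smooth barrier $w\geq 0$ on $\mathbb{R}^3$ satisfying $\Delta w\leq -|\tau(\Psi_0)|$ and $w\to 0$ at infinity, which is possible because $|\tau(\Psi_0)|$ is bounded and decays like $O(r^{-3})$, so the Newtonian convolution against the Green's function is well defined and decays like $O(r^{-1})$. Applied to the pair $(\Psi_j,\Psi_0)$, the Bochner inequality then makes $\sqrt{1+d(\Psi_j,\Psi_0)^2}-w$ subharmonic on $\mathfrak{D}_j$ and at most $1$ on $\partial\mathfrak{D}_j$, yielding a uniform bound on $d(\Psi_j,\Psi_0)$ that in addition tends to zero at infinity.

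With the $C^0$ bound in hand, I would establish local pointwise energy and higher-derivative estimates on compact subsets of $\mathbb{R}^3\setminus\Gamma$ by transcribing the scheme of \cite[Section 6]{KhuriWeinsteinYamada}, which applies essentially verbatim to any nonpositively curved rank-$2$ symmetric space target; elliptic bootstrapping and a diagonal Arzel\`a-Ascoli argument then extract a subsequential limit $\Psi$ that is harmonic, uniformly close to $\Psi_0$, and asymptotic to it. The main technical obstacle I anticipate is the local energy estimate near a generic point of $\Gamma$, where $\Psi_0$ itself blows up: naive cutoff arguments break down, and one must exploit the rank-$2$ flat directions of $\tilde{N}$ (as in the vacuum case) to decompose the singular part of the energy into two commuting one-parameter families of geodesics that can each be absorbed separately.
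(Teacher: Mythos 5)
Your proposal follows the paper's proof essentially verbatim: the same Bochner-type inequality $\Delta\sqrt{1+d(\Psi_1,\Psi_2)^2}\geq -|\tau(\Psi_1)|-|\tau(\Psi_2)|$ with the codimension-2 maximum principle of \cite[Lemma 8]{Weinstein1} for uniqueness, and the same exhaustion-plus-barrier scheme (with $\Delta w\leq -|\tau(\Psi_0)|$, $w\to 0$) followed by the local energy estimates of \cite[Section 6]{KhuriWeinsteinYamada} adapted to the rank-2 target $G_{2(2)}/SO(4)$ for existence. The only cosmetic difference is that you make the barrier explicit via the Newtonian potential (where the decay is really $O(r^{-1}\log r)$ given $|\tau(\Psi_0)|=O(r^{-3})$, which still suffices since all that is needed is $w\to 0$).
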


Let us now complete the proof of the main theorem. Suppose that a set of rod structure data and corresponding potential constants are given, satisfying the assumptions of Theorem \ref{primaryresult}. By Proposition \ref{proposition98} there exists
a model map $\Psi_0$ which encodes this data and falls-off appropriately at infinity. Theorem \ref{hmexist} may now be applied to find a unique harmonic map $\Psi$ which is asymptotic to $\Psi_0$. From this harmonic map, a solution of the supergravity equations may be constructed according to the description in Section \ref{yyyyy}. Arguments similar to those used in the vacuum case \cite[Theorem 11]{KhuriWeinsteinYamada}, which are detailed below, may now be employed to show that this solution realizes the prescribed rod data and charges.
Additionally, conical singularities may be ruled out on the semi-infinite rods as in \cite[\S 6]{KhuriWeinsteinYamada1}.

\begin{theorem}\label{aoifjikhjgsa}
Let $(F,\zeta,\chi,\psi)$ and $(F_0,\zeta_0,\chi_0,\psi_0)$ denote the functions defining the harmonic map $\Psi$ and model map $\Psi_0$ of Theorem \ref{hmexist}. Then on each axis rod $\Gamma_l$ with rod structure $v_l$, we have $\mathrm{ker} \text{ }F=\mathrm{ker}\text{ } F_0$ and 
\begin{equation}\label{2309i11akhkkae}
\zeta\cdot v_l-\frac{2}{3\sqrt{3}}(\psi\cdot v_l)^2(\psi\cdot
\hat{v}_l)=\zeta_0\cdot v_l-\frac{2}{3\sqrt{3}}(\psi_0\cdot v_l)^2(\psi_0\cdot
\hat{v}_l),
\end{equation}
\begin{equation}\label{2309i11akhkkae1}
\zeta\cdot \hat{v}_l-\frac{1}{3\sqrt{3}}(\psi\cdot v_l)(\psi\cdot\hat{v}_l)^2
=\zeta_0\cdot \hat{v}_l-\frac{1}{3\sqrt{3}}(\psi_0\cdot v_l)(\psi_0\cdot\hat{v}_l)^2,
\end{equation}
\begin{equation}\label{2309i111akfnkj}
\chi+\frac{1}{\sqrt{3}}(\psi\cdot v_l)(\psi\cdot \hat{v}_l)=\chi_0+\frac{1}{\sqrt{3}}(\psi_0\cdot v_l)(\psi_0\cdot \hat{v}_l),\quad\quad
v_l\cdot\psi=v_l\cdot\psi_0.
\end{equation}
In particular, the two maps respect the same rod data set. 
\end{theorem}

The first step towards establishing this result is to obtain a relationship between the distance $d(\Psi_0,\Psi)$ and the Mazur quantity $\mathrm{Tr}(\Psi_{0}^{-1}\Psi)$. Since the metric on $N$ is $G_{2(2)}$-invariant, the distance function is preserved under the action of left translation
\begin{equation}
d(\Psi_0,\Psi)=
d(Id,L_{\mathcal{B}^{-1}}\Psi),
\end{equation}
where $\mathcal{B}\in SL(7,\mathbb{R})$ arises from the representation of $G_{2(2)}$ and satisfies $\mathcal{B}\mathcal{B}^{T}=\Psi_0$ (see \cite[Section 5]{Clement:2007qy}), with
\begin{equation}
L_{\mathcal{B}^{-1}}\Psi=\mathcal{B}^{-1}
\Psi(\mathcal{B}^{-1})^{T}=e^{W}
\end{equation}
for some symmetric $W$ such that $\mathrm{Tr}\text{ }W=0$. Due to the fact that $N$ is a symmetric space, the Riemannian exponential map and the matrix exponential coincide. Furthermore, Hadamard's theorem applies (using that $N$ is complete, simply connected, and nonpositively curvatured) to show that the exponential map is a diffeomorphism, and the geodesic $\gamma(t)= e^{tW}$ is minimizing. It follows that
\begin{equation}
d(Id,L_{\mathcal{B}^{-1}}\Psi)=|\gamma'(0)|=|W|
=\sqrt{\mathrm{Tr}(W^2)}.
\end{equation}

Now consider the Mazur quantity \cite{Mazur:1982db}, namely
\begin{align}
\begin{split}
\mathrm{Tr}\left(\Psi_{0}^{-1}\Psi\right)=&
\mathrm{Tr}\left((\mathcal{B}^{-1})^{T}\mathcal{B}^{-1}
\Psi(\mathcal{B}^{-1})^{t}\mathcal{B}^{T}\right)\\
=&\mathrm{Tr}\left(\mathcal{B}^{-1}\Psi(\mathcal{B}^{-1})^T\right)\\
=&\mathrm{Tr}\text{ } e^{W}.
\end{split}
\end{align}
Since $e^{W}$ is symmetric and positive definite it may be diagonalized with positive eigenvalues
$\lambda_{i}$, $i=1,\ldots,7$. We then have
\begin{equation}
\mathrm{Tr}\text{ }e^{W}=\sum_{i=1}^{7}\lambda_{i},\quad\quad\quad
\mathrm{Tr}(W^2)=\sum_{i=1}^{7}(\log\lambda_{i})^2,
\end{equation}
and since $W$ has zero trace
\begin{equation}\label{alfjlkhna;tgyh}
\sum_{i=1}^{7}\log\lambda_i=0.
\end{equation}
If $\mathrm{Tr}\text{ }e^{W}\leq c$ then it is not difficult to see that \eqref{alfjlkhna;tgyh} implies $\mathrm{Tr}(W^2)\leq c_1$. Conversely if $\mathrm{Tr}(W^2)\leq c^2$
then each $|\log\lambda_{i}|\leq c$, and it holds that $\mathrm{Tr}\text{ }e^{W}\leq 7e^{c}$. We have thus shown the following.

\begin{lemma}\label{fflem1}
The distance $d(\Psi_0,\Psi)$ is uniformly bounded if and only if
the Mazur quantity $\mathrm{Tr}\left(\Psi_{0}^{-1}\Psi\right)$
is uniformly bounded.
\end{lemma}

\begin{proof}[Proof of Theorem \ref{aoifjikhjgsa}.]
If $\Psi$ is asymptotic to $\Psi_0$ then $d(\Psi_0,\Psi)\leq c_0$, that is the distance is uniformly bounded, in particular near $\Gamma$.
By Lemma \ref{fflem1} this implies that the Mazur function is also uniformly bounded
\begin{equation}
\mathrm{Tr}\left(\Psi_{0}^{-1}\Psi\right)\leq c.
\end{equation}
Moreover this quantity may be computed, as is done in Appendix \ref{appendixb} with all relevant notation, to find
\begin{align}\label{1100}
\begin{split}
\text{Tr}\left(\Psi^{-1}_0\Psi\right) =& \text{Tr}(F_0^{-1} F) + f^{-1} \text{Tr}( F_0^{-1} L_1^T L_1) +  \text{Tr}(F_0^{-1} L_5^T F^{-1} L_5) \\ 
&+ 2  \text{Tr}[F_0^{-1} (\nu - \nu_0)(\nu-\nu_0)^T] + f_0f^{-1} +  f^{-1} \text{Tr} [F_0 \mathbf{J} (\nu - \nu_0) (\mathbf{J}(\nu - \nu_0))^T ] \\  
&+ \text{Tr}(F_0 F^{-1})  + f_0^{-1} (\mathbf{J}(\nu - \nu_0))^T F \mathbf{J} (\nu - \nu_0) + f_0^{-1} f^{-1} L_2^2+ f_0^{-1} L_4^T F^{-1} L_4  \\ 
&+ f_0^{-1} f + 2f_0^{-1} \left(\mu - \mu_0 + \nu_0^T \mathbf{J} \nu\right)^2 + f^{-1} L_3^2+ 2 (\nu - \nu_0)^T F^{-1} (\nu - \nu_0) + 1.
\end{split}
\end{align} 
Since each of the terms on the right-hand side is nonnegative (see appendix), and the roles of $\Psi$ and $\Psi_0$ may be reversed, we have
\begin{equation}
c^{-1}f_{0}\leq f\leq cf_{0},\quad\quad\quad\quad
\mathrm{Tr}(FF_{0}^{-1})\leq c,
\end{equation}
\begin{equation}\label{fkjhi2uh}
(\nu - \nu_0)^T F_0^{-1} (\nu - \nu_0)\leq c/2,\quad\quad
f_0^{-1} \left(\mu - \mu_0 + \nu_0^T \mathbf{J} \nu\right)^2 \leq c/2,\quad\quad
f^{-1} L_4^T F_0^{-1} L_4\leq c.
\end{equation}

We now show that $F$ and $F_0$ give rise to the same rod structure.
Observe that since $F_0$ is symmetric it may be diagonalized with an orthogonal matrix $O$, so that $F_0=O\mathbf{D}O^T$ where $\mathbf{D}=\mathrm{diag}(\varrho_1, \varrho_2)$. Consider now a neighborhood of an interior point on an axis rod. At the axis points, the kernel of $F_0$ is 1-dimensional and so it may be assumed without loss of generality that $c_1^{-1}f_0\leq \varrho_1\leq c_1 f_0$ and $0<c_2^{-1}\leq \varrho_2\leq c_2$. Let $\tilde{F}=O^T FO$ then
\begin{equation}
\mathrm{Tr}(FF_{0}^{-1})=\mathrm{Tr}(FO \mathbf{D}^{-1}O^T)=\mathrm{Tr}(O O^T FO \mathbf{D}^{-1}O^T)
=\mathrm{Tr}(\tilde{F}\mathbf{D}^{-1})
=\tilde{f}_{11}\varrho_{1}^{-1}+\tilde{f}_{22}\varrho_{2}^{-1}.
\end{equation}
Thus
\begin{equation}
\tilde{f}_{11}\varrho_{2}+\tilde{f}_{22}\varrho_{1}\leq c\varrho_1 \varrho_2= cf_0,
\end{equation}
so that
\begin{equation}
\tilde{f}_{11}\leq c c_2 f_0,\quad\quad\quad \tilde{f}_{22}\leq cf_0\varrho_1^{-1}\leq cc_1.
\end{equation}
Moreover
\begin{equation}
f=\tilde{f}_{11}\tilde{f}_{22}-\tilde{f}_{12}^{2}\leq\tilde{f}_{11}\tilde{f}_{22}
\leq cc_2f_{0}\tilde{f}_{22},
\end{equation}
which produces the lower bound
\begin{equation}
\tilde{f}_{22}\geq (cc_2)^{-1}f f_0^{-1}\geq c^{-2}c_{2}^{-1}.
\end{equation}
In order to control the cross terms, observe that from the above
\begin{equation}
\tilde{f}_{12}^{2}=\tilde{f}_{11}\tilde{f}_{22}-f\leq c^2 c_1 c_2f_0.
\end{equation}
We conclude that
\begin{equation}
\tilde{f}_{11}\leq c_3 f_0,\quad\quad |\tilde{f}_{12}|\leq c_3 \sqrt{f_0},\quad\quad
c_3^{-1}\leq \tilde{f}_{22}\leq c_3.
\end{equation}
Hence, on an axis rod both $\mathbf{D}=O^T F_0 O$ and $\tilde{F}=O^T FO$ have the same kernel, and therefore $F_0$ and $F$ have the same kernel. Analogous arguments hold for a horizon rod.

Let us now show that the potential constants agree on axis rods. Consider an axis rod $\Gamma_l$, which we may assume without loss of generality has rod structure $v_l=(1,0)$, and use the notation above for diagonalizing $F_0$. The first inequality of \eqref{fkjhi2uh} yields
\begin{equation}
(\tilde{\psi}-\tilde{\psi}_0)^{T}\mathbf{D}^{-1}(\tilde{\psi}-\tilde{\psi}_0)
=3(\nu-\nu_0)^{T}F_{0}^{-1}(\nu-\nu_0)\leq 3c/2,
\end{equation}
where
\begin{equation}
(\tilde{\psi}-\tilde{\psi}_0)=O^T(\psi-\psi_0).
\end{equation}
It follows that
\begin{equation}\label{fjiuqi8hosihhga}
\varrho_{1}^{-1}(\tilde{\psi}_1-(\tilde{\psi}_{0})_1)^2
+\varrho_{2}^{-1}(\tilde{\psi}_2-(\tilde{\psi}_{0})_2)^2\leq 3c/2,
\end{equation}
which implies
\begin{equation}
(\tilde{\psi}_1-(\tilde{\psi}_{0})_1)^2\leq c_4 f_0.
\end{equation}
We then have $\psi_1=(\psi_0)_1$ on $\Gamma_l$, since $O$ coincides with the identity matrix on the axis. Next, observe that the second inequality of \eqref{fkjhi2uh} shows that on the axis
\begin{align}\label{====}
\begin{split}
0=&\sqrt{3}\left(\mu-\mu_0 +\nu_0^T \mathbf{J} \nu\right)\\
=&\chi-\chi_0 +\frac{1}{\sqrt{3}}\left(\psi_2 (\psi_0)_1 -\psi_1 (\psi_0)_2\right)\\
=& \chi+\frac{1}{\sqrt{3}}\psi_1 \psi_2 -\left(\chi_0 +\frac{1}{\sqrt{3}}(\psi_0)_1 (\psi_0)_2\right),
\end{split}
\end{align}
where in the last equality we used $\psi_1=(\psi_0)_1$ on $\Gamma_l$. Thus, \eqref{2309i111akfnkj} holds. To confirm \eqref{2309i11akhkkae} and \eqref{2309i11akhkkae1}, note that similar arguments to those that produced \eqref{fjiuqi8hosihhga} show that the last inequality of \eqref{fkjhi2uh} gives $L_4=0$ on the axis. A direct calculation of the components of $L_4$ then produces
\begin{equation}
0= \zeta_1 -(\zeta_0)_1+\frac{1}{3}(\chi-\chi_0)(2\psi_1 +(\psi_0)_1)-\frac{1}{3\sqrt{3}}\psi_1 (\psi_1 (\psi_0)_2 -\psi_2 (\psi_0)_1 ),   
\end{equation}
\begin{equation}
0= \zeta_2 -(\zeta_0)_2+\frac{1}{3}(\chi-\chi_0)(2\psi_2 +(\psi_0)_2)-\frac{1}{3\sqrt{3}}\psi_2 (\psi_1 (\psi_0)_2 -\psi_2 (\psi_0)_1 ). 
\end{equation}
Using \eqref{====} to replace $\chi-\chi_0$, as well as $\psi_1=(\psi_0)_1$ leads to
\begin{equation}
0= \zeta_1 -\frac{2}{3\sqrt{3}}\psi_1^2 \psi_2 -\left((\zeta_0)_1-\frac{2}{3\sqrt{3}}(\psi_0)_1^2 (\psi_0)_2 \right),
\end{equation}
\begin{equation}
0= \zeta_2 -\frac{1}{3\sqrt{3}}\psi_1 \psi_2^2 -\left((\zeta_0)_2-\frac{1}{3\sqrt{3}}(\psi_0)_1 (\psi_0)_2^2 \right),
\end{equation}
which yields the desired result.
\end{proof}

\section{Uniqueness of Minimal Supergravity Solutions}
\label{sec8}\setcounter{equation}{0}
\setcounter{section}{8}

In the previous section uniqueness was established for harmonic maps which are asymptotic to one another. This does not necessarily imply that any two minimal supergravity solutions having the same charges and rod structure are equivalent.
This is due to the fact that although two such solutions produce two harmonic maps $\Psi_1$ and $\Psi_2$, it is not known a priori that these maps remain within bounded distance to each other globally. Thus, the primary task of this section is
to show that indeed the distance $d(\Psi_1,\Psi_2)$ is uniformly bounded. Previous works \cite{Armas:2014gga,Armas:2009dd,Tomizawa:2010xj,Tomizawa:2009ua,Tomizawa:2009tb} on the uniqueness question for the minimal supergravity equations appear to have used the Mazur quantity $\mathrm{Tr}(\Psi_1 \Psi_{2}^{-1}-I)$, as opposed to the distance function. 
Both functions are subharmonic, and once they are known to be bounded a maximum principle argument may be used (as in Section \ref{sec7}) to yield that they vanish identically. A drawback to the prior approach is that in the minimal supergravity setting the Mazur quantity is difficult to compute, and so only special cases of uniqueness have been established previously. On the other hand these two subharmonic functions are related in that boundedness of one implies boundedness of the other.
This fact is a consequence of the structure of the symmetric space target, and may be proved as in \cite[Lemma 12]{KhuriWeinsteinYamada} which treats the vacuum case. The only difference here, when passing from vacuum to minimal supergravity, is the presence of extra potential terms which are treated in the same manner as the vacuum potentials in the proof of \cite[Lemma 12]{KhuriWeinsteinYamada}.

As in the vacuum case \cite{HollandsYazadjiev}, there are five regions to consider when establishing boundedness of the distance function. Namely: $1)$ the interior of axis rods, $2)$ the interior of horizon rods, $3)$ a neighborhood of infinity, $4)$ a neighborhood of the poles, which are intersections of a horizon and axis rod, and $5)$ a neighborhood of corner points, which are the intersection of two axis rods. It has been shown \cite[\S 2]{Armas:2014gga} that the Mazur quantity remains bounded in a neighborhood of infinity, and as mentioned above this implies boundedness of the distance function
in region $(3)$. Furthermore, the harmonic map does not blow-up at a horizon rod and thus the distance is controlled in region $(2)$. The arguments needed for regions $(4)$ and $(5)$ are similar to those of $(1)$, which we will treat first.

Consider an axis rod $\Gamma_l$ having rod structure $v_l$. By assumption both solutions have the same rod structure, so in particular this rod and its structure are shared. Since the linear combination $v_l^i \eta_{(i)}$ vanishes on this rod, the definitions \eqref{Thetadefinition}, \eqref{psi^i}, and \eqref{Upsilon} imply that $\Theta_i$, $d\left(v_l^i \psi_i\right)$, and $\Upsilon$ vanish on $\Gamma_l$. The computations \eqref{psiconstant}, \eqref{chiformula}, and \eqref{zetaformula} then show that there exist constants $a_l^{\mathbf{j}}$, $b_l^{\mathbf{j}}$, and $c_l^{\mathbf{j}}$ such that on this rod
\begin{equation}\label{2309i}
\zeta^{\mathbf{j}}\cdot v_l-\frac{2}{3\sqrt{3}}(\psi^{\mathbf{j}}\cdot v_l)^2(\psi^{\mathbf{j}}\cdot
\hat{v}_l)=a_l^{\mathbf{j}},\quad\quad
\zeta^{\mathbf{j}}\cdot \hat{v}_l-\frac{1}{3\sqrt{3}}(\psi^{\mathbf{j}}\cdot v_l)(\psi^{\mathbf{j}}\cdot\hat{v}_l)^2
=\hat{a}_l^{\mathbf{j}},
\end{equation}
\begin{equation}\label{2309i1}
\chi^{\mathbf{j}}+\frac{1}{\sqrt{3}}(\psi^{\mathbf{j}}\cdot v_l)(\psi^{\mathbf{j}}\cdot \hat{v}_l)=b_l^{\mathbf{j}},\quad\quad
v_l\cdot\psi^{\mathbf{j}}=c_l^{\mathbf{j}},
\end{equation}
where $\mathbf{j}=1,2$ indicates association with the solution $\Psi_{\mathbf{j}}$. Here, as before, it is assumed without loss of generality that $|v_l|=|\hat{v}_l|=1$.
It will be shown below that equality of angular momenta and charges of the two solutions implies that these constants agree on all axis rods, that is $a_l^{\mathbf{j}}=a_l$, $b_l^{\mathbf{j}}=b_l$, and $c_l^{\mathbf{j}}=c_l$. We then have
\begin{equation}\label{differenceequation1}
(\zeta^1-\zeta^2)\cdot v_l-\frac{2(c_l)^2}{3\sqrt{3}}(\psi^1-\psi^2)\cdot
\hat{v}_l=O(\rho^2),\quad
(\zeta^1-\zeta^2)\cdot \hat{v}_l-\frac{c_l}{3\sqrt{3}}\left[(\psi^1\cdot\hat{v}_l)^2
-(\psi^2\cdot\hat{v}_l)^2\right]
=O(\rho^2),
\end{equation}
\begin{equation}\label{differenceequation2}
(\chi^1-\chi^2)+\frac{c_l}{\sqrt{3}}(\psi^1-\psi^2)\cdot \hat{v}_l=O(\rho^2),\quad\quad
v_l\cdot(\psi^1-\psi^2)=O(\rho^2).
\end{equation}

Let $\Psi(s)$, $s\in[0,1]$ be a curve in the symmetric space target $\tilde{N}$ with $\Psi(0)=\Psi_2$ and $\Psi(1)=\Psi_1$. The distance is by definition the infimum of the length of all curves connecting the two solutions, and therefore $d(\Psi_1,\Psi_2)\leq L(\Psi(s))$. All the components of the curve, except for one, will be chosen to be linear functions. Namely
\begin{equation}
f_{ij}(s)=f_{ij}^2+s\left(f_{ij}^1 -f_{ij}^2\right),\quad
\chi(s)=\chi^2 +s\left(\chi^1-\chi^2\right), \quad
\psi(s)=\psi^2 +s\left(\psi^1 -\psi^2\right),
\end{equation}
\begin{equation}\label{plmk}
v_l\cdot\zeta(s)=v_l\cdot \zeta^2 +s v_l\cdot\left(\zeta^1-\zeta^2\right),
\quad
\hat{v}_l\cdot\zeta(s)=\hat{a}_l+\frac{c_l}{3\sqrt{3}}\left(\psi(s)\cdot\hat{v}_l\right)^2
+\gamma(s),
\end{equation}
where $\gamma(s)$ is a function satisfying
\begin{equation}
\gamma(0)=\hat{v}_l\cdot\zeta^2-\hat{a}_l-\frac{c_l}{3\sqrt{3}}
\left(\psi^2\cdot\hat{v}_l\right)^2,
\quad\quad
\gamma(1)=\hat{v}_l\cdot\zeta^1-\hat{a}_l-\frac{c_l}{3\sqrt{3}}
\left(\psi^1\cdot\hat{v}_l\right)^2.
\end{equation}
Observe that by \eqref{2309i} and \eqref{2309i1} both $\gamma(0)$ and $\gamma(1)$ are $O(\rho^2)$, and thus this function may be chosen so that $|\gamma(s)|+|\gamma'(s)|=O(\rho^2)$ for all $s$.

We will now estimate the length
\begin{equation}
L(\Psi(s))=\int_{0}^{1}\sqrt{G_{AB}\dot{\Psi}^A \dot{\Psi}^B}ds,
\end{equation}
where $G$ is the symmetric space metric given by \eqref{themetric} and $\dot{\Psi}=\partial_s \Psi$. The two terms of $G$ involving $dF$ remain uniformly bounded independent of $s$ since both solutions have the same rod structure, see the proof of \cite[Theorem 5]{HollandsYazadjiev}. In particular, observe that if $\lambda(s)$ and $\hat{\lambda}(s)$ are the eigenvalues of $F(s)$, then near $\Gamma_l$ we have the approximate diagonalization
\begin{equation}\label{diaggg}
F(s)=\lambda(s) v_l v_l^T+\hat{\lambda}(s)\hat{v}_l \hat{v}_l^T+O(\rho^2),\quad\quad
F^{-1}(s)=\lambda(s)^{-1} v_l v_l^T+\hat{\lambda}(s)^{-1}\hat{v}_l \hat{v}_l^T+O(\rho^2),
\end{equation}
where the eigenvalues are positive away from the axis with $\lambda(s)\sim\rho^2$ and $\hat{\lambda}(s)\sim 1$ away from corner points.
It follows that
\begin{equation}\label{FFFF}
F^{-1}\dot{F}=\lambda(s)^{-1}(\lambda^1-\lambda^2)v_l v_l^T
+\hat{\lambda}(s)^{-1}(\hat{\lambda}^1-\hat{\lambda}^2)\hat{v}_l \hat{v}_l^T+O(1)=O(1),
\end{equation}
showing that the first two terms of \eqref{themetric} possess the desired behavior. To proceed, write
\begin{equation}\label{first}
\psi=(\psi\cdot v_l)v_l+(\psi\cdot \hat{v}_l)\hat{v}_l,
\end{equation}
and use \eqref{differenceequation2} to find that the last term of \eqref{themetric} is controlled
\begin{equation}
\dot{\psi}^T F^{-1} \dot{\psi}
=\lambda^{-1}[(\psi^1-\psi^2)\cdot v_l]^2+\hat{\lambda}^{-1}[(\psi^1 -\psi^2)\cdot
\hat{v}_l]^2 +O(\rho^2) =O(1).
\end{equation}
Similar considerations show that
\begin{equation}
f^{-1}\Theta^T F^{-1}\Theta=f^{-1}\lambda^{-1}[\Theta\cdot v_l]^2+f^{-1}\hat{\lambda}^{-1}[\Theta\cdot
\hat{v}_l]^2 +O(\rho^2),
\end{equation}
where according to \eqref{nbnbnbm} and \eqref{differenceequation1}, \eqref{differenceequation2}
\begin{align}
\begin{split}
\Theta\cdot v_l=&
\partial_s\left[(\zeta\cdot v_l)-\frac{2}{3\sqrt{3}}(\psi\cdot v_l)^2(\psi\cdot
\hat{v}_l)\right]+O(\rho^2)\\
=&
(\zeta^1-\zeta^2)\cdot v_l-\frac{2(c_l)^2}{3\sqrt{3}}(\psi^1-\psi^2)\cdot\hat{v}_l
+O(\rho^2)\\
=&O(\rho^2),
\end{split}
\end{align}
and with the help of \eqref{plmk}
\begin{align}
\begin{split}
\Theta\cdot \hat{v}_l=&
\partial_s\left[(\zeta\cdot \hat{v}_l)-\frac{1}{3\sqrt{3}}(\psi\cdot v_l)(\psi\cdot\hat{v}_l)^2\right]+O(\rho^2)\\
=&
O(\rho^2).
\end{split}
\end{align}
In light of the fact that $f(s)\sim \rho^2$, we then have bounds for the third term of \eqref{themetric}, that is
\begin{equation}\label{last}
f^{-1}\Theta^T F^{-1}\Theta=O(1).
\end{equation}
Analogous arguments yield $f^{-1}\Upsilon^2=O(1)$, and consequently $L(\Psi(s))=O(1)$. Therefore, the distance $d(\Psi_1,\Psi_2)$ is bounded in a neighborhood of the interior of axis rods.

Consider now region (4) consisting of a neighborhood of the poles, which are intersections of a horizon rod and an axis rod $\Gamma_l$. It may be assumed for the purposes of this argument that the pole in question lies at the origin in the $\rho z$-plane. Here we will follow closely the arguments of Hollands and Yazadjiev \cite[pgs. 668-69]{HollandsYazadjiev}, who treated the vacuum case. By redefining the torus fiber coordinates $(\phi^1,\phi^2)$ if necessary, we may assume without loss of generality that the rod structure $v_{l}^T =(1,0)$; we then also have $\hat{v}_l^T=(0,1)$. As explained in \cite[pg. 668]{HollandsYazadjiev}, associated with each solution are a set of coordinates $(R_1,Y_1)$ and $(R_2,Y_2)$ for the orbit space $\hat{M}^2=M^5/[U(1)^2 \times \mathbb{R}]$ near the pole such that
\begin{equation}
 F^{\mathbf{j}}=
  \left( {\begin{array}{cc}
   R_{\mathbf{j}}^2(1+O(R_{\mathbf{j}}^2)) & R_{\mathbf{j}}^2 O(1) \\
    R_{\mathbf{j}}^2 O(1) & \mathbf{d} +O(R_{\mathbf{j}}^2+Y_{\mathbf{j}}^2) \\
  \end{array} } \right),
\end{equation}
and
\begin{equation}\label{8900}
\zeta^{\mathbf{j}}\cdot v_l-\frac{2}{3\sqrt{3}}(\psi^{\mathbf{j}}\cdot v_l)^2(\psi^{\mathbf{j}}\cdot
\hat{v}_l)=a_l+O(R_{\mathbf{j}}^2),\quad\quad
\zeta^{\mathbf{j}}\cdot \hat{v}_l-\frac{1}{3\sqrt{3}}(\psi^{\mathbf{j}}\cdot v_l)(\psi^{\mathbf{j}}\cdot\hat{v}_l)^2
=\hat{a}_l+O(R_{\mathbf{j}}^2),
\end{equation}
\begin{equation}\label{89001}
\chi^{\mathbf{j}}+\frac{1}{\sqrt{3}}(\psi^{\mathbf{j}}\cdot v_l)(\psi^{\mathbf{j}}\cdot \hat{v}_l)=b_l +O(R_{\mathbf{j}}^2),\quad\quad
v_l\cdot\psi^{\mathbf{j}}=c_l +O(R_{\mathbf{j}}^2),
\end{equation}
for $\mathbf{j}=1,2$ where $\mathbf{d}$ is a positive number. The new coordinates satisfy the properties that $R_{\mathbf{j}}\geq 0$, $R_{\mathbf{j}}=0$ corresponds to the axis $\Gamma_l$, and $R_{\mathbf{j}}(0)=Y_{\mathbf{j}}(0)=0$. In analogy with \eqref{diaggg},
near the pole we then have
\begin{equation}
F(s)=\lambda(s) v_l v_l^T+\hat{\lambda}(s)\hat{v}_l \hat{v}_l^T+O(R_1^2+R_2^2),\quad\quad
F^{-1}(s)=\lambda(s)^{-1} v_l v_l^T+\hat{\lambda}(s)^{-1}\hat{v}_l \hat{v}_l^T+O(1),
\end{equation}
where
\begin{equation}
\lambda(s)\sim s R_1^2 +(1-s) R_2^2,\quad\quad\quad\quad \hat{\lambda}(s)\sim \mathbf{d}.
\end{equation}
It is shown in \cite[pg. 669]{HollandsYazadjiev} that the two sets of coordinate functions $(R_{\mathbf{j}},Y_{\mathbf{j}})$, $\mathbf{j}=1,2$ are asymptotic to one another, and therefore as in \eqref{FFFF} we find that $F^{-1}(s)\dot{F}(s)=O(1)$. This ensures that the first two terms of \eqref{themetric} are appropriately controlled near poles. Furthermore,
in light of \eqref{8900}, \eqref{89001} the difference equations
\eqref{differenceequation1}, \eqref{differenceequation2} remain valid here with $O(\rho^2)$ replaced by $O(R_1^2 +R_2^2)$. It follows that we may imitate the arguments of \eqref{first}-\eqref{last} to establish boundedness of the remaining terms of \eqref{themetric}, so that $L(\Psi(s))=O(1)$. Hence, the distance $d(\Psi_1,\Psi_2)$ is bounded in a neighborhood of poles.

It remains to consider region (5), consisting of a neighborhood of the corner points where two axis rods intersect. This, however, may be treated in an analogous way to region (4). Namely, following
\cite[pg. 669]{HollandsYazadjiev}, new coordinates may be introduced in a neighborhood of a corner point, which elucidate the degeneracy present in the matrices $F^{\mathbf{j}}$. From there, as in case (4), the procedure given in case (1) may be employed to conclude that $L(\Psi(s))=O(1)$. Therefore, the distance $d(\Psi_1,\Psi_2)$ is bounded globally.

\begin{figure}[h]
	\tikzset{middlearrow/.style={
			decoration={markings,
				mark= at position 0.25 with {\arrow{#1}} ,
			},
			postaction={decorate}
		}
	}	
	\begin{tikzpicture}[scale=.6, every node/.style={scale=0.8}]
	\draw[<-](-12,0)--(-4.6,0);
	\draw[](-3.5,0)--(1,0);
	 \draw[](2.5,0)--(3.3,0)node[black,font=\large,below=.1cm]{$\Gamma_{l_4}$}--(4,0);
	\draw[->](5.5,0)--(11,0)node[black,font=\large,right=.1cm]{$z-$axis};
	\fill[black] (-9,0) circle [radius=.1]node[black,font=\large,below=.1cm]{$z_1$};
	\fill[black] (-6.5,0) circle [radius=.12];
	\node at (-6.2,-.5){$z_2 \!=\!z_{l_1-1}$};
	\fill[black] (-4.6,0) circle [radius=.12];
	\fill[black] (-3.5,0) circle [radius=.12];
	\fill[black] (-2,0) circle [radius=.12];
	\fill[black] (-.5,0) circle [radius=.12] node[black,font=\large,below=.1cm]{$z_{l_3}$};
	\fill[black] (1,0) circle [radius=.12];
	\fill[black] (2.5,0) circle [radius=.12];
	\fill[black] (4,0) circle [radius=.12];
	\fill[black] (5.5,0) circle [radius=.12];
	\draw[black,middlearrow={<}] (-6.5,0) arc (0:180:1.25)node[black,font=\large,midway,above=.1cm]{$v_1$} ;
	\draw[black,middlearrow={<}] (-2,0) arc (0:180:2.25)node[black,font=\large,midway,above=.1cm]{$v_1$} ;
	\draw[black,middlearrow={<}] (-.5,0) arc (0:180:.75)node[black,font=\large,midway,above=.1cm]{$v_1$} ;
	\draw[black,middlearrow={<}] (3,0) arc (0:180:1.75)node[black,font=\large,midway,above=.1cm]{$v_{l_4}$} ;
	\draw[black,middlearrow={>}] (7,0) arc (0:180:3.75)node[black,font=\large,midway,above=.1cm]{$v_{L+1}$} ;
	\draw[black,middlearrow={>}] (7,0) arc (0:180:6.75)node[black,font=\large,midway,above=.1cm]{$v_{L+1}$} ;
	\end{tikzpicture}
	\caption{Rod Diagram and Dipole Charges}\label{figure1}
\end{figure}

It remains to show that the constants $a_l^{\mathbf{j}}$, $b_l^{\mathbf{j}}$, and $c_l^{\mathbf{j}}$ are independent of $\mathbf{j}$. Let $l=1,\ldots,L+1$ enumerate the entire sequence of rods along the $z$-axis as in \eqref{rods}. There are $m$ horizon rods and $n$ axis rods.
We begin by showing that $c_l^1=c_l^2$ by demonstrating that these constants are uniquely determined by knowledge of $n-2$ dipole charges. Recall that these constants determine the magnetic potential in the
rod structure direction along an axis rod $\Gamma_l=[z_l,z_{l-1}]$, namely $v_l\cdot\psi^{\mathbf{j}}=c_l^{\mathbf{j}}$. Since the potential $\psi$ has two components it is defined up to the choice of two integration constants, which we choose to obtain $c_1^{\mathbf{j}}=c_{L+1}^{\mathbf{j}}=0$. Suppose first that all rod structures are pairwise linearly independent with $v_1$. Take a semi-circle in the $\rho z$-half plane orbit space emanating from the left most corner point $z_1$ to the next corner point $z_2$, see Figure \ref{figure1}. Together with the orbit $S^1$ associated with
$v_1^i\partial_{\phi^i}$, this semi-circle represents an $S^2$ bubble having dipole charge
\begin{equation}
\mathcal{D}_2(v_1)=v_1\cdot\left(\psi^{\mathbf{j}}(z_2)-\psi^{\mathbf{j}}(z_1)\right).
\end{equation}
Therefore, knowledge of $\mathcal{D}_2(v_1)$ and $v_1\cdot\psi^{\mathbf{j}}(z_1)=c_1^{\mathbf{j}}=0$ gives rise to knowledge of $v_1\cdot\psi^{\mathbf{j}}(z_2)$. Next, take a semi-circle connecting $z_2$ to the next corner point $z_3$. Then in the same way, knowledge of $\mathcal{D}_3(v_1)$ and $v_1\cdot\psi^{\mathbf{j}}(z_2)$ yields knowledge of $v_1\cdot\psi^{\mathbf{j}}(z_3)$. Continue this process down the $z$-axis until reaching a horizon rod or the last corner point. We then have determined $v_1\cdot\psi^{\mathbf{j}}(z_l)$, $l=1,\ldots, l_1 -1$ where
$\Gamma_{l_1 +1}=[z_{l_1 +1},z_{l_1}]$ is the first horizon rod. Now extend a semi-circle emanating from $z_{l_1 -1}$ to the next corner point $z_{l_2}$; note that this may require jumping over more than one horizon rod. This semi-circle has a $v_1$-dipole charge associated with it, and its value together with $v_1\cdot\psi^{\mathbf{j}}(z_{l_1 -1})$ determines $v_1\cdot\psi^{\mathbf{j}}(z_{l_2})$. We may proceed in this way, down to the last corner point, to obtain the $v_1$-direction of $\psi$ at all corner points from knowledge of $v_1$-dipole charges.

The next step involves a similar process going from the right of the $z$-axis leftwards. Consider the semi-infinite rod $\Gamma_{L+1}$ at the right of the $z$-axis. If this is part of a larger connected sequence of axis rods, then $z_L$ is a corner point. By the set up $v_{L+1}\cdot\psi^{\mathbf{j}}(z_{L})=c_{L+1}^{\mathbf{j}}=0$, and by the above process the value of $v_1\cdot\psi^{\mathbf{j}}(z_{L})$ is determined. Thus, since $v_1$ and $v_{L+1}$ are linearly independent, we know the whole vector $\psi^{\mathbf{j}}(z_L)$. It follows that $c_{L}^{\mathbf{j}}=v_{L}\cdot \psi^{\mathbf{j}}$ is determined on $\Gamma_L$. Since $v_{L}$ is linearly independent with $v_1$, if $z_{L-1}$ is another corner point we may similarly
determine $c_{L-1}^{\mathbf{j}}$. In fact, this may be continued to obtain all $c_{l}^{\mathbf{j}}$ associated with this connected sequence of axis rods. Take now
a semi-circle emanating from  the bottom axis rod $\Gamma_{L+1}$ and ending on
a corner point $z_{l_3}$ of another connected sequence of axis rods. Knowledge of the $v_{L+1}$-dipole charge affiliated with this semi-circle then gives knowledge
of $v_{L+1}\cdot\psi^{\mathbf{j}}(z_{l_3})$, from which we may determine the whole vector $\psi^{\mathbf{j}}(z_{l_3})$ since $v_{1}\cdot\psi^{\mathbf{j}}(z_{l_3})$ is already known. As before this yields all constants $c_{l}^{\mathbf{j}}$ inherent to this sequence.

The above process uniquely determines all constants $c_{l}^{\mathbf{j}}$ from knowledge of dipole charges, except those arising from axis rods which are bordered
by two horizon rods. Consider such an axis rod $\Gamma_{l_4}$, and take a semi-circle connecting it to a corner point $z_{l_5}$ (in Figure \ref{figure1}, $z_{l_5}=z_{l_3}$). Then knowledge of the $v_{l_4}$-dipole charge coming from this semi-circle determines $c_{l_4}^{\mathbf{j}}$, since the whole vector $\psi^{\mathbf{j}}(z_{l_5})$ has previously been determined. It should be noted that the corner point $z_{l_5}$ may be the `corner at infinity' if no proper corner points are present. Moreover, this algorithm was carried out with the initial assumption that all rod structures are pairwise linearly independent with $v_1$. However, straightforward modifications may be made in the case that this assumption is not valid. Finally, simple bookkeeping shows that a total of $n-2$ dipole charges are used to fully determine the constants and show that $c_{l}^{1}=c_{l}^{2}$.

We now treat the constants $b_{l}^{\mathbf{j}}$ and show that they are uniquely
determined by dipole charges and the electric charge of each horizon component. As above we choose a gauge in which $c_1^{\mathbf{j}}=c_{L+1}^{\mathbf{j}}=0$. This implies, by \eqref{chiformula}, that $\chi^{\mathbf{j}}(\Gamma_1)=b_1^{\mathbf{j}}$
and $\chi^{\mathbf{j}}(\Gamma_{L+1})=b_{L+1}^{\mathbf{j}}$. The potential $\chi$ is defined up to a constant, and by an appropriate choice of this constant we obtain
$\chi^{\mathbf{j}}(\Gamma_1)=b_1^{\mathbf{j}}=0$. Since the total electric charge (evaluated at infinity) is expressed in terms of the difference $\chi^{\mathbf{j}}(\Gamma_1)-\chi^{\mathbf{j}}(\Gamma_{L+1})$, which in turn is given as a sum of dipole charges and electric charges of horizon components, the value $b_{L+1}^{\mathbf{j}}$ is determined. We now proceed from top to bottom in a step by step fashion along the $z$-axis. Observe that from \eqref{chiformula}
\begin{equation}
b_2^{\mathbf{j}}
=\chi^{\mathbf{j}}(z_1)+\frac{c_{2}}{\sqrt{3}}\hat{v}_2\cdot\psi^{\mathbf{j}}(z_1)
=\frac{c_{2}}{\sqrt{3}}\hat{v}_2\cdot\psi^{\mathbf{j}}(z_1),
\end{equation}
and the right-hand side is fully determined by dipole charges. With $b_2^{\mathbf{j}}$ in hand we then know the value of
\begin{equation}
\chi^{\mathbf{j}}(z_2)=b_2^{\mathbf{j}}
-\frac{c_{2}}{\sqrt{3}}\hat{v}_2\cdot\psi^{\mathbf{j}}(z_2)
\end{equation}
in terms of dipole charges. Continuing in this way down the axis, the values of $b_l^{\mathbf{j}}$ and the values of $\chi^{\mathbf{j}}$ at corner points before the first horizon rod $\Gamma_{l_1 +1}$ are known. Since the charge of this horizon component is given in terms of the difference $b_{l_1 +2}^{\mathbf{j}}-b_{l_1}^{\mathbf{j}}$, it follows that $b_{l_1 +2}^{\mathbf{j}}$ is determined. This process may now be repeated until all the constants are found in terms of the fixed dipole and electric charges. We then have $b_{l}^1=b_{l}^2$.

Finally the angular momentum constants may be treated analogously to those of electric charge, so that they are uniquely determined by the angular momenta of the horizon components and dipole charges resulting in $a_{l}^{1}=a_{l}^{2}$.

\appendix
\section{Conical Singularities and Geometric Regularity}
\label{sec9}
\renewcommand{\theequation}{A.\arabic{equation}}
\setcounter{equation}{0}

There are two possible regularity issues that can arise when constructing the spacetime $(M^5,\mathbf{g})$ with metric
\begin{equation}\label{fjhjhgiikri}
\mathbf{g}=-f^{-1}\rho^2 dt^2+e^{2\alpha}(d\rho^2+dz^2)
+f_{ij}(d\phi^{i}+\omega^{i}dt)(d\phi^{j}+\omega^{j}dt),\quad\quad \alpha =\sigma-\tfrac{1}{2}\log f,
\end{equation}
from the harmonic map $\varphi:\mathbb{R}^3 \setminus\Gamma\rightarrow G_{2(2)}/SO(4)$. More precisely, these are the questions of \textit{geometric regularity} and \textit{analytic regularity}. Geometric regularity concerns the ability to smoothly extend the spacetime metric across the rods and is related to the
potential presence of conical singularities, while analytic regularity concerns the differentiability properties of the harmonic map up to the orbit space boundary after the singular part has been removed.  We will not treat the issue of analytic regularity here, and note that it has only relatively recently been established for the Einstein-Maxwell equations in the classical 4D setting by Nguyen \cite{N}, for the interior of axis rods. In this setting, the question of analytic regularity at poles is apparently still open. On the other hand, the 4D vacuum case was treated independently by Li-Tian \cite{LT,LT93} and Weinstein \cite{Wei92}. In this appendix we will show that, assuming a minimal amount of analytic regularity for the harmonic maps constructed above, the question of geometric regularity is resolved precisely when the axes are devoid of conical singularities. 

To begin, we will examine geometric regularity at the interior of a horizon rod $\Gamma_h$. Note that since the harmonic map remains bounded at such points, equation \eqref{fjhkjhhw} shows that  $\Omega^i=-\omega^i|_{\Gamma_h}$, $i=1,2$ are constant on the horizon and therefore the Killing field $\partial_t+\Omega^i \partial_{\phi^i}$ becomes null there, making each such rod represent a Killing horizon. By using the change of coordinates $\tilde{t}=t$, $\tilde{\phi}^i=\phi^i -\Omega^i t$ (where $\phi^i$ are viewed as coordinates on the universal cover $\mathbb{R}^2$ of $T^2$), the metric becomes
\begin{equation}
\mathbf{g}=-f^{-1}\rho^2 d\tilde{t}^2+e^{2\alpha}(d\rho^2+dz^2)
+f_{ij}(d\tilde{\phi}^{i}+\tilde{\omega}^{i}d\tilde{t})(d\tilde{\phi}^{j}+\tilde{\omega}^{j}d\tilde{t}),\quad\quad \alpha =\sigma-\tfrac{1}{2}\log f,
\end{equation}
with $\tilde{\omega}^i=\omega^i+\Omega^i=\rho^2 \bar{\omega}^i$ for some regular $\bar{\omega}^i$.
Furthermore, observe that from \eqref{PDEsigma} and \eqref{themetric} the following quadrature equations hold globally
\begin{align}\label{alphaeq}
\begin{split}
\alpha_\rho =&
\frac\rho8 \left[ (\log f)_\rho^2 - (\log f)_z^2 + \tr F^{-1}F_\rho F^{-1}F_\rho
- \tr F^{-1}F_z F^{-1}F_z
- 4\rho^{-1} (\log f)_\rho \right.\\
&\left.
+ 2f^{-1} \Theta^{T}_{\rho}F^{-1}\Theta_{\rho} -2f^{-1} \Theta^{T}_{z}F^{-1}\Theta_{z}
+2f^{-1}(\Upsilon^2_\rho -\Upsilon^2_z)+2\psi_{\rho}^{T} F^{-1}\psi_{\rho}-2\psi_{z}^{T} F^{-1} \psi_{z}
\right], \\
\alpha_z =& \frac\rho4 \left[ (\log f)_\rho(\log f)_z + \tr F^{-1}F_\rho F^{-1}F_z -2\rho^{-1} (\log f)_z  + 2f^{-1} \Theta^{T}_{\rho}F^{-1}\Theta_{z} \right.\\
&\left.
+2f^{-1}\Upsilon_\rho \Upsilon_z +2\psi_{\rho}^{T} F^{-1}\psi_{z}\right].
\end{split}
\end{align}
Thus, integrating \eqref{alphaeq} produces $\alpha=-\frac{1}{2}\log f-\log c +\bar{\alpha}$ near $\Gamma_h$ for a constant $c>0$ which is related to the horizon surface gravity \cite[Appendix]{HollandsYazadjiev}, and a function $\bar{\alpha}=O(\rho^2)$. It follows that
\begin{equation}
- f^{-1}\rho^2 d\tilde{t}^2 +e^{2\alpha}(d\rho^2 +dz^2)=c^{-2}f^{-1}e^{2\bar{\alpha}}\left(-c^2\rho^2 d\tilde{t}^2 +d\rho^2 +dz^2\right)+O(\rho^4) d\tilde{t}^2.
\end{equation}
Introducing Kruskal-type coordinates $X,Y>0$ given by $XY=\rho^2$, $X/Y =e^{2c\tilde{t}}$ then yields
\begin{align}
\begin{split}
\mathbf{g}=&c^{-2}f^{-1}e^{2\bar{\alpha}}\left(dX dY +dz^2 \right)+f_{ij}\left(d\tilde{\phi}^{i}+
\tfrac{\bar{\omega}^{i}}{2c}(YdX-XdY)\right)
\left(d\tilde{\phi}^{j}+\tfrac{\bar{\omega}^{i}}{2c}(YdX-XdY)\right)\\
&+O(1)(YdX-XdY)^2 ,
\end{split}
\end{align}
near the interior of $\Gamma_h$. Therefore, the geometry is regular across the interior of horizon rods without the need to balance certain parameters. 

Regularity of the solution at axis rods, corners, and poles, however, does rely on the balancing of parameters to relieve geometric singularities.  Consider first a small neighborhood $\mathcal{V}_a\subset \hat{M}^2$ in the orbit space of an interior point to an axis rod $\Gamma_l$, which does not intersect the endpoints of $\Gamma_l$. We may assume without loss of generality that the rod structure of $\Gamma_l$ is $(1,0)$. Assuming analytic regularity yields expansions of the harmonic map variables in $\rho$. In particular
\begin{equation}
F=
\left( {\begin{array}{cc}
e^u & O(\rho^2) \\
O(\rho^2) & e^{v} \\
\end{array} } \right)
\end{equation}
where $u=2\log\rho +\bar{u}$ in which $\bar{u}$ and $v$ are regular function, and as in \eqref{2309i}, \eqref{2309i1} the potentials satisfy
\begin{equation}\label{2309ij}
\zeta_1 +\frac{1}{3\sqrt{3}}\psi_{1}^2 \psi_2 +\psi_1 (\chi-b_l)=a_l +O(\rho^4),\quad\quad\quad
\zeta_2 -\frac{1}{3\sqrt{3}}\psi_1 \psi_2^2 
=\hat{a}_l +O(\rho^2),
\end{equation}
\begin{equation}\label{2309i1j}
\chi+\frac{1}{\sqrt{3}}\psi_1 \psi_2=b_l +O(\rho^2),\quad\quad\quad
\psi_1=c_l +O(\rho^2),
\end{equation}
for some potential constants $a_l$, $\hat{a}_l$, $b_l$, and $c_l$. This implies that the potential expressions within the brackets of \eqref{alphaeq} are bounded. It follows that
\begin{equation}
\alpha_{\rho}=\frac{1}{2}\bar{u}_{\rho}+O(\rho),\quad\quad\quad
\alpha_z =\frac{1}{2}\bar{u}_{z}+O(\rho^2),
\end{equation}
and hence $\alpha=\frac{1}{2}\bar{u}+c+O(\rho^2)$ in $\mathcal{V}_a$, for some constant $c$. The spacetime metric over $\mathcal{V}_a$ may then be written as
\begin{equation}\label{alofkihyaf}
\mathbf{g}=-e^{O(1)}dt^2 +e^{\bar{u}+2c+O(\rho^2)}(d\rho^2 +dz^2) +e^{u} (d\phi^1)^2 
+O(\rho^2)d\phi^1 d\phi^2 +e^{v} (d\phi^2)^2.
\end{equation}
Clearly, the absence of a conical singularity on $\Gamma_l$ is equivalent to $c=0$. In this case, we may make the change of coordinates $x=\rho \cos\phi^1$, $y=\rho \sin\phi^1$ to find
\begin{align}
\begin{split}
e^{\bar{u}+O(\rho^2)}d\rho^2 +e^{u} (d\phi^1)^2
=&e^{\bar{u}}\left(dx^2 +dy^2\right)
+O(1)\left(xdx+ydy\right)^2,\\
O(\rho^2)d\phi^1 d\phi^2=&O(1)\left(xdy-ydx\right)d\phi^2.
\end{split}
\end{align}
Therefore, the metric \eqref{alofkihyaf} is geometrically regular across the interior of the axis rod $\Gamma_l$.

Next let $\mathcal{V}_c\subset\hat{M}^2$ be a neighborhood of a corner, which separates two axis rods $\Gamma_1$ to the north and $\Gamma_2$ to the south. By performing a coordinate change in the torus fibers if necessary, we may assume without loss of generality that $\Gamma_1$, $\Gamma_2$ have rod structures  $(1,0)$, $(0,1)$ respectively. In addition, the origin of the orbit space coordinates may be taken to be the corner point intersection of these two axis rods, and the Euclidean distance to the origin will be denoted $r=\sqrt{\rho^2 +z^2}$. Then assuming analytic regularity and a sufficiently small $\mathcal{V}_c$, there exist regular functions $\bar{u}$, $\bar{v}$ such that $u=\log(r-z)+\bar u$, $v=\log(r+z)+\bar{v}$ contribute to the expansion
\begin{equation}
F=
\left( {\begin{array}{cc}
e^u & O(\rho^2) \\
O(\rho^2) & e^{v} \\
\end{array} } \right),
\end{equation}
and as in \eqref{2309i}, \eqref{2309i1} the potentials satisfy
\begin{equation}\label{2309ij1}
\zeta_1 +\frac{1}{3\sqrt{3}}\psi_{1}^2 \psi_2 +\psi_1 (\chi-b)=a +O(|r-z|^2),\quad\quad\quad
\zeta_2 -\frac{1}{3\sqrt{3}}\psi_1 \psi_2^2 
=\hat{a} +O(|r-z|),
\end{equation}
\begin{equation}\label{2309ij1'}
\zeta_2 +\frac{1}{3\sqrt{3}}\psi_{1} \psi_2^2 +\psi_2 (\chi-b)=\hat{a}-\frac{c_1 c_2^2}{\sqrt{3}} +O(|r+z|^2),\quad\quad\quad
\zeta_1 -\frac{1}{3\sqrt{3}}\psi_1^2 \psi_2 
=a+\frac{c_1^2 c_2}{\sqrt{3}}+O(|r+z|),
\end{equation}
\begin{equation}\label{2309i1j1}
\chi+\frac{1}{\sqrt{3}}\psi_1 \psi_2=b +O(\rho^2),\quad\quad
\psi_1=c_1 +O(|r-z|),\quad\quad \psi_2=c_2 +O(|r+z|),
\end{equation}
for some potential constants $a$, $\hat{a}$, $b$, $c_1$, and $c_2$. This implies that the potential expressions within the brackets of \eqref{alphaeq} are bounded. It follows that
\begin{align}
\begin{split}
\alpha_{\rho}=&-\frac{\rho}{2r^2}+\frac{(r+z)}{4r}\bar{u}_{\rho}
+\frac{(r-z)}{4r}\bar{v}_{\rho}+\frac{\rho}{4r}(\bar{u}_z -\bar{v}_z)+O(\rho),\\
\alpha_z =&-\frac{z}{2r^2}+\frac{(r+z)}{4r}\bar{u}_{z}
+\frac{(r-z)}{4r}\bar{v}_{z}-\frac{\rho}{4r}(\bar{u}_{\rho} -\bar{v}_{\rho})+O(\rho^2).
\end{split}
\end{align}
On $\mathcal{V}_c$ we then have
\begin{equation}
\alpha=-\frac{1}{2}\log r +\frac{(r+z)}{4r}(\bar{u}-\bar{u}(0))+\frac{(r-z)}{4r}(\bar{v}-\bar{v}(0))+c +O(\rho^2),
\end{equation}
for some constant $c$ and where $\bar{u}(0)$, $\bar{v}(0)$ denote these functions evaluated at the corner.  The spacetime metric over $\mathcal{V}_a$ may then be written as
\begin{align}
\begin{split}
\mathbf{g}=&-e^{O(1)}dt^2 +(r-z) e^{\bar{u}} (d\phi^1)^2 +(r+z)e^{\bar{v}} (d\phi^2)^2 +O(\rho^2)d\phi^1 d\phi^2\\
&+r^{-1}\exp\left(2c+\frac{(r+z)}{2r}(\bar{u}-\bar{u}(0))
+\frac{(r-z)}{2r}(\bar{v}-\bar{v}(0))+O(\rho^2)\right)(d\rho^2 +dz^2).
\end{split}
\end{align}
The absence of conical singularities on the two axis rods $\Gamma_1$, $\Gamma_2$ implies that $\bar{u}(0)=\bar{v}(0)$ and $2 e^{2c}=e^{\bar{u}(0)}=e^{\bar{v}(0)}$. Furthermore, note 
that the geometric angle at the pole between the rods is $\pi/2$, and not $\pi$ as it is in the orbit space. This motivates the change to new coordinates $\xi,\eta\geq 0$ given by $z+i\rho=\frac{1}{2}(\xi+i\eta)^2$, or equivalently $\rho=\xi\eta$, $z=\frac{1}{2}\left(\xi^2-\eta^2\right)$, in which the metric takes the form
\begin{align}
\begin{split}
\mathbf{g}=&-e^{O(1)} dt^2 +\eta^2 e^{\bar{u}} (d\phi^1)^2 +\xi^2 e^{\bar{v}} (d\phi^2)^2
+O(\xi^2 \eta^2)d\phi^1 d\phi^2\\
&+ \exp\left(\bar{u}(0)+\frac{\xi^2}{\xi^2+\eta^2}(\bar{u}-\bar{u}(0))+
\frac{\eta^2}{\xi^2 +\eta^2}(\bar{v}-\bar{v}(0))+O\left(\xi^2 \eta^2\right)\right)(d\xi^2 +d\eta^2).
\end{split}
\end{align}
Now define two pairs of Cartesian coordinates $x_1=\eta \cos\phi^1$, $y_1=\eta \sin\phi^1$, and $x_2=\xi \cos\phi^2$, $y_2=\xi \sin\phi^2$ with
\begin{equation}
d\eta^2 +\eta^2 (d\phi^1)^2=dx_1^2 +dy_1^2, \quad\quad \eta^2 d\eta^2=(x_1 dx_1+y_1dy_1)^2,
\end{equation}
\begin{equation}
d\xi^2 +\xi^2 (d\phi^2)^2=dx_2^2 +dy_2^2, \quad\quad \xi^2 d\xi^2=(x_2 dx_2+y_2dy_2)^2.
\end{equation}
We then have
\begin{align}
\begin{split}
\mathbf{g}=& -e^{O(1)}dt^2 +e^{\bar{u}}\left(\eta^2 (d\phi^1)^2 +d\eta^2
+O\left(\frac{\bar{u}-\bar{v}-\bar{u}(0)+\bar{v}(0)}{\xi^2 +\eta^2}+\xi^2\right) \eta^2 d\eta^2\right)\\
&+e^{\bar{v}}\left(\xi^2 (d\phi^2)^2 +d\xi^2+ O\left(\frac{\bar{u}-\bar{v}-\bar{u}(0)+\bar{v}(0)}{\xi^2 +\eta^2}+\eta^2\right) \xi^2 d\xi^2\right)+O(\xi^2 \eta^2)d\phi^1 d\phi^2\\
=&-e^{O(1)}dt^2 +e^{\bar{u}}\left(dx_1^2 +dy_1^2 +O\left(\bar{r}^2\right)(x_1 dx_1 +y_1 dy_1)^2 \right)\\
&+e^{\bar{v}}\left(dx_2^2 +dy_2^2 +O\left(\bar{r}^2\right)(x_2 dx_2 +y_2 dy_2)^2 \right)
+O(1)\left(y_1 dx_1 -x_1 dy_1\right)\left(y_2 dx_2 -x_2 dy_2\right),
\end{split}
\end{align}
where $\bar{r}^2=\sum (x_i^2 +y_i^2)$ and the regularity of $\bar{u}-\bar{v}$ at the corner has been used. Therefore, the metric \eqref{fjhjhgiikri} is geometrically regular across corner points.

Finally, consider a neighborhood $\mathcal{V}_p\subset\hat{M}^2$ of a pole, separating an axis rod $\Gamma_l$ to the north (without loss of generality) having rod structure $(1,0)$, and a horizon rod $\Gamma_h$ to the south. Similarly to the corner case, we will take the origin of the coordinate system to be centered at the pole. Then assuming analytic regularity and a sufficiently small $\mathcal{V}_p$, there exist regular functions $\bar{u}$ and $v$ with $u=\log(r-z)+\bar u$ such that
\begin{equation}
F=
\left( {\begin{array}{cc}
e^u & O(|r-z|) \\
O(|r-z|) & e^{v} \\
\end{array} } \right),
\end{equation}
and as in \eqref{2309i}, \eqref{2309i1} the potentials satisfy
\begin{equation}\label{2309ij10}
\zeta_1 +\frac{1}{3\sqrt{3}}\psi_{1}^2 \psi_2 +\psi_1 (\chi-b)=a_l +O(|r-z|^2),\quad\quad\quad
\zeta_2 -\frac{1}{3\sqrt{3}}\psi_1 \psi_2^2 
=\hat{a}_l +O(|r-z|),
\end{equation}
\begin{equation}\label{2309i1j10}
\chi+\frac{1}{\sqrt{3}}\psi_1 \psi_2=b_l +O(|r-z|),\quad\quad\quad
\psi_1=c_l +O(|r-z|),
\end{equation}
for some potential constants $a_l$, $\hat{a}_l$, $b_l$, and $c_l$. This implies that the potential expressions within the brackets of \eqref{alphaeq} are bounded. It follows that
\begin{align}
\begin{split}
\alpha_{\rho}=&-\frac{\rho}{2r^2}+\frac{z}{2r}\bar{u}_{\rho}
+\frac{(z-r)}{4r}\tilde{v}_{\rho}+\frac{\rho}{4r}(2\bar{u}_z +\tilde{v}_z)+O(\rho),\\
\alpha_z =&-\frac{z}{2r^2}+\frac{z}{2r}\bar{u}_{z}
+\frac{(z-r)}{4r}\tilde{v}_{z}-\frac{\rho}{4r}(2\bar{u}_{\rho} +\tilde{v}_{\rho})+O(\rho^2),
\end{split}
\end{align}
where $\tilde{v}=v+\tilde{f}$ and $e^{\tilde{f}}=e^{-u-v}f$. On $\mathcal{V}_p$ we then have
\begin{equation}
\alpha=-\frac{1}{2}\log r +\frac{z}{2r}(\bar{u}-\bar{u}(0))+\frac{(z-r)}{4r}(\tilde{v}-\tilde{v}(0))+c +O(\rho^2),
\end{equation}
for some constant $c$ and where $\bar{u}(0)$, $\tilde{v}(0)$ denote these functions evaluated at the pole.
The spacetime metric over $\mathcal{V}_p$ may then be written as
\begin{align}
\begin{split}
\mathbf{g}=&-\frac{\rho^2 e^{-\bar{u}-\tilde{v}}}{r-z}dt^2 +(r-z) e^{\bar{u}} (d\phi^1)^2 +e^v (d\phi^2)^2 +O(|r-z|)d\phi^1 d\phi^2\\
&+r^{-1}\exp\left(2c+\frac{z}{r}(\bar{u}-\bar{u}(0))
+\frac{(z-r)}{2r}(\tilde{v}-\tilde{v}(0))+O(\rho^2)\right)(d\rho^2 +dz^2).
\end{split}
\end{align}
The absence of a conical singularity on the axis rod implies that $2 e^{2c}=e^{\bar{u}(0)}$.
As above we are motivated to change to new coordinates $\xi,\eta\geq 0$ given by $z+i\rho=(\xi+i\eta)^2$, or equivalently $\rho=2\xi\eta$, $z=\xi^2-\eta^2$ in which the metric takes the form
\begin{align}
\begin{split}
\mathbf{g}=&-2\xi^2 e^{-\bar{u}-\tilde{v}} dt^2 +2\eta^2 e^{\bar{u}} (d\phi^1)^2 +e^v (d\phi^2)^2
+O(\eta^2)d\phi^1 d\phi^2\\
&+2 \exp\left(\bar{u}(0)+\frac{\xi^2(\bar{u}-\bar{u}(0))
-\eta^2(\bar{u}+\tilde{v}-\bar{u}(0)-\tilde{v}(0))}{\xi^2+\eta^2}
+O\left(\xi^2 \eta^2\right)\right)(d\xi^2 +d\eta^2).
\end{split}
\end{align}
Now define Cartesian coordinates $x=\eta \cos\phi^1$, $y=\eta \sin\phi^1$, and Kruskal-type coordinates $X,Y>0$ with $XY=\xi^2$, $X/Y=\exp\left(2te^{-\bar{u}(0)-\frac{1}{2}\tilde{v}(0)}\right)$ so that
\begin{equation}
d\eta^2 +\eta^2 (d\phi^1)^2=dx^2 +dy^2, \quad\quad \eta^2 d\eta^2=(x dx+ydy)^2,
\end{equation}
\begin{equation}
d\xi^2 -e^{-2\bar{u}(0)-\tilde{v}(0)}\xi^2 dt^2=dXdY,\quad\quad \xi^2 d\xi^2=\frac{1}{4}(YdX+XdY)^2.
\end{equation}
We then have
\begin{align}
\begin{split}
\mathbf{g}=& 2e^{-\bar{u}-\tilde{v}+2\bar{u}(0)+\tilde{v}(0)}\left(dX dY+O\left(\frac{2\bar{u}+\tilde{v}-2\bar{u}(0)-\tilde{v}(0)}{\xi^2 +\eta^2}+\eta^2\right)(YdX+XdY)^2 \right)\\
&+2e^{\bar{u}}\left(dx^2 +dy^2 +O\left(\frac{2\bar{u}+\tilde{v}-2\bar{u}(0)-\tilde{v}(0)}{\xi^2 +\eta^2}+\xi^2\right)(xdx +ydy)^2 \right)\\
&
+e^v (d\phi^2)^2 +O(1)(xdy-ydx)d\phi^2.
\end{split}
\end{align}
Since $2\bar{u}+\tilde{v}$ is a regular function at the pole we find that the error terms involving this quantity are regular. It follows that the metric \eqref{fjhjhgiikri} is geometrically regular across horizon poles.
 
A similar analysis may be used to show that the Maxwell field $\mathcal{F}$ of \eqref{fkiisjgjakng} is a regular 2-form on the spacetime, assuming analytic regularity of the harmonic map.

\section{The Coset Representative and Mazur Quantity}
\label{appendixb}

In this appendix we give an explicit expression for the positive definite unimodular coset representative $\Psi$ parameterizing the noncompact symmetric space target manifold $G_{2(2)}/SO(4)$, and use it to compute to the Mazur quantity as well as to establish positivities that are employed in Section \ref{sec7}. We will follow the presentation given in \cite{Kunduri:2011zr} which is adapted for a reduction of the supergravity equations with two spacelike Killing fields, rather than one timelike and one spacelike Killing field derived originally in~\cite{Bouchareb:2007ax}.  Let $\tilde{N}$ denote the set of $7\times 7$ positive definite unimodular matrices, then $\Psi: \mathbb{R}^3 \setminus \Gamma \to \tilde{N}$ is given by: 
\begin{equation}
\label{Psiexplicit}
\Psi= \left(
\begin{array}{ccc}  A & B & \sqrt{2} R \\
B^T & C & \sqrt{2} U \\
\sqrt{2} R^T & \sqrt{2} U^T & S \end{array} \right),
\end{equation} with inverse 
\begin{equation}
\label{Mexplicit1}
\Psi^{-1}= \left(
\begin{array}{ccc}  C & B^T & -\sqrt{2} U \\
B & A & -\sqrt{2} R\\
-\sqrt{2} U^T & -\sqrt{2} R^T & S \end{array} \right),
\end{equation}
where $A$, $C$ are symmetric $3\times 3$ matrices, $B$ is a $3\times 3$ matrix, $R$, $U$ are $3\times 1$ matrices and $S$ is a scalar. Upon setting $\chi = \sqrt{3}\mu$, $\psi_i = -\sqrt{3} \nu_i$, and $F = (f_{ij})$,  $f = \det F $ these submatrices may be expressed as
\begin{align}
\begin{split}
A =& \left( \begin{array}{cc}
(1+F^{-1} \mu^2)F + F^{-1} \tilde{\zeta} \tilde{\zeta}^T +(2+ \nu^T \nu)\nu \nu^T+\frac{\mu}{\sqrt{f}} (\nu \nu^T \mathbf{J} -\mathbf{J}\nu \nu^T ) & - f^{-1} \tilde{\zeta} \\
-f^{-1}\tilde{\zeta}^T & f^{-1} \end{array} \right) ,  \\
B =& \left(\begin{array}{cc} \nu \nu^T  - \frac{\mu}{\sqrt{f}}\mathbf{J} + \frac{1}{\sqrt{f}}\tilde{\zeta} \nu^T \mathbf{J} & \upsilon \\
-\frac{\nu^T \mathbf{J}}{\sqrt{f}} & \frac{\mu^2}{f} - \frac{\nu^T \mathbf{J} \tilde{\zeta}}{\sqrt{f}}
\end{array}\right),
\\
C =& \left( \begin{array}{cc}   (1+\nu^T \nu)F^{-1} -\nu \nu^T & \tilde{\zeta} +(\frac{\mu^2}{\sqrt{f}} - \nu^T \mathbf{J} \tilde{\zeta})\mathbf{J}\nu -\mu \nu \\  \tilde{\zeta}^T +(\frac{\mu^2}{\sqrt{f}} - \nu^T \mathbf{J}\tilde{\zeta})\mathbf{J}\nu^T -\mu \nu^T   & c
\end{array} \right),\\
U=& \left( \begin{array}{c} (F^{-1} - \frac{\mu}{\sqrt{f}}\mathbf{J})\nu \\ \nu^T F^{-1}\tilde{\zeta} - \mu[1 + \nu^T \nu + \frac{\mu^2}{f} - \frac{1}{\sqrt{f}}\nu^T \mathbf{J} \tilde{\zeta}] \end{array} \right), \\
R=& \left( \begin{array}{c} (1+\nu^T \nu)\nu - \frac{\mu}{\sqrt{f}}\mathbf{J}\nu + \frac{\mu}{f}\tilde{\zeta} \\
-\frac{\mu}{f} \end{array}\right),  \\
S =& 1 + 2(\nu^T \nu + f^{-1}\mu^2), 
\end{split}
\end{align}
where $\nu$ represents a column vector with components $\nu_i$ and similarly for $\zeta$ with
\begin{align}
\begin{split}
\tilde{\zeta}_i =&\zeta_i-\mu \nu_i , \\
\upsilon =& -\left(1 - \frac{\mu^2}{f}\right)\sqrt{f}\mathbf{J}\nu  - (2 + \nu^T \nu)\mu \nu + \nu^T F^{-1}\tilde{\zeta}\nu  + \left(-\frac{\mu^2}{f} + \frac{\nu^T \mathbf{J}\tilde{\zeta}}{\sqrt{f}}\right)\tilde{\zeta}  - \frac{\mu}{\sqrt{f}}\mathbf{J}\tilde{\zeta} ,\\
c =& \tilde{\zeta}^T\tilde{\zeta} - 2\mu \nu^T\tilde{\zeta} +f[1+ \nu^T \nu+(2+\nu^T \nu)f^{-1}\mu^2 +f^{-2}(\mu^2- \nu^T \mathbf{J} \tilde{\zeta} \sqrt{f} )^2],
\end{split}
\end{align} 
and
\begin{equation}
\mathbf{J}= \left(
\begin{array}{cc}  0 & 1 \\
-1 & 0  \end{array} \right).
\end{equation} 
For comparison, when the Maxwell field is set to zero ($\chi=\psi_i =0$), the above coset representative considerably simplifies to 
\begin{equation}
\Psi_{\text{vacuum}} = \begin{pmatrix} \Phi^{-1} & 0 & 0 \\ 0 & \Phi & 0 \\ 0 & 0 & 1 \end{pmatrix},
\end{equation} 
where $\Phi$ is the coset representative for the target space $SL(3, \mathbb{R}) / SO(3)$ used in \cite{KhuriWeinsteinYamada}.

We are interested in computing the {\it Mazur quantity} $\text{Tr} (\Psi_0^{-1} \Psi)$, where $\Psi$ and $\Psi_0$ are two coset representatives.  In order to carry out this computation, it is beneficial to express \eqref{Psiexplicit} as a product of simpler matrices. Such a decomposition is given by Clement~\cite{Clement:2007qy}, with the replacement $\tau \to -f$ to take into account the fact that we are reducing on two spacelike Killing fields. We will write 
\begin{equation}\label{Mazurquantity}
\Psi=\mathcal{V}_{\zeta}^T\mathcal{V}_{\mu}^T\mathcal{V}_{\nu}^T \mathfrak{F}\mathcal{V}_{\nu}\mathcal{V}_{\mu}\mathcal{V}_{\zeta},
\end{equation}
where $\mathcal{V}_{\nu}, \mathcal{V}_{\mu}, \mathcal{V}_{\zeta},$ and $\mathfrak{F}$ are the following unimodular $7\times 7$ matrices, which for notational convenience are expressed as $5 \times 5$ matrices with the first and third columns each represent two columns (e.g. the upper left minor in $\mathfrak{F}$ and $\mathcal{V}_\zeta$ are the $2\times 2$ matrices $F$ and the 2-dimensional identity matrix $I_2$ respectively):
\begin{equation}
{\mathfrak{F}}= \left(
\begin{array}{ccccc}  F & 0 & 0 &0& 0 \\
0 & f^{-1} & 0 &0& 0 \\
0 & 0 & F^{-1} &0& 0 \\
0 & 0 & 0 &f& 0 \\
0 & 0 & 0 &0& 1 
\end{array} \right),\qquad 
\mathcal{V}_{\zeta}= \left(
\begin{array}{ccccc}  I_2 & 0 & 0 &0& 0 \\
-\zeta^T & 1& 0 &0& 0 \\
0 & 0 & I_2 &\zeta& 0 \\
0 & 0 & 0 &1& 0 \\
0 & 0 & 0 &0& 1 
\end{array} \right),
\end{equation}
\begin{equation}
\mathcal{V}_{\mu}= \left(
\begin{array}{ccccc}  
I_2 & 0 & 0 &0& 0 \\
0 & 1 & 0 &\mu^2& -\sqrt{2}\mu \\
\mu \mathbf{J} & 0 & I_2 &0& 0 \\
0 & 0 & 0 &1& 0 \\
0 & 0 & 0 &-\sqrt{2}\mu& 1 
\end{array} \right),\qquad 
\mathcal{V}_{\nu}= \left(
\begin{array}{ccccc}  I_2 & 0 & 0 &-\mathbf{J}\nu& 0 \\
0 & 1& -\nu^T \mathbf{J} &0& 0 \\
\nu\nu^T & 0 & I_2 &0& \sqrt{2}\nu \\
0 & 0 & 0 &1& 0 \\
\sqrt{2}\nu^T & 0 & 0 &0& 1 
\end{array} \right).
\end{equation}   
Let $P=\mathcal{V}_{\nu}\mathcal{V}_{\mu}\mathcal{V}_{\zeta}$ and observe that
\begin{equation}
\text{Tr}\left(\Psi^{-1}_0\Psi\right)=\text{Tr}\left(P_0^{-1}\mathfrak{F}^{-1}_0(P^T_0)^{-1}P^T \mathfrak{F}P\right)
=\text{Tr}\left(\mathfrak{F}^{-1}_0(PP_0^{-1})^T \mathfrak{F}PP_0^{-1}\right),
\end{equation} 
where the subscript $0$ indicates matrices associated with $\Psi_0$.  It follows that
\begin{equation}
PP_0^{-1}= \left(
\begin{array}{ccccc}  I_2 & 0 & 0 &-\mathbf{J}(\nu-\nu_0)& 0 \\
L_1 & 1& -(\nu-\nu_0)^T \mathbf{J} &L_2& L_3 \\
L_5 & 0 & I_2 &L_4& \sqrt{2}(\nu-\nu_0) \\
0 & 0 & 0 &1& 0 \\
\sqrt{2}(\nu-\nu_0)^T & 0 & 0 &-\sqrt{2}(\mu-\mu_0)-\sqrt{2}\nu_0^T \mathbf{J}\nu& 1 
\end{array} \right),
\end{equation}
where we have defined the (row) vector
\begin{equation}
L_1=2\nu_0^T(\mu-\mu_0)-(\zeta-\zeta_0)^T+\nu_0^T(\nu_0^T \mathbf{J}\nu)+\nu^T(\mu-\mu_0),
\end{equation} 
the scalars 
\begin{align}
\begin{split}
L_2&=(\mu-\mu_0)^2+(\zeta-\zeta_0)^T \mathbf{J}(\nu-\nu_0) - (\mu-\mu_0)\nu_0^T \mathbf{J}\nu ,\\
L_3&=-\sqrt{2}(\mu-\mu_0)-\sqrt{2}\nu_0^T \mathbf{J}\nu,
\end{split}
\end{align} 
the (column) vector
\begin{equation}
L_4=(\nu^T \mathbf{J}\nu_0)\nu -(\mu-\mu_0)\nu_0 -2(\mu-\mu_0)\nu+(\zeta-\zeta_0),
\end{equation} 
and finally the $2\times 2$ matrix
\begin{equation}
L_5=(\nu-\nu_0)(\nu-\nu_0)^T+\left[\nu_0^T \mathbf{J}\nu+(\mu-\mu_0)\right]\mathbf{J}.
\end{equation} 
Using this decomposition it is straightforward to calculate
\begin{equation}\label{right}
\footnotesize \mathfrak{F}PP_0^{-1}= \left(
\begin{array}{ccccc}  
F & 0 & 0 &-F \mathbf{J}(\nu-\nu_0)& 0 \\
f^{-1}L_1 & f^{-1}& -f^{-1}(\nu-\nu_0)^T \mathbf{J} &f^{-1}L_2& f^{-1}L_3 \\
F^{-1}L_5& 0 & F^{-1} &F^{-1}L_4& \sqrt{2}F^{-1}(\nu-\nu_0) \\
0 & 0 & 0 &f& 0 \\
\sqrt{2}(\nu-\nu_0)^T & 0 & 0 &-\sqrt{2}(\mu-\mu_0)-\sqrt{2}\nu_0^T \mathbf{J}\nu& 1 
\end{array} \right),
\end{equation} 
and
\begin{equation}\label{left}
\footnotesize  \mathfrak{F}_0^{-1}(PP_0^{-1})^T=\left(
\begin{array}{ccccc}  
F_0^{-1} & F_0^{-1} L_1^T & F^{-1}_0L_5^T &0& \sqrt{2}F^{-1}_0(\nu-\nu_0)\\
0 & f_0& 0&0& 0 \\
0 &  F_0 \mathbf{J}(\nu-\nu_0) & F_0 &0&0 \\
f_0^{-1}(\nu-\nu_0)^T \mathbf{J} & f_0^{-1}L_2 & f_0^{-1}L_4^T &f_0^{-1}& -\sqrt{2}f_0^{-1}(\mu-\mu_0)-\sqrt{2}f_0^{-1}\nu_0^T \mathbf{J}\nu \\
0& L_3 & \sqrt{2}(\nu-\nu_0)^T &0& 1 
\end{array} \right) . 
\end{equation} 
Finally, the Mazur quantity is obtained by taking the trace of the matrix resulting from the multiplication of \eqref{right} and \eqref{left}, which yields
\begin{align}
\begin{split}
\text{Tr}\left(\Psi^{-1}_0\Psi\right) =& \text{Tr}(F_0^{-1} F) + f^{-1} \text{Tr}( F_0^{-1} L_1^T L_1) +  \text{Tr}(F_0^{-1} L_5^T F^{-1} L_5) \\ 
&+ 2  \text{Tr}[F_0^{-1} (\nu - \nu_0)(\nu-\nu_0)^T] + f_0f^{-1} +  f^{-1} \text{Tr} [F_0 \mathbf{J} (\nu - \nu_0) (\mathbf{J}(\nu - \nu_0))^T ] \\  
&+ \text{Tr}(F_0 F^{-1})  + f_0^{-1} (\mathbf{J}(\nu - \nu_0))^T F \mathbf{J} (\nu - \nu_0) + f_0^{-1} f^{-1} L_2^2+ f_0^{-1} L_4^T F^{-1} L_4  \\ 
&+ f_0^{-1} f + 2f_0^{-1} \left(\mu - \mu_0 + \nu_0^T \mathbf{J} \nu\right)^2 + f^{-1} L_3^2+ 2 (\nu - \nu_0)^T F^{-1} (\nu - \nu_0) + 1.
\end{split}
\end{align} 
Since the matrices $F$ and $F_0$ are positive semi-definite, it may be verified that each term in this expression is nonnegative, with the help of the elementary fact that the product of two positive semi-definite matrices has nonnegative trace.





\end{document}